\documentclass[lettersize,journal]{IEEEtran}
\usepackage{amsmath,amsfonts}
\usepackage{algorithm}
\usepackage{array}
\usepackage[caption=false,font=footnotesize]{subfig}
\usepackage{textcomp}
\usepackage{stfloats}
\usepackage{url}
\usepackage{verbatim}
\usepackage{graphicx}
\usepackage{cite}
\usepackage{todonotes}
\usepackage[hidelinks]{hyperref}
\usepackage{diagbox}
\definecolor{kit-green}{RGB}{0, 150, 130}
\colorlet{KITgreen}{kit-green}
\colorlet{kit-green100}{kit-green}
\colorlet{kit-green90}{kit-green!90!white}
\colorlet{kit-green80}{kit-green!80!white}
\colorlet{kit-green70}{kit-green!70!white}
\colorlet{kit-green60}{kit-green!60!white}
\colorlet{kit-green50}{kit-green!50!white}
\colorlet{kit-green40}{kit-green!40!white}
\colorlet{kit-green30}{kit-green!30!white}
\colorlet{kit-green25}{kit-green!25!white}
\colorlet{kit-green20}{kit-green!20!white}
\colorlet{kit-green15}{kit-green!15!white}
\colorlet{kit-green10}{kit-green!10!white}
\colorlet{kit-green5}{kit-green!5!white}

\definecolor{kit-blue}{RGB}{70, 100, 170}
\colorlet{KITblue}{kit-blue}
\colorlet{kit-blue100}{kit-blue}
\colorlet{kit-blue90}{kit-blue!90!white}
\colorlet{kit-blue80}{kit-blue!80!white}
\colorlet{kit-blue70}{kit-blue!70!white}
\colorlet{kit-blue60}{kit-blue!60!white}
\colorlet{kit-blue50}{kit-blue!50!white}
\colorlet{kit-blue40}{kit-blue!40!white}
\colorlet{kit-blue30}{kit-blue!30!white}
\colorlet{kit-blue25}{kit-blue!25!white}
\colorlet{kit-blue20}{kit-blue!20!white}
\colorlet{kit-blue15}{kit-blue!15!white}
\colorlet{kit-blue10}{kit-blue!10!white}
\colorlet{kit-blue5}{kit-blue!5!white}

\definecolor{kit-royalblue}{RGB}{0, 45, 76}
\colorlet{kit-royalblue100}{kit-royalblue}
\colorlet{kit-royalblue90}{kit-royalblue!90!white}
\colorlet{kit-royalblue80}{kit-royalblue!80!white}
\colorlet{kit-royalblue70}{kit-royalblue!70!white}
\colorlet{kit-royalblue60}{kit-royalblue!60!white}
\colorlet{kit-royalblue50}{kit-royalblue!50!white}
\colorlet{kit-royalblue40}{kit-royalblue!40!white}
\colorlet{kit-royalblue30}{kit-royalblue!30!white}
\colorlet{kit-royalblue25}{kit-royalblue!25!white}
\colorlet{kit-royalblue20}{kit-royalblue!20!white}
\colorlet{kit-royalblue15}{kit-royalblue!15!white}
\colorlet{kit-royalblue10}{kit-royalblue!10!white}
\colorlet{kit-royalblue5}{kit-royalblue!5!white}

\definecolor{kit-iceblue100}{RGB}{30, 53, 69}
\definecolor{kit-iceblue70}{RGB}{68, 94, 111}
\definecolor{kit-iceblue50}{RGB}{168, 185, 196}
\definecolor{kit-iceblue30}{RGB}{218, 225, 230}

\definecolor{kit-red}{RGB}{162, 34, 35}
\colorlet{KITred}{kit-red}
\colorlet{kit-red100}{kit-red}
\colorlet{kit-red90}{kit-red!90!white}
\colorlet{kit-red80}{kit-red!80!white}
\colorlet{kit-red70}{kit-red!70!white}
\colorlet{kit-red60}{kit-red!60!white}
\colorlet{kit-red50}{kit-red!50!white}
\colorlet{kit-red40}{kit-red!40!white}
\colorlet{kit-red30}{kit-red!30!white}
\colorlet{kit-red25}{kit-red!25!white}
\colorlet{kit-red20}{kit-red!20!white}
\colorlet{kit-red15}{kit-red!15!white}
\colorlet{kit-red10}{kit-red!10!white}
\colorlet{kit-red5}{kit-red!5!white}

\definecolor{kit-yellow}{RGB}{252, 229, 0}
\colorlet{KITyellow}{kit-yellow}
\colorlet{kit-yellow100}{kit-yellow}
\colorlet{kit-yellow90}{kit-yellow!90!white}
\colorlet{kit-yellow80}{kit-yellow!80!white}
\colorlet{kit-yellow70}{kit-yellow!70!white}
\colorlet{kit-yellow60}{kit-yellow!60!white}
\colorlet{kit-yellow50}{kit-yellow!50!white}
\colorlet{kit-yellow40}{kit-yellow!40!white}
\colorlet{kit-yellow30}{kit-yellow!30!white}
\colorlet{kit-yellow25}{kit-yellow!25!white}
\colorlet{kit-yellow20}{kit-yellow!20!white}
\colorlet{kit-yellow15}{kit-yellow!15!white}
\colorlet{kit-yellow10}{kit-yellow!10!white}
\colorlet{kit-yellow5}{kit-yellow!5!white}

\definecolor{kit-orange}{RGB}{223, 155, 27}
\colorlet{KITorange}{kit-orange}
\definecolor{kit-orange}{RGB}{223, 155, 27}
\colorlet{kit-orange100}{kit-orange}
\colorlet{kit-orange90}{kit-orange!90!white}
\colorlet{kit-orange80}{kit-orange!80!white}
\colorlet{kit-orange70}{kit-orange!70!white}
\colorlet{kit-orange60}{kit-orange!60!white}
\colorlet{kit-orange50}{kit-orange!50!white}
\colorlet{kit-orange40}{kit-orange!40!white}
\colorlet{kit-orange30}{kit-orange!30!white}
\colorlet{kit-orange25}{kit-orange!25!white}
\colorlet{kit-orange20}{kit-orange!20!white}
\colorlet{kit-orange15}{kit-orange!15!white}
\colorlet{kit-orange10}{kit-orange!10!white}
\colorlet{kit-orange5}{kit-orange!5!white}

\definecolor{kit-lightgreen}{RGB}{140, 182, 60}
\colorlet{KITlightgreen}{kit-lightgreen}
\colorlet{kit-lightgreen100}{kit-lightgreen}
\colorlet{kit-lightgreen90}{kit-lightgreen!90!white}
\colorlet{kit-lightgreen80}{kit-lightgreen!80!white}
\colorlet{kit-lightgreen70}{kit-lightgreen!70!white}
\colorlet{kit-lightgreen60}{kit-lightgreen!60!white}
\colorlet{kit-lightgreen50}{kit-lightgreen!50!white}
\colorlet{kit-lightgreen40}{kit-lightgreen!40!white}
\colorlet{kit-lightgreen30}{kit-lightgreen!30!white}
\colorlet{kit-lightgreen25}{kit-lightgreen!25!white}
\colorlet{kit-lightgreen20}{kit-lightgreen!20!white}
\colorlet{kit-lightgreen15}{kit-lightgreen!15!white}
\colorlet{kit-lightgreen10}{kit-lightgreen!10!white}
\colorlet{kit-lightgreen5}{kit-lightgreen!5!white}

\definecolor{kit-purple}{RGB}{163, 16, 124}
\colorlet{KITpurple}{kit-purple}
\colorlet{kit-purple100}{kit-purple}
\colorlet{kit-purple90}{kit-purple!90!white}
\colorlet{kit-purple80}{kit-purple!80!white}
\colorlet{kit-purple70}{kit-purple!70!white}
\colorlet{kit-purple60}{kit-purple!60!white}
\colorlet{kit-purple50}{kit-purple!50!white}
\colorlet{kit-purple40}{kit-purple!40!white}
\colorlet{kit-purple30}{kit-purple!30!white}
\colorlet{kit-purple25}{kit-purple!25!white}
\colorlet{kit-purple20}{kit-purple!20!white}
\colorlet{kit-purple15}{kit-purple!15!white}
\colorlet{kit-purple10}{kit-purple!10!white}
\colorlet{kit-purple5}{kit-purple!5!white}

\definecolor{kit-brown}{RGB}{167, 130, 46}
\colorlet{KITbrown}{kit-brown}
\colorlet{kit-brown100}{kit-brown}
\colorlet{kit-brown90}{kit-brown!90!white}
\colorlet{kit-brown80}{kit-brown!80!white}
\colorlet{kit-brown70}{kit-brown!70!white}
\colorlet{kit-brown60}{kit-brown!60!white}
\colorlet{kit-brown50}{kit-brown!50!white}
\colorlet{kit-brown40}{kit-brown!40!white}
\colorlet{kit-brown30}{kit-brown!30!white}
\colorlet{kit-brown25}{kit-brown!25!white}
\colorlet{kit-brown20}{kit-brown!20!white}
\colorlet{kit-brown15}{kit-brown!15!white}
\colorlet{kit-brown10}{kit-brown!10!white}
\colorlet{kit-brown5}{kit-brown!5!white}

\definecolor{kit-cyan}{RGB}{35, 161, 224}
\colorlet{KITcyan}{kit-cyan}
\colorlet{KITcyanblue}{kit-cyan}
\colorlet{kit-cyan100}{kit-cyan}
\colorlet{kit-cyan90}{kit-cyan!90!white}
\colorlet{kit-cyan80}{kit-cyan!80!white}
\colorlet{kit-cyan70}{kit-cyan!70!white}
\colorlet{kit-cyan60}{kit-cyan!60!white}
\colorlet{kit-cyan50}{kit-cyan!50!white}
\colorlet{kit-cyan40}{kit-cyan!40!white}
\colorlet{kit-cyan30}{kit-cyan!30!white}
\colorlet{kit-cyan25}{kit-cyan!25!white}
\colorlet{kit-cyan20}{kit-cyan!20!white}
\colorlet{kit-cyan15}{kit-cyan!15!white}
\colorlet{kit-cyan10}{kit-cyan!10!white}
\colorlet{kit-cyan5}{kit-cyan!5!white}

\definecolor{kit-gray}{RGB}{0, 0, 0}
\colorlet{KITgray}{kit-gray}
\colorlet{kit-gray100}{kit-gray}
\colorlet{kit-gray90}{kit-gray!90!white}
\colorlet{kit-gray80}{kit-gray!80!white}
\colorlet{kit-gray70}{kit-gray!70!white}
\colorlet{kit-gray60}{kit-gray!60!white}
\colorlet{kit-gray50}{kit-gray!50!white}
\colorlet{kit-gray40}{kit-gray!40!white}
\colorlet{kit-gray30}{kit-gray!30!white}
\colorlet{kit-gray25}{kit-gray!25!white}
\colorlet{kit-gray20}{kit-gray!20!white}
\colorlet{kit-gray15}{kit-gray!15!white}
\colorlet{kit-gray10}{kit-gray!10!white}
\colorlet{kit-gray5}{kit-gray!5!white}

\definecolor{KITpalegreen}{RGB}{130,190,60}
\colorlet{kit-maigreen100}{KITpalegreen}
\colorlet{kit-maigreen70}{KITpalegreen!70}
\colorlet{kit-maigreen50}{KITpalegreen!50}
\colorlet{kit-maigreen30}{KITpalegreen!30}
\colorlet{kit-maigreen15}{KITpalegreen!15}

\usepackage{amssymb,amsthm,mathrsfs,bm,bbm,dsfont}
\usepackage{siunitx}
\usepackage{enumerate}
\usepackage{booktabs}
\usepackage{acronym}
\usepackage{diagbox}
\usepackage[noend]{algorithmic}
\usepackage{xcolor}
\usepackage{pgf, tikz, pgfplots}
\pgfplotsset{compat=1.17}
\usepgfplotslibrary{statistics}

\usetikzlibrary{
  arrows.meta, calc, fit, matrix, positioning, shapes, shadows,
  trees, mindmap, tikzmark, angles, quotes, babel, spy
}

\tikzset{
  myblock/.style={
    rectangle, draw, thin,
    minimum width=1.75cm, minimum height=0.72cm,
    font=\small, align=center
  },
  mywideblock/.style={
    rectangle, draw, thin,
    minimum width=3cm, minimum height=0.5cm,
    font=\small, align=center
  }
}

\newcommand{\myline}[2]{%
  \path (#1.east) -- (#2.west) coordinate[pos=0.4](mid);
  \draw[-latex] (#1.east) -| (mid) |- (#2.west);
}

\DeclareMathOperator*{\aC}{\mathcal{C}_\mathsf{a}}
\DeclareMathOperator*{\sC}{\mathcal{C}_\mathsf{s}}

\pgfplotscreateplotcyclelist{kitcolors_cycle}{
    {draw=kit-green100},
    {draw=kit-green80}, 
    {draw=kit-blue100},
    {draw=kit-blue80},
    {draw=kit-royalblue100},
    {draw=kit-royalblue80},
    {draw=kit-red100},
    {draw=kit-red80},
    {draw=kit-orange100},
    {draw=kit-orange80},
    {draw=kit-lightgreen100},
    {draw=kit-lightgreen80},
    {draw=kit-purple100},
    {draw=kit-purple80},
    {draw=kit-brown100},
    {draw=kit-brown80},
    {draw=kit-cyan100},
    {draw=kit-cyan80},
}
\pgfplotscreateplotcyclelist{allred}{
    {KITred}
}
\pgfplotscreateplotcyclelist{rbr}{
    {KITred},
    {KITred},
    {KITred},
    {KITred},
    {KITred},
    {KITorange},
    {KITred},
    {KITred},
    {KITred},
    {KITred},
    {KITred},
    {KITred}
}
\pgfplotscreateplotcyclelist{brr}{
    {KITorange},
    {KITred},
    {KITred},
    {KITred},
    {KITred},
    {KITred},
    {KITred},
    {KITred},
    {KITred},
    {KITred},
    {KITred},
    {KITred}
}

\acrodef{BP}{belief propagation}
\acrodef{LDPC}{low‑density parity‑check}%
\acrodef{BCH}{Bose--Chaudhuri--Hocquenghem}
\acrodef{AED}{automorphism ensemble decoding}
\acrodef{MBBP}{multiple bases \ac{BP}}
\acrodef{SCED}{subcode ensemble decoding}
\acrodef{aSCED}{affine subcode ensemble decoding}
\acrodef{GAED}{generalized \ac{AED}}
\acrodef{EED}{endomorphism ensemble decoding}
\acrodef{SED}{scheduling ensemble decoding}
\acrodef{ML}{maximum likelihood}
\acrodef{SC}{successive cancellation}

\acrodef{BPSK}{binary phase‑shift keying}
\acrodef{BI-AWGN}{binary‑input additive white Gaussian noise}

\acrodef{SPA}{sum-product algorithm}
\acrodef{NMSA}{normalized min-sum algorithm}
\acrodef{NSPA}{normalized sum-product algorithm}
\acrodef{FER}{frame error rate}
\acrodef{FEP}{frame error probability}
\acrodef{LER}{list error rate}
\acrodef{LEP}{list error probability}
\acrodef{TEC}{total edge count}
\acrodef{SNR}{signal‑to‑noise ratio}
\acrodef{LLR}{log-likelihood ratio}

\acrodef{PCM}{parity‑check matrix}
\acrodef{PCRB}{parity-check row block}
\acrodef{ssPCM}{structured sparse \ac{PCM}}
\acrodef{OSD}{ordered statistics decoding}

\acrodef{LC}{linear covering}

\newlength{\ferfigheight}      %
\begin{document}
\bstctlcite{IEEEexample:BSTcontrol}

\newtheorem{theorem}{Theorem}
\newtheorem{lemma}{Lemma}
\newtheorem{proposition}{Proposition}
\newtheorem{corollary}{Corollary}
\newtheorem{remark}{Remark}
\newtheorem{designprinciple}{Design Principle}
\title{Affine Subcode Ensemble Decoding of\\ Linear Block Codes}
\newcommand{\jmcol}[1]{\textcolor{black}{#1}}
\newcommand{\pbcol}[1]{\textcolor{black}{#1}}
\author{Jonathan Mandelbaum,~\IEEEmembership{Graduate Student Member,~IEEE}, Paul Bezner,~\IEEEmembership{Graduate Student Member,~IEEE}, Holger~Jäkel,~\IEEEmembership{Member,~IEEE}, Stephan ten Brink,~\IEEEmembership{Fellow,~IEEE},  Laurent Schmalen,~\IEEEmembership{Fellow,~IEEE}%
\thanks{This work has received funding from the German Federal Ministry of
Research, Technology and Space (BMFTR) within the projects Open6GHub and
Open6GHub+ (grant agreements 16KISK010, 16KIS2405 and 16KISK019).
Parts of this paper have been
presented at the IEEE Int. Symp. Inf. Theory 2025 \cite{mandelbaum_subcode_2025} and IEEE Int.
Symp. on Topics in Coding 2025 \cite{sced_polar}.\\
Jonathan Mandelbaum, Holger Jäkel, and Laurent Schmalen are with the Communications Engineering Lab (CEL),
Karlsruhe Institute of Technology (KIT), Hertzstr. 16, 76187 Karlsruhe,
Germany (e-mail: first.lastname@kit.edu).
Paul Bezner and Stephan ten Brink are with the Institute of Telecommunications, Pfaffenwaldring 47, University of Stuttgart, 70569 Stuttgart, Germany (e-mail:
bezner,tenbrink@inue.uni-stuttgart.de)
}}%

\maketitle

\begin{abstract}
In the short block length regime, ensemble decoding schemes with their inherently parallel structure can improve error correction performance and reduce latency compared to stand-alone suboptimal decoders such as \ac{BP}.  
In this work, we introduce \ac{aSCED}, which uses an ensemble of decoders operating on linear block codes and both linear and strictly affine subcodes.  
This generalizes the recently proposed \ac{SCED}, which is restricted to linear subcodes.  
We derive \ac{BP} update rules for affine subcodes and show that \ac{aSCED} simplifies ensemble design compared to \ac{SCED}, multiple bases \ac{BP}, and automorphism ensemble decoding.  
Monte-Carlo simulations of two low-density parity-check codes and two \ac{BCH} codes demonstrate improved error correction performance of \ac{aSCED} over competing existing ensemble schemes.  
Notably, for one \ac{BCH} code, when combining ensemble design with algorithms for constructing high-performance parity-check matrices, \ac{aSCED} achieves near-\acl{ML} performance using only 64 \ac{BP} decoding paths.

\end{abstract}

\begin{IEEEkeywords}
Belief Propagation Decoding, Ensemble Decoding, Universal Codes, Code Agnostic Decoding
\end{IEEEkeywords}

\section{Introduction}
\IEEEPARstart{L}{ow}-density parity-check (LDPC) codes decoded with \acf{BP}, a low-complexity iterative message passing algorithm,  achieve excellent error correction performance in the large block length regime~\cite{MCT08}.
However, in the short block length regime, the decoding performance of \ac{BP} becomes unsatisfactory \cite{10571997,8594709}.
This hinders the application of LDPC codes in emerging applications such as the Internet of Things (IoT) and ultra-reliable low-latency communication (URLLC) in future wireless systems.
In contrast, polar codes, proposed by Arıkan in 2009 \cite{polar:arikan09}, paired with \ac{SC} list decoding \cite{polar:talvardy:scl}, are well known for their strong performance at short block lengths.

There exist two independent key research directions to enable the use of \ac{BP} in the short block length regime:
\begin{enumerate}
    \item \textbf{\Ac{PCM} design}: Many short codes with large minimum Hamming distance, e.g., \acf{BCH} codes, exhibit dense Tanner graphs with short cycles that degrade \ac{BP} performance. 
    Different approaches to enhance BP performance aim at improving sparsity or eliminating short cycles in the graphical representation of the code~\cite{vasic_4cyc_orthogon,yifei_sspcm,yedidia_4cyc_removal,kumar_milen_4cyc_rem,yifei_univ_bp,kraft_setmatrix}.  
    \item \textbf{Ensemble decoding}: Employing multiple suboptimal decoding algorithms in parallel---not limited to \ac{BP}---and selecting the best estimate to improve overall the error correction performance \cite{MBBP1,stuttgart_ldpc_aed,krieg2024comparativestudyensembledecoding_scc}.
\end{enumerate}

Ensemble decoding can improve the performance in the short-block regime, enhancing both latency and \ac{FER} \cite{AED_RMcodes,stuttgart_ldpc_aed,MBBP2}. 
Prior work on ensemble decoding methods achieves diversity by decoding over modified representations of noise realizations or graphs \cite{mandelbaum_subcode_2025,stuttgart_ldpc_aed, krieg2024comparativestudyensembledecoding_scc,AED_RMcodes, geiselhart2023ratecompatible,8723089,7360541,MBBP1,MBBP2,MBBP3_withLeaking,mandelbaum2024endomorphisms,mandelbaum2023generalized}.

Homomorphism-based ensemble decoders, such as \ac{AED}, \ac{GAED}, and \ac{EED}, leverage known code symmetries to transform the noise representation.
If the automorphism group is known and its effect is not absorbed by a symmetry of the decoder, \ac{AED} can yield performance gains \cite{AED_RMcodes,geiselhart2023ratecompatible,krieg2024comparativestudyensembledecoding_scc,stuttgart_ldpc_aed}.
 Related noise-altering ensemble decoding schemes include noise-aided ensemble decoding \cite{8723089} and saturated \ac{BP} \cite{7360541}.
Other methods, such as \ac{MBBP} and \ac{SED}, improve performance by decoding over ensembles of distinct graphs or \ac{BP} variants.
 \ac{MBBP}, for instance, employs a set of equivalent PCMs, i.e., structurally different graphs, while \ac{SED} varies the scheduling order of layered \ac{BP} updates \cite{MBBP2,krieg2024comparativestudyensembledecoding_scc}.
However, when considering an arbitrary code, \ac{MBBP} relies on an NP-complete search for minimum-weight codewords of the dual code, severely limiting its applicability for codes of medium-to-long block lengths \cite{intractabilityofminimumdistance,kraft_setmatrix}.

In \cite{mandelbaum_subcode_2025}, we introduced \acf{SCED}, an ensemble decoding scheme using different linear subcodes of the code $\mathcal{C}$.
PCMs of subcodes are obtained by appending \emph{linearly independent} rows to the original PCM, thereby forming PCMs for \ac{BP}‑based \ac{SCED}.  
It is essential to ensure a \ac{LC}, i.e., that the union of the linear subcodes covers the code~$\mathcal{C}$\cite{clark_covering_numbers}; otherwise, some codewords are not decodable.

For arbitrary codes, \ac{SCED} simplifies ensemble design. Unlike \ac{MBBP}, it only requires identifying low-weight linearly independent rows, which is easier than searching for minimum-weight dual codewords.
Moreover, unlike \ac{AED}, \ac{SCED} does not require any knowledge of the code structure beyond its PCM.
For the codes investigated in \cite{mandelbaum_subcode_2025}, Monte-Carlo simulations confirm that \ac{SCED} consistently outperforms AED, which itself surpasses other ensemble methods~\cite{mandelbaum_subcode_2025,krieg2024comparativestudyensembledecoding_scc}.

In this work, we extend \ac{SCED} by introducing \ac{aSCED}, which not only decodes on linear subcodes but also operates on suitably defined affine subcodes.
Our contributions are:

\begin{enumerate}[(i)]
    \item we provide theoretical results on \ac{aSCED} showing that the additional use of affine subcodes further simplifies the ensemble design compared to \ac{SCED}.
    \item we demonstrate that some decoders in the ensemble employ the same decoding graphs, which potentially allows for hardware reuse.
    \item we develop an efficient approximation of the ensemble \ac{LER}, which equals the \ac{FER} when \ac{ML} decoding errors are negligible.
    \item we show that the use of subcodes allows us to design better performing PCMs. 
    To demonstrate this, we pick up the second research direction that aims to enhance \ac{BP} performance by designing high-performance PCMs.
    In particular, we use a refined version of the algorithm from \cite{yifei_sspcm} to construct the so-called \acp{ssPCM} of subcodes and demonstrate different structural benefits of using subcodes.
    \item we conduct Monte-Carlo simulations for both LDPC and \ac{BCH} codes, demonstrating improved performance compared to other equal-complexity ensemble decoding schemes. 
    Notably, for the \ac{BCH} code $\mathcal{C}_1(63,30)$, we show that \ac{BP}-based \ac{aSCED} using $64$ paths of subcode \acp{ssPCM} already achieves near-\ac{ML} performance.
\end{enumerate}

The remainder of this paper is organized as follows. In Sec.~\ref{sec:prelim}, we briefly review linear codes and \ac{BP} decoding, before introducing a general framework of ensemble decoding in Sec.~\ref{sec:ed}. %
In Sec.~\ref{sec:asced_sec}, we present aSECD and prove some of its properties.
Additional structural benefits enabled by the usage of subcodes are discussed in Sec.~\ref{sec:structural_advantages}.
Numerical results supporting these structural advantages and demonstrating the improved error-correction performance of \ac{aSCED} are provided in Sec.~\ref{sec:res}.
Finally, Sec.~\ref{sec:conclusion} concludes the paper.

\textbf{Notation:} %
Vectors and matrices are denoted by lowercase and uppercase bold letters, respectively, e.g., $\bm{x}$ and $\bm{X}$.
The $i$th element of $\bm{x}$ is denoted by $x_i$, the $i$th row of $\bm{X}$ by $\bm{X}_{i,:}$, and the element in row $i$ and column $j$ of $\bm{X}$ by $X_{i,j}$.
Sets are denoted by calligraphic letters, e.g.,  $\mathcal{S}$, and we use the notation $[m]:=\{1,\ldots,m\}$.
For index sets $\mathcal{A}$ and $\mathcal{B}$, we define the submatrix $\bm{M}_{\mathcal{A},\mathcal{B}}:=(M_{i,j})_{i\in\mathcal{A},j\in\mathcal{B}}$, 
the subvector $\bm{x}_\mathcal{A}=(x_i)_{i\in \mathcal{A}}$
and the vector $\bm{1}_\mathcal{A}:=\sum_{i\in \mathcal{A}}\bm{e}_i$, where $\bm{e}_i$ denotes the $i$th canonical unit vector.
Finally, $\bm{1}_{a\times b}$ and $\bm{0}_{a\times b}$ denote the all-one and all-zero matrices of size $a \times b$, respectively.

\section{Preliminaries}\label{sec:prelim}

\subsection{Linear Codes and Subcodes}
A linear block code $\mathcal{C}(n,k)=:\mathcal{C}$ with $ k\leq n$ is a \mbox{$k$-dimensional} subspace of the vector space $\mathbb{F}^n$ over a field~$\mathbb{F}$. When clear from the context, the block length ${n\in \mathbb{N}}$ and information length ${k\in \mathbb{N}}$ are omitted.
Every linear code can be defined as the span of the rows of a generator matrix $\bm{G}\in\mathbb{F}^{k\times n}$ or by the null space of a \ac{PCM} $\bm{H}\in\mathbb{F}^{m\times n}$, i.e., 
\[
\mathcal{C}:=\left\{\bm{uG}:\bm{u}\in \mathbb{F}^k\right\}=\left\{\bm{x}\in \mathbb{F}^n:\bm{H}\bm{x}^\mathsf{T}=\bm{0}\right\},
\]
where $\mathrm{rank}(\bm{G})=k$ and $\mathrm{rank}(\bm{H})=n-k\leq m$. We denote the code characterized by $\bm{H}$ as $\mathcal{C}(\bm{H})$. A PCM with $m>\mathrm{rank}(\bm{H})$ is called \emph{overcomplete}.  The \emph{weight} ${\mathrm{wt}(\bm{H}):=\sum_{j,i } H_{ji}}$ of $\bm{H}$ is the number of non-zero elements. 
The dual code of $\mathcal{C}$, denoted by $\mathcal{C}^\bot$, is defined by interchanging the generator matrix and the PCM, i.e., $\bm{H}^\bot=\bm{G}$. 
We refer to codewords of the dual code simply as \emph{dual codewords}.

Following \cite{mandelbaum_subcode_2025}, a \emph{linear subcode}
$\sC(n,k')$ of $\mathcal{C}$ with $k'\leq k$ is a $k'$-dimensional subspace of $\mathcal{C}$.
We define an \emph{strictly affine subcode} $\aC(n,k')$ of $\mathcal{C}$ by a tuple $(\sC(n,k'),\bm{x}_\mathsf{a})$ as
\begin{equation}\label{eq:def:aff_code}
\aC(n,k'):=\left\{\bm{x}_\mathsf{a}+\bm{x}:\bm{x}\in \sC(n,k')\right\}=:\bm{x}_\mathsf{a}+ \sC(n,k'),    
\end{equation}
where $\sC(n,k')$ is a linear subcode of $\mathcal{C}$ and the \emph{affine offset} $\bm{x}_\mathsf{a}$ satisfies
${\bm{x}_\mathsf{a}\in \mathcal{C}}$ but $\bm{x}_\mathsf{a}\notin \sC(n,k')$ requiring $k'<k$.  
Thus, a strictly affine subcode $\aC$ coincides with a coset within the quotient space $\mathcal{C}/\sC$. %
Let $\bm{H}_\mathsf{s}$ denote a PCM of $\sC$, then we define the \emph{affine syndrome} ${\bm{s}_\mathsf{a}:=\bm{H}_\mathsf{s}\bm{x}_\mathsf{a}^\mathsf{T}}$.
Since every codeword $\bm{x}\in\aC$ can be written as $\bm{x}=\bm{x}_\mathsf{a}+\bm{x}_\mathsf{s}$ with $\bm{x}_\mathsf{s}\in\sC$, all codewords within the strictly affine subcode have the same non-zero affine syndrome, i.e., ${\bm{H}_\mathsf{s}\bm{x}^\mathsf{T}=\bm{s}_\mathsf{a}}\neq \bm{0}$ for all $\bm{x}\in\aC$.

Linear subcodes and strictly affine subcodes are subsets of $\mathcal{C}$, i.e., $\sC\subseteq_\mathsf{s} \mathcal{C}$ and $\aC\subset_\mathsf{a} \mathcal{C}$, where the subscripts indicate the relation of a linear and a strictly affine subcode to the original code $\mathcal{C}$, respectively. 
A linear subcode $\sC(n,k')$ is a \emph{proper} subcode if $k'<k$. In this case, we replace $\subseteq_\mathsf{s}$ by $\subset_\mathsf{s}$.
 In contrast, a strictly affine subcode does not constitute a subspace of $\mathcal{C}$ since it does not contain the all-zero codeword and therefore is not closed under addition.
Throughout this paper, we exclusively consider linear subcodes and strictly affine subcodes. 
For brevity, we omit the term strictly and refer to the latter simply as \emph{affine subcodes}.

We consider the transmission of a binary codeword ${\bm{x}\in\mathcal{C}}$ over a binary-input memoryless symmetric-output channel (BiMSC) with output alphabet $\mathcal{Y}$  \cite[Ch.~4]{MCT08}.
At the receiver, the corresponding channel output is denoted by ${\bm{y}\in\mathcal{Y}^n}$ and can be mapped onto the bit-wise \ac{LLR} vector $\bm{\lambda}:=(l(y_i))_{i=1}^n\in \mathbb{R}^n$ with $l(y_i)$ given by \cite[Def.~4.5]{MCT08}.
\subsection{Belief Propagation Decoding}\label{sec:prelim:bp}
\ac{BP} is an iterative message passing algorithm operating on the Tanner graph of a code\cite{MCT08}.
A Tanner graph, associated with a PCM $\bm{H}$ of a linear code, is a bipartite graph, consisting of variable nodes (VNs) and check nodes (CNs). A VN $\mathsf{v}_i$ is associated with the $i$th code bit, and a CN $\mathsf{c}_j$ is associated with the $j$th parity check equation. 
An edge connects VN $\mathsf{v}_i$ and CN $\mathsf{c}_j$ iff $H_{j,i}\neq0$. 
Define ${\mathcal{N}(i):=\{j\in [m]:H_{j,i}\neq0\}}$ and ${\mathcal{M}(j):=\{i\in [n]:H_{j,i}\neq0\}}$. 

During \ac{BP}, \ac{LLR} messages computed by the nodes are iteratively passed along the edges of the Tanner graph. 
For each $i\in[n]$, the VN update equation yields the VN-to-CN messages 
\[
\lambda_{i\rightarrow j}=\lambda_i+\sum_{j'\in \mathcal{N}(i)\setminus \{j\}} \lambda_{i\leftarrow j'}=:\lambda_i^{\mathrm{tot}}-\lambda_{i\leftarrow j}\,,\,
\, j\in \mathcal{N}(i),
\]
where $\lambda_i = l(y_i)$ denotes the bit-wise \ac{LLR}, $\lambda_{i\leftarrow j}$ is the message passed from CN $\mathsf{c}_j$ to VN $\mathsf{v}_i$, and  
$\lambda_i^{\mathrm{tot}}$ represents the \emph{a posteriori \ac{LLR} estimate} at VN $\mathsf{v}_i$.

Several \ac{BP} decoding variants exist, differing primarily in the CN update rule.
In this work, we consider the \ac{NSPA} and the \ac{NMSA}. In the \ac{NSPA}, for each CN $\mathsf{c}_j$, the CN-to-VN messages are given by
\[
\lambda_{i\leftarrow j}=\alpha\cdot 2\cdot\tanh^{-1}\!\!{\left(
\prod_{i'\in\mathcal{M}(j)\setminus \{i\}}\!\!\!\!\!\tanh{
\left(\frac{\lambda_{i'\rightarrow j}}{2}\right)}\right)},\,\, i\in \mathcal{M}(j),
\]
where $\alpha\in \mathbb{R}$ is a normalization constant. For $\alpha=1$, we recover to the \ac{SPA} \cite{MCT08}.

The \ac{NMSA}, which is particularly suited for hardware implementation, replaces the CN update rule of CN $\mathsf{c}_j$ by \cite{de_nmsa_fossorier}
\[
\lambda_{i\leftarrow j}=\alpha \cdot\!\!\!\!\!\!\!\! \prod_{i'\in\mathcal{M}(j)\setminus \{i\}} \!\!\!\!\!\!\!\!\mathrm{sign}\left(\!\lambda_{i'\rightarrow j}\!\right)\cdot\!\! \min_{i'\in\mathcal{M}(j)\setminus \{i\}}\!\left|\lambda_{i'\rightarrow j} \right|,\, i\in \mathcal{M}(j),
\]
where $\alpha\in \mathbb{R}$ is a normalization constant. 
In a flooding schedule, after initializing all VN-to-CN messages as $\lambda_{i\rightarrow j}=\tilde{\lambda}_i$, one \ac{BP} iteration consists of updating all CNs, followed by updating all VNs.
After each iteration, the current codeword estimate $\hat{\bm{x}}$ is obtained by the hard decision
\[
\hat{x}_i=
\begin{cases}
    1,&\lambda_i^{\mathrm{tot}}<0,\\
    0,& \text{otherwise},
\end{cases}\quad i \in \left[n\right].
\]
The iterative procedure continues until $\bm{H}\hat{\bm{x}}^\mathsf{T}=\bm{0}$
or until a maximum number of iterations has been conducted. If no valid codeword is found, the current hard decision is returned as the codeword estimate. 
We denote this estimate resulting from \ac{BP} using $\bm{H}$ as $\hat{\bm x}:=\text{BP}(\bm{H},\bm{y})=\text{BP}(\bm{H},\bm{\lambda})$.

\section{Ensemble Decoding}\label{sec:ed}

\subsection{Basics of Ensemble Decoding}\label{sec:ed:basics}

Ensemble decoding schemes aim at improving the performance of a single suboptimal decoding algorithm by employing multiple parallel decoding paths\cite{krieg2024comparativestudyensembledecoding_scc}.
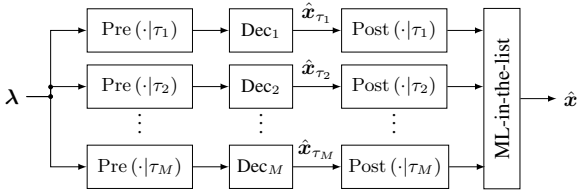
\begin{figure}
    \centering
    \scalebox{0.8}{
\begin{tikzpicture}[
  node distance=2mm and 6mm,
  >=latex
]

  \node (input) {$\bm{\lambda}$};

  \node[myblock, right=1cm of input, yshift=1.1cm] (pi1) {$\mathrm{Pre}\left(\cdot|\tau_1\right)$};
  \node[myblock, below=of pi1] (pi2) {$\mathrm{Pre}\left(\cdot|\tau_2\right)$};
  \node[myblock, below=of pi2, yshift=-4mm] (pik) {$\mathrm{Pre}\left(\cdot|\tau_M\right)$};

  \node[rectangle, draw, thin,
    minimum width=1.05cm, minimum height=0.72cm,
    font=\small, align=center, right=of pi1] (debox_1) {$\text{Dec}_1$};
  \node[rectangle, draw, thin,
    minimum width=1.05cm, minimum height=0.72cm,
    font=\small, align=center, right=of pi2] (debox_2) {$\text{Dec}_2$};
  \node[rectangle, draw, thin,
    minimum width=1.05cm, minimum height=0.72cm,
    font=\small, align=center, right=of pik] (debox_K) {$\text{Dec}_M$};

  \node[myblock, right=0.8cm of debox_1] (pi_inv_1) {$\mathrm{Post}\left(\cdot|\tau_1\right)$};
  \node[myblock, right=0.8cm of debox_2] (pi_inv_2) {$\mathrm{Post}\left(\cdot|\tau_2\right)$};
  \node[myblock, right=0.8cm of debox_K] (pi_inv_k) {$\mathrm{Post}\left(\cdot|\tau_M\right)$};

  \node[right=of pi_inv_1] (ml_1) {};
  \node[right=of pi_inv_2] (ml_2) {};
  \node[right=of pi_inv_k] (ml_K) {};

  \coordinate (midy) at ($(pi1.north)!0.5!(pik.south)$);
  \node[
    rectangle, draw, thin,
    minimum width=0.6cm,
    minimum height=3cm,
    right=of pi_inv_2
  ] (ml_box)
  at (midy -| pi_inv_2.east)
  {\rotatebox{90}{ML-in-the-list}};
  \node[right=of ml_box] (output) {$\hat{\bm{x}}$};

  \node[below=of pi2, yshift=0.45cm] (dots_pi) {$\vdots$};
  \node[below=of debox_2, yshift=0.45cm] (dots_dec) {$\vdots$};
  \node[below=of pi_inv_2, yshift=0.45cm] (dots_inv) {$\vdots$};

  \myline{input}{pi1};
  \myline{input}{pi2};
  \myline{input}{pik};

  \filldraw[black] (0.458cm+0.23cm*0.8,0.16cm) circle (0.9pt);
  \filldraw[black] (0.458cm+0.23cm*0.8,0) circle (0.9pt);

  \draw[-latex] (pi1) -- (debox_1);
  \draw[-latex] (pi2) -- (debox_2);
  \draw[-latex] (pik) -- (debox_K);

  \draw[-latex] (debox_1) -- (pi_inv_1) node [midway, above, anchor=south] {$\hat{\bm{x}}_{\tau_1}$};
  \draw[-latex] (debox_2) -- (pi_inv_2) node [midway, above, anchor=south] {$\hat{\bm{x}}_{\tau_2}$};
  \draw[-latex] (debox_K) -- (pi_inv_k) node [midway, above, anchor=south] {$\hat{\bm{x}}_{\tau_M}$};

  \draw[-latex] (pi_inv_1) -- (ml_1);
  \draw[-latex] (pi_inv_2) -- (ml_2);
  \draw[-latex] (pi_inv_k) -- (ml_K);

  \draw[-latex] (ml_box) -- (output);

\end{tikzpicture}
}
    \caption{Structure of parallel paths employed in ensemble decoding.}
    \label{fig:general_ed}
\end{figure}
Fig.~\ref{fig:general_ed} illustrates the general structure of ensemble decoding. The \ac{LLR} vector $\bm{\lambda}$ is provided as the input to $M$ parallel paths.
Each path, indexed by ${\ell\in [M]}$, comprises a preprocessing function ${\mathrm{Pre}\left(\cdot|\tau_\ell\right):\mathbb{R}^n\rightarrow \mathbb{R}^n}$, a decoder ${\mathrm{Dec}_\ell:\mathbb{R}^n\rightarrow\mathbb{F}_2^n}$, and a corresponding postprocessing function ${\mathrm{Post}\left(\cdot|\tau_\ell\right):\mathbb{F}_2^n\rightarrow\mathbb{F}_2^n}$. 
Hereby, the pre- and postprocessing functions are based on a homomorphism $\tau_\ell:\mathcal{C}\rightarrow\mathcal{C}$, and we denote $\bm{x}_{\tau_\ell}:=\tau_\ell(\bm{x})$. 
The preprocessing function aims at reproducing the effect of $\tau_\ell(\cdot)$ on the transmitted codeword $\bm{x}$ in the \ac{LLR} domain. 
Thus, the $\ell$th decoder returns an estimate $\hat{\bm{x}}_{\tau_\ell}$ for $\bm{x}_{\tau_\ell}$ and the postprocessing identifies $\hat{\bm{x}}_\ell$ such that $\tau_\ell(\hat{\bm{x}}_\ell)=\hat{\bm{x}}_{\tau_\ell}$ \cite{mandelbaum2023generalized}.

Typically, an ensemble decoding scheme either
\begin{enumerate}
    \item[(i)] uses $M$ different decoders $\mathrm{Dec}_\ell$, but omits pre- and postprocessing, i.e., ${\tau_\ell=\mathrm{id},\, \ell\in[M]}$, or
    \item[(ii)] employs the same decoder across all paths, i.e., ${\mathrm{Dec}_\ell=\mathrm{Dec},\, \ell\in[M]}$, but uses $M$ distinct homomorphisms $\tau_\ell$ with associated pairs of pre- and postprocessing functions $(\mathrm{Pre}(\bm{\lambda}|\tau_\ell),\mathrm{Post}(\hat{\bm{x}}_{\tau_\ell}|\tau_\ell))$, $\ell\in[M]$.
\end{enumerate}

 Following \cite{mandelbaum_subcode_2025}, ensemble decoding generates a list
\[
\mathcal{L}:=\begin{cases}
    \{\hat{\bm{x}}_\ell:\ell\in[M],\,\,\hat{\bm{x}}_\ell\in \mathcal{C}\},& \text{if } \exists \ell\in[M]:\hat{\bm{x}}_\ell\in \mathcal{C} \\
    \{\hat{\bm{x}}_\ell:\ell\in[M]\},& \text{otherwise,}
\end{cases}
\]
and then, using the LLR vector $\bm{\lambda}$, selects the final estimate $\hat{\bm{x}}$ according to an {\emph{\ac{ML}-in-the-list rule}}\cite{AED_RMcodes}
\begin{align*}
    \hat{\bm{x}}:= \arg \max_{\bm{x}\in \mathcal{L}}(\bm{1}_{1\times n}-2\bm{x)}^\mathsf{T}\bm{\lambda}.%
\end{align*}

Next, we characterize several ensemble decoding schemes based on how they select distinct decoders and pre- and postprocessing functions to enhance the decoding performance.

\subsection{Multiple Basis Belief Propagation Decoding}
\ac{MBBP} decoding was first proposed by Hehn et al. in \cite{MBBP2} and later extended in \cite{MBBP1,MBBP3_withLeaking}. 
In \ac{MBBP}, the ensemble consists of $M$ parallel \ac{BP} decoders, each instantiated with a distinct PCM $\bm{H}_\ell$, resulting in ${\mathrm{Dec}_\ell=\mathrm{BP}(\bm{H}_\ell):\mathbb{R}^n\rightarrow\mathbb{F}_2^n}$. Note that \ac{MBBP} employs no pre- or post-processing functions, so it belongs to category (i).

\ac{MBBP} requires $M$ PCMs suitable for \ac{BP} decoding, i.e., $M$ \jmcol{matrices ideally composed of minimum-weight dual codewords}. In general, for an arbitrary code, finding such matrices is an NP-complete problem\cite{intractabilityofminimumdistance}.
However, for short- to medium-length codes, efficient algorithms for finding low-weight codewords exist, e.g., \cite{leon_search}.
In particular, Hehn et al.\ investigate overcomplete PCM designs and information exchange between parallel \ac{BP} decoders \cite{MBBP1,MBBP2}.
More recently, Shen et al. showed that \ac{MBBP}, in combination with a refined PCM construction, achieves strong error correction for various short non-LDPC codes \cite{yifei_univ_bp}.

\subsection{Homomorphism-based Ensemble Decoding}
With the introduction of \ac{AED} of Reed--Muller and polar codes \cite{AED_RMcodes}, homomorphism-based ensemble decoding schemes have attracted significant attention in wireless communications, e.g., \cite{aed_6g_urllc}.
Such schemes belong to category (ii), employing the same decoder in each path, i.e., $\mathrm{Dec}_\ell=\mathrm{Dec}$ for all $\ell \in [M]$, while selecting $M$ distinct homomorphisms $\tau_\ell$.

\ac{AED} builds on the definition of the automorphism group of a code $\mathcal{C}$ prevalent in coding theory
\[
\mathrm{Aut}(\mathcal{C}):=  \!  
\left\{\! \tau^{(a)}: \mathcal{C}\!\to\!\mathcal{C}, \bm{x}\!\mapsto\!a\tau(\bm{x})\!:\! \tau\!\in \mathrm{S}_n, a\in\mathbb{F}\!\setminus\!\{0\}    \right\},
\]
where $\mathrm{S}_n$ denotes the \emph{symmetric group} of $[n]$ and ${a\tau(\bm{x}) := a \cdot (x_{\tau(1)}, \ldots, x_{\tau(n)})^\mathsf{T}}$. %
For binary codes, ${a=1}$, and we simply write ${\tau:=\tau^{(1)}}$\cite{mandelbaum2023generalized}.
Hence, \ac{AED} employs homomorphisms $\tau_\ell$ that are permuting codewords. 
For each path $\ell\in[M]$, a distinct permutation automorphism ${\tau_\ell\in \mathrm{Aut}(\mathcal{C})}$ is selected
and the pre- and postprocessing functions are 
${\mathrm{Pre}(\bm{\lambda}|\tau_\ell):=\left(\lambda_{\tau_\ell(i)}\right)_{i=1}^n}$ 
and ${\mathrm{Post}(\bm{x}):=\tau_\ell^{-1}(\bm{x})}$, respectively, where ${\tau_\ell^{-1}\in \mathrm{Aut}(\mathcal{C})}$ is the inverse of $\tau_\ell$.

\ac{AED} can yield substantial decoding gains compared to stand-alone decoding, provided that the effects of the selected automorphisms are not absorbed by the symmetries of the decoders \cite{AED_RMcodes,PolarAEDPillet,stuttgart_ldpc_aed}.
In particular, the authors in \cite{stuttgart_ldpc_aed} investigate \ac{AED} to enhance stand-alone \ac{BP} decoding of quasi-cyclic (QC) LDPC codes.
Due to the QC construction of their PCM $\bm{H}$, and noting that $\mathrm{Aut}(\mathcal{C})=\mathrm{Aut}(\mathcal{C^\bot})$\cite{MacWilliamsSloane}, certain permutation automorphisms are known a priori.
It was observed that the \ac{AED} path using $\mathrm{Dec}=\mathrm{BP}(\bm{H})$ and $\tau_\ell$ is equivalent to \ac{BP} decoding with $\tau_\ell^{-1}(\bm{H})$, where the QC permutation $\tau_\ell^{-1}$ is permuting columns of $\bm{H}$, while both pre- and postprocessing become the identity map $\mathrm{id}$ \cite{EnhancingCyclic}. 
For QC LDPC codes $\tau_\ell^{-1}(\bm{H})$ is typically equivalent to $\bm{H}$ up to row permutations, such that $\mathrm{BP}(\bm{H})$ and $\mathrm{BP}(\tau_\ell^{-1}\bm{(\bm{H})})$ yield identical decoding results preventing an ensemble gain.
Hence, \ac{BP} decoding based on $\bm{H}$ is equivariant with respect to these QC automorphisms.
A similar effect can occur for polar codes under \ac{SC} decoding \cite{AED_RMcodes}.
Nevertheless, \cite{stuttgart_ldpc_aed} proposes to break the decoder path equivariance for QC LDPC codes by modifying $\bm{H}$, e.g., by removing a row, superimposing two rows, or appending a new linearly dependent row.

Recent works generalize the concept of \ac{AED}.
For instance, by dropping the constraint that $\tau$ must be a permutation, \ac{GAED} \cite{mandelbaum2023generalized} adopts the more general notion of the automorphism group prevalent in linear algebra.
\Ac{EED} further removes the requirement of bijectivity and instead relies on mappings from the \emph{set of endomorphisms}~\cite{mandelbaum2024endomorphisms} 
\[
\mathrm{End}(\mathcal{C}):=    
        \left\{ \tau: \mathcal{C}\rightarrow\mathcal{C},\,\, \text{ $\tau$ linear} \right\}.
\]
Both \ac{GAED} and \ac{EED} require a preprocessing to mimic the effect of the respective linear map in the \ac{LLR} domain. 
This preprocessing causes information loss, which limits the applicability of \ac{GAED} and EED to arbitrary codes.

\subsection{Linear Subcode Ensemble Decoding}
\subsubsection{Principle Idea}
\ac{SCED} \cite{mandelbaum_subcode_2025} belongs to ensemble decoding schemes of category~(ii).
In particular, \ac{SCED} possibly uses proper linear subcodes ${\mathcal{C}_\ell\subset_\mathsf{s} \mathcal{C}}$ and their respective decoder $\mathrm{Dec}_\ell$ in the parallel decoding paths~\cite{mandelbaum_subcode_2025}.
Note that for every path $\ell\in[M]$, there exist codewords $\bm{x}\in \mathcal{C}$ such that ${\bm{x}\notin \mathcal{C}_\ell}$, implying that the $\ell$th path cannot decode this codeword $\bm{x}$.
To ensure that all codewords can potentially be decoded, the ensemble of linear subcodes should constitute a \ac{LC} of $\mathcal{C}$.
Following \cite{clark_covering_numbers}, we define the \ac{LC} of a code $\mathcal{C}$ as a set of subcodes $\{\mathcal{C}_\ell\subseteq_\mathsf{s}\mathcal{C}:\ell\in {[M]}\}$ satisfying 
\begin{equation}
\bigcup_{\ell=1}^M \mathcal{C}_\ell= \mathcal{C},\label{eq:general_cover}
\end{equation}
where $M\geq3$ needs to hold \cite{clark_covering_numbers}.
If a well-performing decoder for $\mathcal{C}$ exists, one may include ${\mathcal{C}_1=\mathcal{C}}$ to guarantee an \ac{LC}\cite{mandelbaum_subcode_2025}.
The motivation for using subcode decoding is twofold. 
First, the minimum Hamming distance of an arbitrary subcode ${\mathcal{C}_\ell\subseteq \mathcal{C}}$
 is lower bounded by the minimum Hamming distance of $\mathcal{C}$.
 Hence, a subcode $\mathcal{C}_\ell$ can eventually correct more errors than the original code~$\mathcal{C}$.
  Second, efficient subcode decoders may exist, which can be leveraged within an ensemble.

\jmcol{Note that for ${\tau_\ell\in \mathrm{End}(\mathcal{C})}$, it holds that $\tau_\ell(\mathcal{C})\!\subseteq_\mathsf{s}\!\mathcal{C}$ \cite{mandelbaum2024endomorphisms}. 
Thus, \ac{EED} operates over linear subcodes of $\mathcal{C}$ while achieving ensemble diversity through distinct endomorphisms. 
In contrast, \ac{SCED} explicitly employs distinct linear subcode decoders avoiding the information loss induced by preprocessing.}

\subsubsection{\ac{BP}-based \ac{SCED}}
In particular, \ac{SCED} with \ac{BP} decoding is investigated in \cite{mandelbaum_subcode_2025}.
In this setting, each path ${\ell\in[M]}$ employs a PCM $\bm{H}_\ell$ describing a subcode $\mathcal{C}_\ell$, such that ${\mathrm{Dec}_\ell=\mathrm{BP}(\bm{H}_\ell)}$.
Given the PCM $\bm{H}$ of a binary code $\mathcal{C}$, the PCM $\bm{H}_\ell$ of a subcode ${\mathcal{C}(\bm{H}_\ell) \subseteq_\mathsf{s} \mathcal{C}}$ can be constructed by appending to $\bm{H}$ a matrix $\bm{M}_\ell\in \mathbb{F}_2^{m_\ell\times n}$ whose $m_\ell$ rows are possibly linearly independent of the rows of $\bm{H}$, i.e., 
\begin{equation}\label{eq:inducing_subcode}
\bm{H}_\ell =
\begin{pmatrix}
\bm{H} \\
\bm{M}_\ell
\end{pmatrix}
=:
[\bm{H} ; \bm{M}_\ell],
\end{equation}
where $[\bm{H} ; \bm{M}_\ell]$ denotes the vertical stacking of matrices. 
Furthermore, we define the \emph{rank deficiency} of the subcode as \[
\Delta_\ell:=\mathrm{rank}(\bm{H}_\ell)-\mathrm{rank}(\bm{H})\geq 0.\]
Then, $\mathcal{C}_\ell=\mathcal{C}(\bm{H}_\ell)$ is a $(k-\Delta_\ell)$-dimensional linear subcode of $\mathcal{C}$.
Thus, if there exists ${i\in[m_\ell]}$ such that the $i$th row of $\bm{M}_\ell$ is linearly independent of all rows in $\bm{H}$, then $\Delta_\ell>0$
and $\mathrm{BP}(\bm{H}_\ell)$ is a decoder of a proper subcode $\mathcal{C}_\ell\subset_\mathsf{s}\mathcal{C}$.
\pbcol{If $\Delta_\ell=0$ for all $\ell \in[M]$, then \ac{BP}-based \ac{SCED} reduces to \ac{MBBP}, highlighting that \ac{BP}-based \ac{SCED} naturally generalizes \ac{MBBP}.} 
Importantly, in contrast to \ac{MBBP}, \ac{SCED} allows appending linearly independent rows to create diversity.
Thereby, it can avoid the exhaustive search of
minimum-weight dual codewords, while still providing ensemble gains \cite{mandelbaum_subcode_2025}.
Even when employing an exhaustive search for low-weight dual codewords, these appended rows can yield further structural advantage for \ac{BP} decoding since the search is performed over $\mathcal{C}_\ell^\bot$ rather than $\mathcal{C}^\bot$, as shown in Sec.~\ref{sec:structural_advantages}.

For LDPC codes, appending even a single linearly independent row, i.e., $m_\ell=1$, already yields ensembles with substantial decoding performance gains \cite{mandelbaum_subcode_2025}.
This aligns with \cite{globecom_when_does}, where Laendner et al. observed that even a single linearly \emph{dependent} row can disrupt trapping sets, an effect that may also occur when appending linearly \emph{independent} rows.

However, the design of \ac{SCED} ensembles still faces certain challenges.
For instance, some codewords, e.g., the all-zero codeword, are decodable by a larger number of decoding paths than others \cite{mandelbaum_subcode_2025}. 
To overcome this imbalance and to further simplify the ensemble design while retaining the advantages of \ac{SCED}, we seek an approach that operates on subcodes while guaranteeing that every codeword is recoverable by the same number of decoding paths.
To this end, we propose \ac{aSCED}, which extends \ac{SCED} by decoding over affine subcodes.

\section{Affine Subcode Ensemble Decoding}\label{sec:asced_sec}
\subsection{Idea and Some Properties of \ac{aSCED}}\label{sec:asced}
We propose \ac{aSCED} for an arbitrary linear code $\mathcal{C}$. The central idea is that \ac{aSCED} generalizes \ac{SCED} by allowing affine subcodes, in addition to linear subcodes, each with its respective decoder.
Hence, \ac{aSCED} belongs to category~(i) and employs $M$ distinct decoders for subcodes $\mathcal{C}_\ell$ that are either linear or affine.
As in \ac{SCED}, to ensure that all codewords can potentially be decoded, we require that the set ${\{\mathcal{C}_\ell\subseteq\mathcal{C}:\ell\in[M]\}}$ of linear and affine subcodes satisfies~(\ref{eq:general_cover}). 
As shown next, the use of affine subcodes yields beneficial properties compared to \ac{SCED}.

\begin{theorem}\label{prop:affine_cover}
Let $\mathcal{C}$ be a linear binary code.
Then, there exists a covering satisfying~(\ref{eq:general_cover}) using  $M=2$ proper subcodes.
This is the minimum number of proper subcodes for a covering and can only be attained if the covering consists of a linear and an affine subcode.
\end{theorem}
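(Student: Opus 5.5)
Three things need to be shown: that two proper subcodes can cover $\mathcal{C}$, that a single proper subcode never suffices, and that every cover by two proper subcodes consists of one linear and one affine subcode. We assume $k\geq 1$, since otherwise no proper subcode exists.

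\emph{Existence.} Pick any $(k-1)$-dimensional linear subcode $\sC\subset_\mathsf{s}\mathcal{C}$. One way is to append to $\bm{H}$ a single row $\bm{m}$ that is linearly independent of the rows of $\bm{H}$, so that $\Delta=1$ in~(\ref{eq:inducing_subcode}). Next pick any $\bm{x}_\mathsf{a}\in\mathcal{C}\setminus\sC$ and form the affine subcode $\aC=\bm{x}_\mathsf{a}+\sC$ as in~(\ref{eq:def:aff_code}). Over $\mathbb{F}_2$ the quotient $\mathcal{C}/\sC$ has exactly $2$ cosets, namely $\sC$ and $\aC$. Since $\mathcal{C}$ is the disjoint union of its cosets, $\sC\cup\aC=\mathcal{C}$. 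Both subcodes are proper, because each has $2^{k-1}<2^k$ elements. Concretely, $\aC=\{\bm{x}\in\mathcal{C}:\bm{m}\bm{x}^\mathsf{T}=1\}$.

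\emph{Minimality.} A proper subcode, linear or affine, is a coset of a subspace of dimension at most $k-1$. It therefore has at most $2^{k-1}<2^k=|\mathcal{C}|$ elements, so one proper subcode cannot cover $\mathcal{C}$. Hence $M\geq 2$.

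\emph{Characterization.} Suppose $\mathcal{C}=\mathcal{C}_1\cup\mathcal{C}_2$ with both subcodes proper.
\begin{itemize}
\item Since $|\mathcal{C}_1|+|\mathcal{C}_2|\geq 2^k$ and each term is at most $2^{k-1}$, both subcodes have size exactly $2^{k-1}$ and they are disjoint.
\item Exactly one of them contains $\bm{0}$, and a subcode containing $\bm{0}$ is linear. So at most one subcode is linear.
\item Two affine subcodes alone would miss $\bm{0}\in\mathcal{C}$, so they cannot cover $\mathcal{C}$.
\item Two linear subcodes both contain $\bm{0}$, contradicting disjointness. This is also just the classical fact that a vector space over any field is never the union of two proper subspaces, consistent with the requirement $M\geq 3$ from~\cite{clark_covering_numbers}.
\end{itemize}
Hence the cover consists of exactly one linear and one affine subcode.

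The main care point is the counting step. It uses the binary alphabet (index $2$ of $\sC$ in $\mathcal{C}$) and the definition of an affine subcode as a coset of a linear subcode of $\mathcal{C}$, so that every proper subcode has at most $2^{k-1}$ elements.
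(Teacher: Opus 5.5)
Your proof is correct and follows the paper's argument: existence comes from the same index-$2$ coset decomposition $\mathcal{C}=\mathcal{C}_\mathsf{s}\cup(\bm{x}_\mathsf{a}+\mathcal{C}_\mathsf{s})$ of a $(k-1)$-dimensional linear subcode, and two affine subcodes are likewise excluded because neither contains $\bm{0}$. The only difference is that you rule out a single proper subcode and two proper linear subcodes with a self-contained counting argument. Each proper subcode has at most $2^{k-1}$ elements, which forces the two members of any cover to be disjoint, whereas the paper calls the first case obvious and cites \cite{clark_covering_numbers} for the second.
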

\begin{IEEEproof}
Since ${\sC(n,k-1)}$ is a $(k-1)$-dimensional subspace of $\mathcal{C}$, the set of cosets with respect to ${\sC(n,k-1)}$, contains exactly two elements due to Lagrange's Theorem.
Since the coset $\bm{0} + {\sC(n, k-1)}$ constitutes the neutral element of the quotient space, there exists at least one element ${\bm{x}_\mathsf{a}\in \mathcal{C}\setminus{\sC(n, k-1)}}$ such that ${\mathcal{C}/{\sC(n,k-1)}=\{\bm{0}+ {\sC(n, k-1)}, \bm{x}_\mathsf{a}+\sC(n,k-1)\}}$.
Defining $\mathcal{C}_1=\bm{0}+{\sC(n,k-1)}$ and $\mathcal{C}_2=\bm{x}_\mathsf{a}+{\sC(n,k-1)}$, we finally obtain a covering $\mathcal{C}=\mathcal{C}_1\cup\mathcal{C}_2$ consisting of a proper linear subcode $\mathcal{C}_1$ and a proper affine subcode $\mathcal{C}_2$.
Obviously, a single proper subcode cannot satisfy~(\ref{eq:general_cover}). Note that neither two proper linear subcodes nor two affine subcodes can satisfy ~(\ref{eq:general_cover}).
The latter cannot cover the all-zero codeword, and the former fact is proven in \cite{clark_covering_numbers}. 
\end{IEEEproof}
Theorem~\ref{prop:affine_cover} demonstrates that the smallest number of subcodes needed to satisfy~(\ref{eq:general_cover}) is $M=2$, i.e., when allowing affine subcodes, $M$ can be reduced relative to $M=3$ when only proper linear subcodes are allowed.

\begin{remark}\label{remark:block}
    For subcodes $\sC$ with rank deficiency $\Delta>1$, %
    , the quotient space $\mathcal{C}/\sC(n, k')$ contains $2^\Delta$ distinct cosets $\bm{0}+\sC$, $\bm{x}_{\mathsf{a}, 1}+\sC,\ldots,\bm{x}_{\mathsf{a}, 2^\Delta-1}+\sC$. %
    This is the minimum number of subcodes to achieve a covering based on $\sC$. 
\end{remark}

\begin{remark}
    The lemma can be extended to non-binary codes over $\mathbb{F}_q$ by using a $(k-1)$-dimensional linear subspace and $q-1$ affine subcodes, i.e., requiring $M=q$.
\end{remark}

We next show that \ac{aSCED} can match \ac{ML} decoding when employing sufficiently many paths.

\begin{proposition}\label{lemma:aSCED-ML}
    For a linear binary code $\mathcal{C}$, there exists an \ac{aSCED} with $M=2^k$ paths whose \ac{ML}-in-the-list decision coincides with \ac{ML} decoding.
\end{proposition}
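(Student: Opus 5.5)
The plan is to take the extreme case of Remark~\ref{remark:block}: let the linear subcode be the trivial one, $\sC=\{\bm{0}\}$, i.e., $k'=0$ and rank deficiency $\Delta=k$. The quotient space $\mathcal{C}/\{\bm{0}\}$ then consists of exactly $2^k$ cosets, each a singleton $\{\bm{c}\}$ with $\bm{c}\in\mathcal{C}$. The coset $\bm{0}+\sC=\{\bm{0}\}$ is a linear subcode, and the remaining $2^k-1$ cosets $\{\bm{c}\}$ with $\bm{c}\neq\bm{0}$ are affine subcodes in the sense of~(\ref{eq:def:aff_code}), with offset $\bm{x}_\mathsf{a}=\bm{c}\notin\sC$. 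Their union is $\mathcal{C}$, so~(\ref{eq:general_cover}) holds with $M=2^k$.

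Next I would equip each path with a decoder for its subcode. Since $\mathcal{C}_\ell=\{\bm{c}_\ell\}$ has a single element, any decoder that returns a codeword of its subcode must output $\hat{\bm{x}}_\ell=\bm{c}_\ell$, regardless of $\bm{\lambda}$. With \ac{BP} this is concrete: take $\bm{H}_\ell$ to be a full-rank PCM of $\{\bm{0}\}$ (for instance $\bm{I}_n$) with affine syndrome $\bm{s}_\mathsf{a}=\bm{H}_\ell\bm{c}_\ell^\mathsf{T}$. The affine \ac{BP} rules that the paper develops later should then pin every bit to $\bm{c}_\ell$. To avoid depending on those rules, I would state the proposition for the idealized decoder $\mathrm{Dec}_\ell(\bm{\lambda})\equiv\bm{c}_\ell$, which is a valid decoder of $\mathcal{C}_\ell$.

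Every path outputs a valid codeword, so the first case in the definition of $\mathcal{L}$ applies and $\mathcal{L}=\{\bm{c}_\ell:\ell\in[2^k]\}=\mathcal{C}$. The \ac{ML}-in-the-list rule becomes $\arg\max_{\bm{x}\in\mathcal{C}}(\bm{1}_{1\times n}-2\bm{x})^\mathsf{T}\bm{\lambda}$. To show this is \ac{ML} decoding over a BiMSC, I would use $\log p(\bm{y}|\bm{x})=\sum_i \log p(y_i|x_i)$. Writing $\log p(y_i|x_i)=\tfrac12\log\bigl(p(y_i|0)p(y_i|1)\bigr)+\tfrac{1}{2}(1-2x_i)\lambda_i$, the first term does not depend on $\bm{x}$. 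Hence maximizing the likelihood is equivalent to maximizing $\sum_i(1-2x_i)\lambda_i$, and the two decisions coincide, with ties broken identically.

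The main obstacle is definitional rather than technical. One must check that the degenerate choice $\sC=\{\bm{0}\}$ with $k'=0<k$ is admissible, which the definition allows. One must also check that a ``decoder'' of a singleton subcode is legitimate, in particular its realization via \ac{BP} with an affine syndrome. If needed, I would remark that any decoder for $\mathcal{C}_\ell$ that always outputs an element of $\mathcal{C}_\ell$ suffices.
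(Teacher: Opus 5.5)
Your proposal is correct and follows the paper's proof: both take the trivial linear subcode $\sC=\{\bm{0}\}$, whose $2^k$ cosets are singletons, so each path returns its unique codeword, $\mathcal{L}=\mathcal{C}$, and the ML-in-the-list rule becomes ML decoding. Your extra check that maximizing $(\bm{1}_{1\times n}-2\bm{x})^\mathsf{T}\bm{\lambda}$ is equivalent to maximizing the likelihood over a BiMSC is something the paper leaves implicit, and it does not change the route.
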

\begin{IEEEproof}
\pbcol{Consider the singleton linear subcode ${\sC=\{\bm{0}\}}$.
According to Remark~\ref{remark:block}, it induces  ${M=2^k}$ singleton affine subcodes ${\mathcal{C}_\ell=\bm{x}_\ell+\sC}$, $\ell \in [M]$.
Consider an \ac{aSCED} ensemble with $M$ paths, where the $\ell$th path, $\ell\in[M]$, employs the decoder ${\mathrm{Dec}_\ell:\mathbb{R}^n\rightarrow\mathcal{C}_\ell}$.
Each path outputs its unique codeword, so the candidate list is $\mathcal{L}= \mathcal{C}$}.
\end{IEEEproof}
Proposition~\ref{lemma:aSCED-ML} establishes that an \ac{aSCED} approaches \ac{ML} performance as the number of paths increases; the construction above relies on ${M=2^k}$ paths, which is infeasible for practical codes. 
Although the resulting decoding complexity is prohibitively large, this theoretical result indicates that \ac{ML}-like behavior is, in principle, attainable.
Moreover, the results in Sec.~\ref{sec:structural_advantages_results} show that, when using subcode \acp{ssPCM}, a substantially smaller number of paths is sufficient to reach \ac{ML} performance in certain cases. 

To efficiently implement \ac{aSCED}, we require efficient decoders for both linear and affine subcodes, which are not guaranteed to exist.
In the following, we show that suitable decoders exist for \ac{BP} decoding.

\begin{remark}
     In \cite{sced_polar}, we introduced \ac{SC}-based  \ac{SCED} for polar codes, using the observation that subcodes of polar codes can be induced by suitable pre-transforms applied before the polar graph \cite{trifonov_subcodes_2016}.
\pbcol{These pre-transforms generate linear and affine subcodes, such that \cite{sced_polar} implements \ac{aSCED} for polar codes. Hence, in this work, we focus on \ac{BP}-based \ac{aSCED}.}
\end{remark}

\subsection{\ac{BP}-based \ac{aSCED}}
We show how \ac{BP} decoding of binary codes can be adapted such that it can be used with affine subcodes within the context of \ac{aSCED}. 
 While \ac{BP} decoding of a linear subcode simply uses the PCM of the subcodes without modifying the update rules \cite{mandelbaum_subcode_2025},
 \ac{BP} decoding of an affine subcode requires an adapted CN update rule to account for the fixed affine offset.
\subsubsection{\ac{BP} Decoding of Affine Subcodes}

The following lemma formalizes the modified \ac{BP} update equations when decoding over an affine subcode $\aC$.

\begin{lemma}\label{prop:affine_bp}
    Let $\aC(n,k')=\bm{x}_\mathsf{a}+\sC(n,k')$ be an affine subcode of a binary code $\mathcal{C}$, where $\sC(n,k')$ is a proper linear subcode with PCM $\bm{H}_\mathsf{s}$.
    Define $\bm{s}_\mathsf{a}:=\bm{H}_\mathsf{s}\bm{x}_\mathsf{a}^\mathsf{T}$.
    Then, affine \ac{BP} decoding, denoted by $\mathrm{BP}_{\bm{s}_\mathsf{a}}(\bm{H}_\mathsf{s},\bm{\lambda})$, operates on the Tanner graph associated with $\bm{H}_\mathsf{s}$ using the adapted CN update 
    \begin{equation}\label{eq:affine_update_eq}
    (-1)^{(\bm{s}_\mathsf{a})_j}\cdot \lambda_{i\leftarrow j}
    .
    \end{equation}
    The initialization of VNs with the bit-wise LLRs and the VN update rules remain unchanged.
     Decoding terminates when $\bm{H}_\mathsf{s}\hat{\bm{x}}=\bm{s}_\mathsf{a}$ or when reaching the maximum number of iterations.
     When employed within an ensemble, decoding of each path still terminates when $\bm{H}\hat{\bm{x}}=\bm{0}$.
\end{lemma}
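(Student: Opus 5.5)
The plan is to justify the modified check-node rule by deriving it from the local constraints of the affine subcode, and then to argue that everything else in \ac{BP} is untouched. The starting point is the description of $\aC$ by its PCM. From the definition in~(\ref{eq:def:aff_code}) and the affine syndrome, $\bm{x}\in\aC$ if and only if $\bm{H}_\mathsf{s}\bm{x}^\mathsf{T}=\bm{s}_\mathsf{a}$. Row by row, this means $\bigoplus_{i\in\mathcal{M}(j)}x_i=(\bm{s}_\mathsf{a})_j$ for every $j$.

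So the factor graph of the uniform distribution on $\aC$ is the Tanner graph of $\bm{H}_\mathsf{s}$. The only change is that check node $\mathsf{c}_j$ now enforces even parity if $(\bm{s}_\mathsf{a})_j=0$ and odd parity if $(\bm{s}_\mathsf{a})_j=1$. The variable nodes, the channel factors, and hence the initialization and VN update, are identical to the linear case. This establishes the claim that only the CN rule changes.

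Next, I would compute the CN-to-VN message for an odd-parity check. Conditioned on the incoming extrinsic messages, bit $x_i$ must equal $(\bm{s}_\mathsf{a})_j\oplus\bigoplus_{i'\neq i}x_{i'}$. The standard tanh-rule derivation gives the LLR of $\bigoplus_{i'\neq i}x_{i'}$ as $2\tanh^{-1}\prod\tanh(\lambda_{i'\rightarrow j}/2)$. XORing this with the known constant $1$ flips the sign of the LLR, which yields exactly~(\ref{eq:affine_update_eq}).

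For the \ac{NMSA}, the same sign flip carries through. Both the normalization and the min-sum approximation act only on the product-of-signs and magnitude parts, and they commute with multiplication by $(-1)^{(\bm{s}_\mathsf{a})_j}$. An equivalent viewpoint, which I would mention as a sanity check, is to flip the LLRs by the sign pattern of $\bm{x}_\mathsf{a}$, run ordinary \ac{BP} on $\sC$, and add $\bm{x}_\mathsf{a}$ back. In each check, an odd number of flipped inputs produces exactly the factor $(-1)^{(\bm{s}_\mathsf{a})_j}$.

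Finally, I would handle the stopping rules. The hard decision $\hat{\bm{x}}$ is a valid codeword of $\aC$ precisely when $\bm{H}_\mathsf{s}\hat{\bm{x}}^\mathsf{T}=\bm{s}_\mathsf{a}$, which justifies the per-path termination criterion. Within the ensemble, recall from~(\ref{eq:inducing_subcode}) that $\bm{H}_\mathsf{s}=[\bm{H};\bm{M}]$. Since $\bm{x}_\mathsf{a}\in\mathcal{C}$, the first block of $\bm{s}_\mathsf{a}$ vanishes. The ensemble list only requires $\hat{\bm{x}}\in\mathcal{C}$, so stopping on $\bm{H}\hat{\bm{x}}^\mathsf{T}=\bm{0}$ is consistent.

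The main obstacle is not computational but one of precision. I need to state the statement's terse form "$(-1)^{(\bm{s}_\mathsf{a})_j}\cdot\lambda_{i\leftarrow j}$" carefully, meaning that the output of the usual CN rule is multiplied by that sign. I also need to verify the sign-flip equivalence for the min-sum variant, including the handling of zero-valued LLRs in the sign function.
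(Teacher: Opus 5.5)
Your proposal is correct and follows essentially the paper's route. The paper rewrites the constraint as $\bm{H}_\mathsf{s}\bm{x}^\mathsf{T}+\bm{s}_\mathsf{a}=\bm{0}$, treats $(\bm{s}_\mathsf{a})_j$ as an LLR of $(-1)^{(\bm{s}_\mathsf{a})_j}\cdot\infty$ injected at CN $\mathsf{c}_j$, and uses that both CN rules are odd, which is the same computation as your ``XOR with a known constant bit flips the sign'' step; your explicit treatment of the \ac{NMSA}, of the converse direction ($\bm{H}_\mathsf{s}\bm{x}^\mathsf{T}=\bm{s}_\mathsf{a}\Rightarrow\bm{x}\in\aC$), and of the stopping rules goes a bit beyond what the paper spells out, and your sign-flip sanity check is the argument the paper later uses to prove Theorem~\ref{th:error_prob_affine}.
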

\begin{IEEEproof}
Recall that all codewords $\bm{x}\in\aC$ satisfy  
${\bm{H}_\mathsf{s}\bm{x}^\mathsf{T}=\bm{s}_\mathsf{a}}$. 
Rearranging gives ${\bm{H}_\mathsf{s}\bm{x}^\mathsf{T}+\bm{s}_\mathsf{a}=\bm{0}}$,
which corresponds to \ac{BP} decoding over the Tanner graph of $\bm{H}_\mathsf{s}$,
but with each CN $j$ incorporating a fixed binary offset $(\bm{s}_\mathsf{a})_j$.
This offset $(\bm{s}_\mathsf{a})_j$ is equivalent to injecting an \ac{LLR} of $(-1)^{(\bm{s}_\mathsf{a})_j}\cdot \infty$ at the CN.
Finally, since both standard CN update rules incorporate odd functions, we obtain the stated update rule.
\end{IEEEproof}
\begin{remark}
    The adapted CN update (\ref{eq:affine_update_eq}) can be used for all CN updates since it reduces to the standard CN update for $\bm{s}_\mathsf{a}=\bm{0}$.
\end{remark}
\begin{remark}
    Affine \ac{BP} shares similarities with syndrome-based \ac{BP} decoding \cite[Ch. 47]{MacKay2003ITILA}. Unlike syndrome-based \ac{BP} decoding, the \emph{affine syndrome} $\bm{s}_\mathsf{a}$ is fixed and independent of the received sequence. 
\end{remark}
\begin{remark}
 The affine offsets $\bm{x}_\mathsf{a}$ defining the cosets can be derived from the non-zero columns of $\bm{H}_\mathsf{s}\bm{G}^\mathsf{T}$.
\end{remark}

\subsubsection{\ac{BP}-based \ac{aSCED}}
With affine \ac{BP} decoding established, \ac{BP}-based \ac{aSCED} can be realized by applying \ac{BP}, either in its standard or affine-modified form, to each linear or affine subcode in the ensemble.
Next, we propose a crucial design principle for constructing an efficient ensemble.
\begin{designprinciple}\label{dsgn:cover_with_cosets}
Let $\bm{H}_\mathsf{s}$ be the PCM of a proper subcode $\sC(n,k')$ of $\mathcal{C}$, and let $\Delta:=k-k'$ the \emph{rank deficiency} of its PCM. 
Following Remark~\ref{remark:block}, identify $2^\Delta-1$ distinct cosets $\bm{x}_{\mathsf{a}, i}+\sC=:\aC_{,i}$. 
Then, we propose to include the set
$\{
\sC,\aC_{,1},\ldots,\aC_{,2^\Delta-1}
\}$ satisfying (\ref{eq:general_cover}) as one batch, denoted as \ac{aSCED} batch, in the ensemble.
The whole \ac{aSCED} ensemble possibly consists of multiple \ac{aSCED} batches.
\end{designprinciple}
Given a proper subcode $\sC\subseteq_\mathsf{s} \mathcal{C}$ with PCM $\bm{H}_\mathsf{s}$, the design principle is to construct an \ac{aSCED} batch, illustrated in Fig.~\ref{fig:asced_block} 
that ensures full coverage of $\mathcal{C}$ without degrading the performance of individual decoding paths.
Note that \ac{aSCED} may comprise multiple \ac{aSCED} batches. In this case, the total number of paths is $M=\sum_{i\in[L]}2^{\Delta_\ell}$, where $L$ denotes the number of \ac{aSCED} batches and $\Delta_\ell$ is the rank deficiency of the $\ell$th batch.

When following Design Principle~\ref{dsgn:cover_with_cosets}, every codeword $\bm{x}\in \mathcal{C}$ is decodable by $L$ paths.
In contrast to \ac{SCED} \cite{mandelbaum_subcode_2025}, where ensuring equal protection for all codewords may degrade the performance of individual decoding paths, Theorem~\ref{th:error_prob_affine} formally establishes that \ac{aSCED} adhering to Design Principle~\ref{dsgn:cover_with_cosets} achieves this without compromising the average error performance of any path. 
\begin{figure}
    \centering
    \scalebox{0.8}{
\begin{tikzpicture}[
    level 1/.style={sibling distance=20mm},
    edge from parent/.style={->,draw},
    node distance=2mm and 4mm,
    >=latex
]

\node (input) at (-1.5,-0.5) {$\bm{\lambda}$};

\node (debox_1) [rectangle, draw, thin, minimum width=2.4cm, minimum height=0.6cm,font=\small,align=center] at (2,1)   {BP($\bm{H}_\mathsf{s},\cdot$)};
\node (debox_2) [rectangle, draw, thin, minimum width=2.4cm, minimum height=0.6cm,font=\small,align=center] at (2,0) {$\text{BP}_{\bm{s}_\mathsf{a,1}}(\bm{H}_\mathsf{s},\cdot)$};
\node (debox_3) [rectangle, draw, thin, minimum width=2.4cm, minimum height=0.6cm,font=\small,align=center] at (2,-2) {$\text{BP}_{\bm{s}_\mathsf{a,2^\Delta-1}}(\bm{H}_\mathsf{s},\cdot)$};

\filldraw[black] (-0.436,0) circle (1.2pt);
\filldraw[black] (-0.436,-0.5) circle (1.2pt);

\node (ml_1) [right=0.7cm of debox_1] {};
\node (ml_2) [right=0.7cm of debox_2] {};
\node (ml_K) [right=0.7cm of debox_3] {};

\node (ml_box) [rectangle, draw, thin, minimum width=0.6cm, minimum height=3.6cm] 
      at ($(ml_2)+(0.19cm,-0.5)$) 
      {\rotatebox{90}{ML-in-the-list}};

\node (output) [right=1cm of ml_box] {$\hat{\bm{x}}$};

\node (dots_dec_box) [below=0.15cm of debox_2] {$\vdots$};

\myline{input}{debox_1};
\myline{input}{debox_2};
\myline{input}{debox_3};

\draw [-latex,thick] (debox_1.east) -- (ml_1);
\draw [-latex,thick] (debox_2.east) -- (ml_2);
\draw [-latex,thick] (debox_3.east) -- (ml_K);

\draw [-latex,thick] (ml_box.east) -- (output);

\node[ draw=KITblue,fill=KITblue,fill opacity=0.1, dashed, rounded corners, fit=(debox_1)(debox_2)(debox_3)(dots_dec_box), 
      inner sep=8pt, label={[font=\small]above:\textcolor{KITblue}{aSCED batch}}] {};

\end{tikzpicture}
}
    \caption{Block diagram of an \ac{aSCED} batch according to Design Principle~\ref{dsgn:cover_with_cosets}.}
    \label{fig:asced_block}
\end{figure}
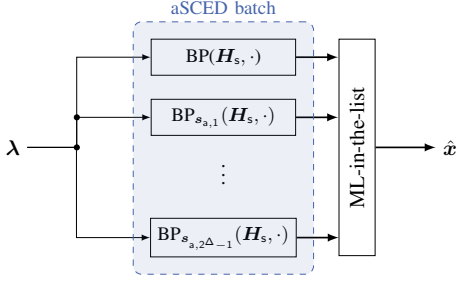

\begin{theorem}\label{th:error_prob_affine}
Let $\bm{H}_\mathsf{s}$ be the PCM of a proper subcode $\sC(n,k')$ of $\mathcal{C}$ and let $\aC=\bm{x}_\mathsf{a}+\sC$ be an arbitrary coset of $\sC$. 
When transmitting via a BiMSC,
 the error probability of the decoder $\mathrm{BP}(\bm{H}_\mathsf{s})$,
 when averaged over all $\bm{x}\in \mathcal{C}$,
 is
identical to the average error probability of the affine decoder $\mathrm{BP}_{\bm{s}_\mathsf{a}}(\bm{H}_\mathsf{s})$.
\end{theorem}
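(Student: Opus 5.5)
Our plan is to prove the statement via the standard symmetry (``all-zero codeword'') argument for BP over a BiMSC, adapted to a coset shift. The key object is the sign-flip map on LLRs induced by a binary vector $\bm{z}\in\mathbb{F}_2^n$, namely $(\bm{\lambda}\odot(-1)^{\bm{z}})_i:=(-1)^{z_i}\lambda_i$. We first interpret the averaged error probability carefully. A path $\mathrm{BP}(\bm{H}_\mathsf{s})$ can only succeed on $\bm{x}\in\sC$, while $\mathrm{BP}_{\bm{s}_\mathsf{a}}(\bm{H}_\mathsf{s})$ can only succeed on $\bm{x}\in\aC$. We therefore read the theorem as comparing the conditional error probability of the linear decoder averaged over its decodable codewords $\sC$ with that of the affine decoder averaged over $\aC$ (the codewords outside each subcode contribute error probability one, and there are equally many of them in both cases, $|\mathcal{C}|-|\sC|$ since $|\aC|=|\sC|$). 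Hence it suffices to show, for every $\bm{x}_\mathsf{s}\in\sC$,
\[
P_\mathrm{e}\bigl(\mathrm{BP}(\bm{H}_\mathsf{s}) \mid \bm{x}_\mathsf{s}\bigr) = P_\mathrm{e}\bigl(\mathrm{BP}_{\bm{s}_\mathsf{a}}(\bm{H}_\mathsf{s}) \mid \bm{x}_\mathsf{a}+\bm{x}_\mathsf{s}\bigr).
\]
Summing this identity over $\bm{x}_\mathsf{s}\in\sC$ then gives the claim.

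The core step is an equivariance lemma:
\[
\mathrm{BP}_{\bm{s}_\mathsf{a}}\bigl(\bm{H}_\mathsf{s},\bm{\lambda}\odot(-1)^{\bm{x}_\mathsf{a}}\bigr)=\mathrm{BP}\bigl(\bm{H}_\mathsf{s},\bm{\lambda}\bigr)+\bm{x}_\mathsf{a}
\]
for every LLR vector $\bm{\lambda}$. We prove it by induction over iterations, claiming that every message on edge $(i,j)$ in the affine decoder equals $(-1)^{x_{\mathsf{a},i}}$ times the corresponding message in the standard decoder. Initialization holds by construction, and VN updates preserve the relation because they are sums at a fixed VN $i$. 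At CN $j$, the NSPA $\tanh$-product and the NMSA sign-min product are both odd in each argument and even in absolute values. The outgoing message to $i$ therefore picks up the factor $\prod_{i'\in\mathcal{M}(j)\setminus\{i\}}(-1)^{x_{\mathsf{a},i'}}=(-1)^{(\bm{s}_\mathsf{a})_j}(-1)^{x_{\mathsf{a},i}}$, using $(\bm{s}_\mathsf{a})_j=\sum_{i'\in\mathcal{M}(j)}x_{\mathsf{a},i'}$. The factor $(-1)^{(\bm{s}_\mathsf{a})_j}$ is exactly cancelled by the adapted update (\ref{eq:affine_update_eq}) of Lemma~\ref{prop:affine_bp}, leaving $(-1)^{x_{\mathsf{a},i}}$ as claimed. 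Consequently the a posteriori LLRs $\lambda_i^{\mathrm{tot}}$ are sign-flipped exactly on $\mathrm{supp}(\bm{x}_\mathsf{a})$, so the hard decisions differ by $\bm{x}_\mathsf{a}$. The stopping rules also correspond: $\bm{H}_\mathsf{s}\hat{\bm{x}}^\mathsf{T}=\bm{0}$ holds iff $\bm{H}_\mathsf{s}(\hat{\bm{x}}+\bm{x}_\mathsf{a})^\mathsf{T}=\bm{s}_\mathsf{a}$, so both decoders run for the same number of iterations. The outputs then correspond by addition of $\bm{x}_\mathsf{a}$, and the same holds for the ensemble-level check $\bm{H}\hat{\bm{x}}^\mathsf{T}=\bm{0}$ because $\bm{x}_\mathsf{a}\in\mathcal{C}$.

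Finally, we invoke channel symmetry. For a BiMSC, the LLR vector obtained when transmitting $\bm{x}_\mathsf{a}+\bm{x}_\mathsf{s}$ has the same distribution as $\bm{\Lambda}\odot(-1)^{\bm{x}_\mathsf{a}}$, where $\bm{\Lambda}$ is the LLR vector obtained when transmitting $\bm{x}_\mathsf{s}$ \cite[Ch.~4]{MCT08}. Hence the event that the affine decoder errs on $\bm{x}_\mathsf{a}+\bm{x}_\mathsf{s}$ has the same probability as the event $\mathrm{BP}(\bm{H}_\mathsf{s},\bm{\Lambda})+\bm{x}_\mathsf{a}\neq \bm{x}_\mathsf{a}+\bm{x}_\mathsf{s}$, which is precisely the error event of the linear decoder on $\bm{x}_\mathsf{s}$. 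This establishes the per-codeword identity above.

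The main obstacle is making the CN-level sign bookkeeping rigorous, in particular tie-breaking conventions. The hard decision maps $\lambda_i^{\mathrm{tot}}=0$ to $0$, which is not sign-flip equivariant, and $\mathrm{sign}(0)$ in the NMSA has the same problem. We would handle this by noting that these ties have probability zero for continuous-output channels. For discrete BiMSCs we would instead assume a symmetric (randomized) tie-breaking rule, under which the equivariance holds in distribution.
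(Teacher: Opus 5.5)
\textbf{Gap: the reduction to codewords of the subcode.} Your core argument is the paper's. The inductive claim that every edge message of $\mathrm{BP}_{\bm{s}_\mathsf{a}}(\bm{H}_\mathsf{s})$ on $\bm{\lambda}\odot(-1)^{\bm{x}_\mathsf{a}}$ equals $(-1)^{x_{\mathsf{a},i}}$ times the corresponding message of $\mathrm{BP}(\bm{H}_\mathsf{s})$ on $\bm{\lambda}$ is what the appendix shows graphically: it moves the $(-1)$ of each signed CN onto the edges to $\mathrm{supp}(\bm{x}_\mathsf{a})$, then onto the channel LLRs via $f_\beta$, and uses the same BiMSC symmetry step. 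Your algebraic version, with the stopping rules and ties handled, is if anything cleaner. The problem is the reduction. You claim $\mathrm{BP}(\bm{H}_\mathsf{s})$ can only succeed on $\bm{x}\in\mathcal{C}_\mathsf{s}$, so every other codeword contributes error probability one. This is false:
\begin{itemize}
\item BP returns its current hard decision, and within aSCED each path also stops once $\bm{H}\hat{\bm{x}}^\mathsf{T}=\bm{0}$, a rule you yourself invoke.
\item Hence if after some iteration the hard decision equals the transmitted $\bm{x}\in\mathcal{C}\setminus\mathcal{C}_\mathsf{s}$, which happens frequently at high SNR, the linear path outputs $\bm{x}$ and succeeds.
\item Even under the pure $\bm{H}_\mathsf{s}$ stopping rule, the hard decision returned after $I_\mathrm{max}$ iterations can equal $\bm{x}$.
\item The same holds for the affine decoder outside $\mathcal{C}_\mathsf{a}$; the paper's LEP proposition notes that affine paths can output $\bm{0}$.
\end{itemize}
The paper guards against exactly this (``We do not assume $\bm{x}^{[1]}\in\mathcal{C}_\mathsf{s}$\ldots''). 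Summing your identity over $\mathcal{C}_\mathsf{s}$ only matches the $\mathcal{C}_\mathsf{s}$ terms of the linear decoder with the $\mathcal{C}_\mathsf{a}$ terms of the affine decoder. The remaining $|\mathcal{C}|-|\mathcal{C}_\mathsf{s}|$ terms on each side are unmatched, so your reinterpretation is not equivalent to the stated average over $\mathcal{C}$.

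\textbf{Fix.} Your own tools suffice, and the fix is precisely the paper's pairing. Your equivariance identity holds for every $\bm{\lambda}$, and channel symmetry holds for any transmitted codeword, so
\[
P_\mathrm{e}\bigl(\mathrm{BP}(\bm{H}_\mathsf{s})\mid\bm{x}\bigr)=P_\mathrm{e}\bigl(\mathrm{BP}_{\bm{s}_\mathsf{a}}(\bm{H}_\mathsf{s})\mid\bm{x}+\bm{x}_\mathsf{a}\bigr)\quad\text{for all }\bm{x}\in\mathcal{C},
\]
not only for $\bm{x}\in\mathcal{C}_\mathsf{s}$. Since $\bm{x}_\mathsf{a}\in\mathcal{C}$, the map $\bm{x}\mapsto\bm{x}+\bm{x}_\mathsf{a}$ is a bijection of $\mathcal{C}$ onto itself. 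Averaging both sides over $\mathcal{C}$ then gives the theorem directly. The paper does exactly this with $\bm{x}^{[2]}=\bm{x}^{[1]}+\bm{x}_\mathsf{a}$ for arbitrary $\bm{x}^{[1]}\in\mathcal{C}$. Drop the ``error probability one'' claim and the reinterpretation of the statement altogether.
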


\subsection{All-zero Assumption}
\pbcol{The all-zero codeword assumption accelerates Monte Carlo simulations by avoiding the need for encoding, and simplifies analytical analysis, e.g., in density evolution \cite{MCT08}.} 
However, in general, this assumption is not always feasible for assessing the \ac{FEP} of \ac{aSCED}.

\pbcol{In the following, we show that the all-zero assumption can remain valid for performance evaluation, and allows to reduce the number of simulated paths per \ac{aSCED} batch to one.}
To formalize this, consider the transmission of the all-zero codeword $\bm{0}$, yielding the channel output $\bm{y}_0$. 
Define the events ${A := \{\bm{0} \in \mathcal{L}\}}$ (list success) and ${B := \{\arg\max_{\bm{x} \in \mathcal{L}}    (\bm{1}_{1\times n}-2\bm{x)}^\mathsf{T}\bm{\lambda}= \bm{0}\}}$ (frame success). \jmcol{}
\pbcol{Thus, we split the ensemble decoding process into the event $A$ that the transmitted codeword is contained in the list $\mathcal{L}$ and the \mbox{event $B$} that it is actually selected as the ensemble output, and define the \ac{LEP} and \ac{FEP}
\[
\mathrm{LEP}:=P(\bar{A}),\quad \mathrm{FEP}:=P(\bar{B}).\]}
Since ${P(B|\bar{A}) = 0}$ %
we 
have ${P(B)=P(A) (1 - P(\bar{B}|A))}$. 
\pbcol{For any outcome ${m \in A \cap \bar{B}}$, the \ac{ML} decoder also outputs an error. Thus, we refer to this event as \emph{\ac{ML} error}.}
 In the absence of \ac{ML} errors, ${P(\bar{B}|A) = 0\implies P(\bar{A}) = P(\bar{B})}$, i.e., $\mathrm{LEP}=\mathrm{FEP}$.

\begin{proposition}
    Consider an \ac{aSCED} batch associated with the linear subcode PCM $\bm{H}_\mathsf{s}$. Then,
    \begin{align*}
    P(\mathrm{BP}(\bm{H}_\mathsf{s},\bm{y}_0)\neq \bm{0}  )
    &\geq  %
    P(\{\bm{0}\notin \mathcal{L}\}).
    \end{align*}
\end{proposition}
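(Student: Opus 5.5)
The plan is to argue by event inclusion: the complement event $\{\bm{0}\notin\mathcal{L}\}$ is contained in the event $\{\mathrm{BP}(\bm{H}_\mathsf{s},\bm{y}_0)\neq\bm{0}\}$. Monotonicity of probability then gives the claimed inequality. Equivalently, I will show that whenever the linear-subcode path of the batch returns $\bm{0}$, the all-zero codeword is contained in the list $\mathcal{L}$.

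The key steps, in order, are as follows.

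\textbf{Step 1 (placement of $\bm{0}$).} Under the transmission of $\bm{0}$, the only path of the \ac{aSCED} batch whose subcode contains $\bm{0}$ is the linear path $\mathrm{BP}(\bm{H}_\mathsf{s},\cdot)$. Every affine path decodes over a coset $\bm{x}_{\mathsf{a},i}+\sC$ with non-zero affine syndrome, so $\bm{0}$ does not belong to any of these cosets. This structural fact explains why a single path suffices for the bound, but it is not needed for the inclusion itself.

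\textbf{Step 2 (the inclusion).} Suppose $\mathrm{BP}(\bm{H}_\mathsf{s},\bm{y}_0)=\bm{0}$. Since $\bm{0}\in\mathcal{C}$, this path produces a valid codeword of $\mathcal{C}$. I invoke the definition of $\mathcal{L}$ from Sec.~\ref{sec:ed:basics}. At least one path outputs a codeword of $\mathcal{C}$, so the first case of the definition applies, and $\mathcal{L}$ consists of all path outputs lying in $\mathcal{C}$. This set includes $\hat{\bm{x}}_\ell=\bm{0}$, hence $\bm{0}\in\mathcal{L}$. Taking contrapositives, $\{\bm{0}\notin\mathcal{L}\}\subseteq\{\mathrm{BP}(\bm{H}_\mathsf{s},\bm{y}_0)\neq\bm{0}\}$.

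\textbf{Step 3 (probabilities).} Applying $P(\cdot)$ to both sides of the inclusion from Step 2 gives the stated inequality. This step is immediate.

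\textbf{Where care is needed.} The main subtlety is the termination behaviour rather than any hard estimate. I must make sure the output of the linear path is unaffected by the ensemble-level stopping rule stated in Lemma~\ref{prop:affine_bp}, namely that a path terminates when $\bm{H}\hat{\bm{x}}=\bm{0}$. I would argue that this only stops a path early at a codeword of $\mathcal{C}$. The quantity $\mathrm{BP}(\bm{H}_\mathsf{s},\bm{y}_0)$ should therefore be read as the output of that path inside the ensemble, and the inclusion holds regardless of the other paths.

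If the batch is part of a larger ensemble, other batches only add elements to $\mathcal{L}$. They can change only which case of the definition applies, and the first case still retains $\bm{0}$. The inequality therefore continues to hold, and it is typically strict because other paths can also recover $\bm{0}$.
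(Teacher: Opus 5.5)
Your inclusion argument is correct, and it proves the stated inequality exactly, which the paper's own proof does not quite do. The paper never states the inclusion $\{\bm{0}\notin\mathcal{L}\}\subseteq\{\mathrm{BP}(\bm{H}_\mathsf{s},\bm{y}_0)\neq\bm{0}\}$. Instead, it neglects the outcomes in which some affine path of the batch returns $\bm{0}$ while the linear path fails, and appeals to simulations to say these outcomes are rare. An affine path can return $\bm{0}$ only by an accident of the hard decision, since $\bm{0}$ lies in none of the cosets.

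Those neglected outcomes are precisely the slack $\{\mathrm{BP}(\bm{H}_\mathsf{s},\bm{y}_0)\neq\bm{0}\}\setminus\{\bm{0}\notin\mathcal{L}\}$. The paper's reasoning is therefore really an argument for approximate equality. That is what justifies the word ``tightly'' in Remark~\ref{rem:az_sim_block}, and the replacement of the batch LEP by the linear-path LER in Remark~\ref{rem:az_sim}.

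Your route needs no approximation or empirical input for the inequality itself. On its own, however, it says nothing about tightness. For tightness, your Step~1 (only the linear path's subcode contains $\bm{0}$) has to be combined with the paper's observation that affine paths essentially never output $\bm{0}$.

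Your choice to read $\mathrm{BP}(\bm{H}_\mathsf{s},\bm{y}_0)$ as the in-ensemble output is the right one, and it is actually needed. Suppose instead it meant a stand-alone decoder that stops only when $\bm{H}_\mathsf{s}\hat{\bm{x}}^\mathsf{T}=\bm{0}$. Then the in-ensemble path, which also stops on $\bm{H}\hat{\bm{x}}^\mathsf{T}=\bm{0}$, could halt earlier at a nonzero codeword of $\mathcal{C}\setminus\mathcal{C}_\mathsf{s}$. In that case stand-alone success would no longer force $\bm{0}\in\mathcal{L}$.
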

\begin{IEEEproof}
\pbcol{We neglect outcomes where at least one affine path estimates $\bm{0}$
 but $\mathrm{BP}(\bm{H}_\mathsf{s},\bm{y}_0)\neq \bm{0}$}.
Although theoretically possible due to the hard decision of \ac{BP}, this event is highly unlikely, which is confirmed by our simulations.    
\end{IEEEproof}

\begin{remark}\label{rem:az_sim_block}
The \ac{LEP} of an \ac{aSCED} batch can be (tightly) upper-bounded by $P(\mathrm{BP}(\bm{H}_\mathsf{s},\bm{y}_0) \neq \bm{0})$,
 which can be estimated as the \ac{LER} of Monte-Carlo simulations under the all-zero assumption, simulating only the linear subcode path and ignoring the ${2^\Delta - 1}$ affine paths.
\end{remark}
\begin{remark}\label{rem:az_sim}
\pbcol{%
In the absence of \ac{ML} errors, the \ac{FER} of an \ac{aSCED} ensemble comprising $L$ \ac{aSCED} batches is identical to the \ac{LER} of Monte-Carlo simulations using the all-zero codeword assumption and evaluating only the $L$ linear subcode paths, where the candidate list of the full ensemble is the union of the candidate lists generated by the individual \ac{aSCED} batches. }
\end{remark}

\section{
Structural Advantage of Subcodes for \ac{BP}
}\label{sec:structural_advantages}

\subsection{Optimized Parity-Check Matrices of Subcodes}
\pbcol{
In the very short block length regime, e.g., for $n \leq 128$, many codes with a large minimum Hamming distance are known, yet their dense PCMs often contain many short cycles of length $4$, which are particularly harmful for BP and are referred to as $4$-cycle.
Hence, approaches to optimize PCMs for BP aim at improving their sparsity or eliminating $4$-cycles\cite{vasic_4cyc_orthogon,yifei_sspcm,yedidia_4cyc_removal,kumar_milen_4cyc_rem,yifei_univ_bp}. 
In such approaches, starting from the PCM $\bm{H}\in \mathbb{F}_2^{m\times n}$ of a code, one constructs an optimized PCM $\bm{H}^\star\in \mathbb{F}_2^{m^\star\times n^\star}$, $m^\star\geq m$, $n^\star\geq n$, with improved properties for BP. 
If $m^\star> m$ or $n^\star> n$, we say that the approach introduces $m^\star- m$ auxiliary CNs (ACNs) or $n^\star- n$ auxiliary VNs (AVNs), respectively,
which ensure that ${\forall\bm{x}\in \mathcal{C}(\bm{H}^\star)\implies \bm{x}_{[n]}\in \mathcal{C}(\bm{H})}$. }

\pbcol{
In particular, we are interested in finding optimized \acp{PCM} of subcodes, i.e., starting from $\bm{H}_\ell$ according to \eqref{eq:inducing_subcode}, identify $\bm{H}^\star_\ell$.
We revisit the construction of \acp{ssPCM} \cite{yifei_sspcm} and use \acp{ssPCM} as optimized PCMs $\bm{H}^\star_\ell$ of the respective aSCED batches. }

\subsection{Revisiting Structured Sparse Parity-Check Matrices}\label{sec:ssPCM}
\pbcol{
Building on the seminal work of Yedidia et al.\cite{yedidia_4cyc_removal}, both \cite{kumar_milen_4cyc_rem} and \cite {vasic_4cyc_orthogon} independently propose a $4$-cycle removal strategy, for which, with a single (ACN, AVN) pair, multiple $4$-cycles can be removed.
Consider a \ac{PCM} $\bm{H}$. 
If there exist index sets $\mathcal{R}\subset[m]$ and $\mathcal{T}\subset[n]$ with $|\mathcal{R}|,|\mathcal{T}|\geq2$ such that
${\bm{H}_{\mathcal{R},\mathcal{T}}=\bm{1}_{|\mathcal{R}|\times|\mathcal{T}|}}$, then the Tanner graph associated with $\bm{H}$ contains $4$-cycles.
The process of removing the associated $4$-cycles is detailed in Algorithm~\ref{alg:4cycle_elim}, which returns a (first) optimized PCM $\bm{H}^\star\in \mathbb{F}_2^{(m+1)\times (n+1)}$ 
 with $\bm{H}^\star_\mathcal{R,T}= \bm{0}_{|\mathcal{R}|\times |\mathcal{T}|}$ and removes all $4$-cycles associated with the pair of index sets $\mathcal{R}$ and $\mathcal{T}$.
 The Tanner graph associated with $\bm{H}^\star$ may still have (other) $4$-cycles, which can be eliminated by iteratively repeating this process, introducing more auxiliary nodes\cite{vasic_4cyc_orthogon}.
 }

 \pbcol{If $m^\star>m$, \ac{BP} using $\bm{H}^\star$ operates on a  \ac{PCM} of increased size that contains AVNs.}
 These AVNs are only used in the decoder, and their bit-wise \ac{LLR} is set to $0$; consequently, \pbcol{during decoding, they behave similarly to punctured bits and effectively introduce erasures that must be resolved by the decoder.
We therefore have a tradeoff in finding a suitable number of AVNs:
Fewer may leave residual cycles, while too many may degrade the \ac{BP} performance.} 
Recent refinements further optimize the PCM by avoiding low-degree CNs and maximizing the heuristic approximate cycle extrinsic message degree of 6-cycles \cite{wesel_ace}, which correlates with a larger minimum size of trapping sets\cite{yifei_univ_bp}.
In the follow-up work \cite{yifei_sspcm}, this strategy was further optimized, introducing \acp{ssPCM}.

\begin{algorithm}[bt!]
\caption{Elimination of $4$-cycles \cite{vasic_4cyc_orthogon,kumar_milen_4cyc_rem}}\label{alg:4cycle_elim}
\begin{algorithmic}[1]
\STATE \textbf{Input:} PCM $\bm{H}\in\mathbb{F}_2^{m\times n}$, index sets $\mathcal R\subseteq[m]$, $\mathcal T\subseteq[n]$
\STATE \textbf{Output:} Optimized PCM $\bm{H}^\star\in\mathbb{F}_2^{(m+1)\times (n+1)}$
\STATE \textbf{Assume:} $\bm{H}_{\mathcal R,\mathcal T}= \bm{1}_{|\mathcal R|\times|\mathcal T|}$ with $|\mathcal R|,|\mathcal T|\geq 2$
\STATE $\bm{H}\gets\begin{pmatrix}     \bm{H} &\bm{0}_{m\times 1} \end{pmatrix}\in \mathbb{F}_2^{m\times (n+1)}$ (introduce AVN)
\STATE $\bm{H}^\star\gets[\bm{H} ; \bm{1}_{\mathcal{T}\cup \{n+1\}}]\in \mathbb{F}_2^{(m+1)\times (n+1)}$(introduce ACN)
\FOR{each $r\in\mathcal R$}
    \STATE $\bm{H}^\star_{r,:} \gets \bm{H}^\star_{r,:} + \bm{H}^\star_{m+1,:}$
\ENDFOR
\STATE \textbf{Return} $\bm{H}^\star$
\end{algorithmic}
\end{algorithm}

\begin{algorithm}[bt!]
\caption{Generation of search space matrix $\bm{S}$ \cite{yifei_sspcm}}\label{alg:search_space}
\begin{algorithmic}[1]
\STATE \textbf{Input:} Max. search space size $S_\mathrm{max}$, PCM $\bm{H}$ of $\mathcal{C}$
\STATE \textbf{Output:} Search space matrix $\bm{S}$
\STATE Initialize $d \gets d_\mathrm{min}^\bot$; $\bm{S}\gets \emptyset$
\WHILE{$\mathrm{number\_rows}(\bm{S})\!< \! S_\mathrm{max}$ and ${\mathrm{rank}(\bm{S}) < \!n-k}$}
\STATE  $\bm{s},d_\mathrm{new}\gets \text{find\_next\_min\_weight\_codeword}(\bm{H})$
\STATE $\bm{S} \gets [\bm{S}; \bm{s}]$
    \IF{$ d_\mathrm{new}>d$ \AND $\mathrm{rank}(\bm{S})=n-k$}
        \STATE \textbf{Return} $\bm{S}$
    \ENDIF
    \STATE $d\gets d_\mathrm{new}$
\ENDWHILE
\STATE \textbf{Return} $\bm{S}$
\end{algorithmic}
\end{algorithm}

\begin{algorithm}[bt!]
\caption{Generation of \ac{ssPCM} \cite{yifei_sspcm}}\label{alg:sspcm}
\begin{algorithmic}[1]
\STATE \textbf{Input:} Search space matrix $\bm{S}$, max. weight $w_\mathrm{max}$
\STATE \textbf{Output:} $\bm{H}_{\mathrm{ssPCM-I}}$ and $\bm{H}_{\mathrm{ssPCM-II}}$
\STATE Initialize $\bm{H}_\mathrm{ssPCM}\gets\emptyset,\,\, \bm{P} \gets\emptyset$, check\_rank$\gets$\TRUE
\WHILE{$\mathrm{wt}(\bm{H}_\mathrm{ssPCM})<w_\mathrm{max}$ and $\bm{S}\neq\emptyset$}
\FOR{$1\leq i<j\leq \mathrm{number\_rows}(\bm{S})$}%
    \STATE $\bm{P}_\mathrm{cand.},\bm{P}^\star_\mathrm{cand.}\gets \mathrm{identify\_pcrb}(i,j,\bm{S})$
    \STATE $\bm{P}^\star\gets \mathrm{select}(\bm{P}^\star, \bm{P}^\star_\mathrm{cand.})$
\ENDFOR
\STATE $\bm{H}_\mathrm{ssPCM}\gets [ \bm{H}_\mathrm{ssPCM} ; \bm{P}^\star] $ 
\STATE $\bm{S}\gets \bm{S}.\mathrm{remove\_rows}(\bm{P})$
\IF{check\_rank \AND $\mathrm{rank}(\bm{H}_\mathrm{ssPCM})=n-k$ }
\STATE $\bm{H}_{\mathrm{ssPCM-I}}\gets \bm{H}_\mathrm{ssPCM}$; 
\STATE check\_rank$\gets$\FALSE 
\ENDIF
\ENDWHILE
\STATE $\bm{H}_{\mathrm{ssPCM-II}}\gets \bm{H}_\mathrm{ssPCM}$
\STATE \textbf{Return} $\bm{H}_{\mathrm{ssPCM-I}}$ and $\bm{H}_{\mathrm{ssPCM-II}}$
\end{algorithmic}
\end{algorithm}

The construction of \acp{ssPCM} starts by generating a search space of low-weight dual codewords, organized as a search matrix $\bm{S}$, as summarized in Algorithm~\ref{alg:search_space}.
Ideally, the matrix $\bm{S}$ should consist of $S_\mathrm{max}$ minimum-weight dual codewords.
However, if required to fulfill the rank condition ${\mathrm{rank}(\bm{S})=n-k}$, we include higher-weight rows.
The function find\_next\_min\_weight\_codeword() returns the next dual codeword such that its Hamming weight is minimum among all dual codewords not considered so far, along with its weight. \pbcol{This function can be implemented using, e.g., \cite{leon_search}.}

\pbcol{The basic step of designing \acp{ssPCM} described in Algorithm~\ref{alg:sspcm} is to identify 
\emph{\acp{PCRB}} $\bm{P}$ in $\bm{S}$ as follows:
Consider a set of column indices ${\mathcal{T}\subset[n]}$ with ${t=|\mathcal{T}|}$.
A PCRB $\bm{P}$ consists of $s$ rows of $\bm{S}$ such that ${\bm{P}_{[s],\mathcal{T}}= \bm{1}_{s\times t}}$ and $\mathcal{T}$ is \emph{maximal} with this property, i.e., for any 
$\mathcal{T}'\subseteq[n]$ with $\mathcal{T}\subset \mathcal{T}'$  and ${t'=|\mathcal{T}'|}$:
  $\bm{P}_{[s],\mathcal{T}'}\not= \bm{1}_{s\times t'}$.
 Given such a PCRB $\bm{P}$, Algorithm~\ref{alg:4cycle_elim} with $(\mathcal{R}=[s],\mathcal{T})$ yields a 4-cycle free $\bm{P}^\star$\cite{yifei_sspcm}.
 }

\jmcol{An \ac{ssPCM} is generated from $\bm{S}$ by successively identifying such $\bm{P}$ and $\bm{P}^\star$, as detailed in Algorithm~\ref{alg:sspcm}: 
To identify candidate $\bm{P}$ of different size, i.e., with different $s, t$, we
iterate over all row pairs of $\bm{S}$ and apply $\mathrm{identify\_pcrb(\cdot)}$, which identifies all additional ${s-2}$ rows intersecting in the same $t$ positions and returns the associated candidate $\bm{P}$ (and $\bm{P}^\star$) denoted $\bm{P}_\text{cand.}$ (and $\bm{P}^\star_\text{cand.}$).
Among all candidates $\bm{P}^\star_\text{cand.}$, $\mathrm{select(\cdot)}$ selects the best PCRB according to the following criteria, in order of priority, such that criterion~$i$+1) only is used if several \acp{PCRB} are equivalent regarding criterion $i$):
}

\begin{enumerate}
    \item Maximize rank of $[\bm{H}_{\mathrm{ssPCM}}; \bm{P}^\star]$
    \item Maximize block size $s$;
    \item Minimize number of $4$-cycles in $[\bm{H}_{\mathrm{ssPCM}}; \bm{P}^\star]$
    \item Minimizes variance of information column weights
\end{enumerate}
Note that, in contrast to \cite{yifei_sspcm}, we reordered rules 1) and 2), i.e., we prioritize maximizing the rank over maximizing the block size $s$, which resulted in a more stable behavior of the implemented algorithm.

After iterating over all row pairs in $\bm{S}$, the selected $\bm{P}^\star$ is appended to $\bm{H}_{\mathrm{ssPCM}}$, and each row in the search space is extended by one $0$ to preserve the structure when appending future $\bm{P}^\star$.
The dual codewords forming the selected $\bm{P}$ are then removed from $\bm{S}$, and the process repeats by iterating over all row pairs until an \ac{ssPCM} $\bm{H}^\star\in\mathbb{F}_2^{m'\times n'}$ satisfying ${\mathrm{rank}(\bm{H})=n-k}$ is obtained, which is denoted $\bm{H}_{\mathrm{ssPCM-I}}$.
Algorithm~\ref{alg:sspcm} continues beyond $\bm{H}_{\mathrm{ssPCM-I}}$ to construct larger \acp{ssPCM} with enhanced cycle connectivity, stopping once a predefined criterion is met, e.g., a maximum weight $w_\mathrm{max}$.

Our simulations show that \ac{NMSA} combined with \acp{ssPCM} provides strong error-correction performance.
In contrast, \ac{SPA} requires an additional normalization constant; therefore, when using \acp{ssPCM}, we employ \ac{NSPA} with normalization factor $\alpha$.%

\subsection{Structural Advantage of Subcodes}\label{subsec:structual_adv}

\pbcol{We propose employing \acp{ssPCM} as optimized PCM $\bm{H}_\ell^\star$ for an \ac{aSCED} batch.
Starting from the PCM $\bm{H}_\ell$ of a linear subcode, Algorithm~\ref{alg:search_space} yields a search space matrix $\bm{S}_\ell$ consisting of dual codewords of the subcode.
Then, we construct \acp{ssPCM} tailored to the subcode $\mathcal{C}_\ell$ using Algorithm~\ref{alg:sspcm}, i.e., yielding $\bm{H}_\ell^\star$.}

\pbcol{
Using \acp{ssPCM} tailored to subcodes provides a structural advantage for \ac{BP}. Since ${\sC\subset_\mathrm{s}\mathcal{C}\implies \sC^\bot\supset_\mathrm{s}\mathcal{C}^\bot }$,
the minimum Hamming weight $\mathrm{d}_\text{min}^\bot$ of $\mathcal{C}^\bot$ bounds the minimum Hamming weight $\mathrm{d}_{\ell,\text{min}}^\bot$ of $\sC^\bot$ from above.
Hence, via linear combinations, the linearly appended row in \eqref{eq:inducing_subcode} can introduce additional dual codewords of the subcode $\bm{x}\notin \mathcal{C}^\bot$ with $\mathrm{wt}(\bm{x})<\mathrm{d}_\text{min}^\bot$.
By appending a linearly independent row with weight ${d_\mathsf{c}<\mathrm{d}_\text{min}^\bot}$ to $\bm{H}$, we can ensure the existence of such vectors and $\bm{S}_\ell$ contains rows with weight lower than all rows in $\bm{S}$. 
Indeed, we observe the following:
The minimum weight $\mathrm{d}_\text{min}^\bot$ of the dual code of the \ac{BCH} code $\mathcal{C}(63,30)$ is $12$. 
After appending to $\bm{H}$ a randomly selected linearly independent row of weight $d_\mathsf{c}=6$ and running Algorithm~\ref{alg:search_space} with $S_\mathrm{max}=\num{1000}$,
 the resulting search space matrix $\bm{S}_\ell$ contained one row of weight $6$, and $10$ and $220$ rows of weight $8$ and $10$, respectively. 
 Consequently, Algorithm~\ref{alg:sspcm} using $\bm{S}_\ell$  rather than $\bm{S}$ typically results in $\bm{H}_\ell^\star$ being sparser than $\bm{H}_\ell$, and thus, providing a structural advantage for BP.
}

\section{Numerical Results}\label{sec:res}
To evaluate the performance of different decoding schemes, we conduct Monte Carlo simulations using a \ac{BI-AWGN} channel. 
We use the notation $\{\text{aSCED},\text{SCED},\text{MBBP},\text{AED}\}$-$M$ to denote the respective ensemble decoding technique employing $M$ parallel paths.
Unless stated otherwise, each stand-alone decoder and each path employs \ac{BP} with a flooding schedule and the early stopping criterion $\bm{H}\hat{\bm{x}}^\mathsf{T}=\bm{0}$. For every simulated SNR point, we collect at least $200$ frame errors. If available,  we include the performance of \ac{OSD} with order~$t$, denoted as \ac{OSD}-$t,$ as an estimate of the \ac{ML} performance.

We first evaluate the performance of \ac{aSCED} for LDPC codes. Subsequently, we consider \ac{BCH} codes highlighting \pbcol{the structural advantage for BP} arising from the use of subcodes.

\begin{remark}
    PCMs and C++ implementation are available in  \href{https://github.com/JonathanMandelbaum/aSCED.git}{https://github.com/JonathanMandelbaum/aSCED.git}.
\end{remark}

\subsection{Performance \ac{aSCED}-\ac{BP} for LDPC Codes}

We consider two short-length LDPC codes, namely the 5G LDPC code $\mathcal{C}_{\mathrm{5G}}(132,66)$ and the CCSDS LDPC code $\mathcal{C}_\mathrm{CCSDS}(256,128)$.
Both codes are quasi-cyclic, which, \pbcol{using techniques from \cite{stuttgart_ldpc_aed}}, allows us to include the performance of \ac{AED} as a reference.
\pbcol{
To show that aSCED can improve different BP variants}, we employ the \ac{NMSA} with normalization constant ${\alpha=\frac{3}{4}}$ for $\mathcal{C}_{\mathrm{5G}}(132,66)$,
whereas for $\mathcal{C}_\mathrm{CCSDS}(256,128)$, we use the \ac{SPA}.
 In line with \cite{mandelbaum_subcode_2025}, since well-performing PCMs $\bm{H}$ are known for both codes by design,
 we fix ${\mathrm{Dec}_1=\mathrm{BP}(\bm {H},\cdot)}$ and consider ensembles of size $M=\tilde{M}+1$, where $\tilde{M}$ is the number of auxiliary paths.

\pbcol{We use $\Delta=1$}, i.e., we append a single linearly independent row to $\bm{H}$ and each \ac{aSCED} batch contributes two decoders.
Thus, the ensemble size is ${M=2\tilde{L}_\mathsf{s}+1}$, where $\tilde{L}_\mathsf{s}$ denotes the number of constructed PCMs $\bm{H}_\ell$ of the linear subcodes constituting an \ac{aSCED} batch.
To select well-performing ensembles, we follow the procedure described in \cite[Sec.~VI-A]{mandelbaum_subcode_2025}.

\begin{figure}
    \centering
    \begin{tikzpicture}[scale=0.92,spy using outlines={rectangle, magnification=2}]

\begin{axis}[%
width=.9\columnwidth,
height=\ferfigheight,
at={(0.758in,0.645in)},
scale only axis,
xmin=1,
xmax=5,
xlabel style={font=\color{white!15!black}},
xlabel={$E_{\mathrm{b}}/N_0$ ($\si{dB}$)},
ymode=log,
ymin=1e-06,
ymax=1,
yminorticks=true,
ylabel style={font=\color{white!15!black}},
ylabel={FER},
axis background/.style={fill=white},
xmajorgrids,
ymajorgrids,
legend style={at={(0.004,0.005)}, anchor=south west, legend cell align=left, align=left, draw=white!15!black,font=\scriptsize}
]

\addplot[color=kit-royalblue80,solid,line width = 1pt,mark=x, mark options={solid}]
table[row sep=crcr]{
 1.0  5.467980e-01\\
 1.5  3.553223e-01\\
 2.0  1.798365e-01\\
 2.5  7.278189e-02\\
 3.0  2.111038e-02\\
 3.5  4.332920e-03\\
 4.0  7.442976e-04\\
 4.5  1.148381e-04\\
 5.0  1.290430e-05\\
};
\addlegendentry{NMSA \cite{mandelbaum_subcode_2025}};

\addplot[color=kit-royalblue80,dotted,line width = 1pt,mark=x, mark options={solid}]
table[row sep=crcr]{
 1.00  5.141388e-01\\
 1.50  2.551020e-01\\
 2.00  1.264223e-01\\
 2.50  4.380201e-02\\
3.00  1.090275e-02\\
 3.50  2.144956e-03\\
 4.00  3.055329e-04\\
 4.50  4.053300e-05\\
 5.00  4.830237e-06\\
};
\addlegendentry{NMSA-$352$ \cite{mandelbaum_subcode_2025}};

\addplot[color=KITgreen,dashed,line width = 1pt,mark=triangle, mark options={solid}]
table[row sep=crcr]{
 1.00  4.926108e-01\\
 1.50  2.998501e-01\\
 2.00  1.362398e-01\\
 2.50  4.621072e-02\\
 3.00  1.192677e-02\\
 3.50  2.103359e-03\\
 4.00  2.896100e-04\\
4.50  3.245119e-05\\
 5.00  3.464770e-06\\
};
\addlegendentry{AED-$11$ \cite{mandelbaum_subcode_2025}};

\addplot[color=KITorange,line width = 1pt,mark=diamond*, mark options={solid}, dashed]
table[row sep=crcr]{
 1.00  4.464286e-01\\
 1.50  2.743484e-01\\
 2.00  1.147447e-01\\
 2.50  4.145078e-02\\
 3.00  8.008329e-03\\
 3.50  1.445713e-03\\
 4.00  1.989583e-04\\
4.50  1.514168e-05\\
 5.00  1.494075e-06\\
};
\addlegendentry{SCED-$11$ \cite{mandelbaum_subcode_2025}};

\addplot[color=KITred,line width = 1pt,mark=diamond*, mark options={solid}]
table[row sep=crcr]{
 1.00  5.122951e-01\\
 1.50  2.467917e-01\\
 2.00  1.062473e-01\\
 2.50  3.351206e-02\\
 3.00  7.658470e-03\\ %
3.50  1.160389e-03\\
 4.00  1.392763e-04\\
 4.50  1.281535e-05\\
 5.00  1.007620e-06\\
};
\addlegendentry{aSCED-11};

\addplot[color=black,line width = 1pt, solid]
table[col sep=comma]{
1.00, 1.120e-01
1.50, 3.609e-02
2.00, 9.891e-03
2.50, 1.623e-03
3.00, 2.514e-04
};
\label{plot:5g132_osd}
\addlegendentry{OSD-4 \cite{stuttgart_ldpc_aed}};

\coordinate (spypoint) at (axis cs:3.41,0.0005);
			\coordinate (spyviewer) at (axis cs:4,0.025);	
			\spy[width=2.2cm,height=1.25cm, thin, spy connection path={\draw(tikzspyonnode.south west) -- (tikzspyinnode.south west);\draw (tikzspyonnode.south east) -- (tikzspyinnode.south east);
			\draw (tikzspyonnode.north west) -- (tikzspyinnode.north west);\draw (tikzspyonnode.north east) -- (intersection of  tikzspyinnode.north east--tikzspyonnode.north east and tikzspyinnode.south east--tikzspyinnode.south west);
			;}] on (spypoint) in node at (spyviewer);
		\coordinate (a) at ($(axis cs:-10.8/1.4,-0.12)+(spyviewer)$);
		\coordinate[label={[font=\small,text=black]right:$10^{-2}$}] (b) at ($(axis cs:+10.8/1.4,-0.12)+(spyviewer)$);
\end{axis}

\end{tikzpicture}%
    \caption{Decoder performances for the 5G LDPC code $\mathcal{C}_\mathrm{5G}(132,66)$. All \ac{BP} decodings employ the \ac{NMSA} using a normalization constant of $\alpha=\frac{3}{4}$. Decoder \ac{NMSA}-$352$ uses a maximum of $I_\mathrm{max}=352$ iterations; all other \ac{BP} decoders use $I_\mathrm{max}=32$ iterations. The \ac{aSCED} ensemble employs $\Delta=1$.}
    \label{fig:5G_LDPC_132_66}
\end{figure}
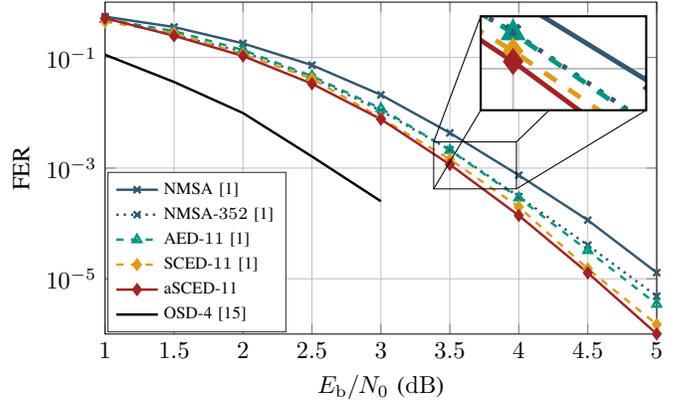

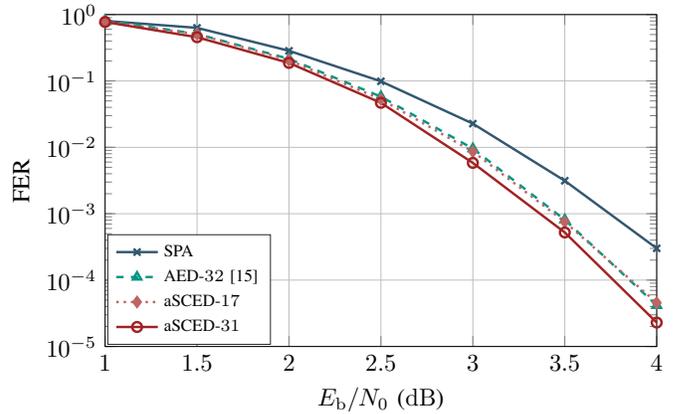
\begin{figure}
    \centering
    \begin{tikzpicture}[scale=0.92,spy using outlines={rectangle, magnification=2}]

\begin{axis}[%
width=.9\columnwidth,
height=\ferfigheight,
at={(0.758in,0.645in)},
scale only axis,
xmin=1,
xmax=4,
xlabel style={font=\color{white!15!black}},
xlabel={$E_{\mathrm{b}}/N_0$ ($\si{dB}$)},
ymode=log,
ymin=1e-05,
ymax=1,
yminorticks=true,
ylabel style={font=\color{white!15!black}},
ylabel={FER},
axis background/.style={fill=white},
xmajorgrids,
ymajorgrids,
legend style={at={(0.004,0.005)}, anchor=south west, legend cell align=left, align=left, draw=white!15!black,font=\scriptsize}
]

\addplot[color=kit-royalblue80,line width = 1pt,mark=x, mark options={solid}]
table[row sep=crcr]{
 1.00  8.032129e-01\\
 1.50  6.309148e-01\\
 2.00  2.844950e-01\\
 2.50  9.881423e-02\\
 3.00  2.275313e-02\\
 3.50  3.136320e-03\\
 4.00  3.009492e-04\\
 4.50  2.150000e-05\\
};
\addlegendentry{SPA};

\addplot[color=KITgreen,dashed,line width = 1pt,mark=triangle, mark options={solid}]
table[col sep=comma]{
1.00, 8.203e-01
1.50, 5.111e-01
2.00, 2.153e-01
2.50, 5.811e-02
3.00, 9.559e-03
3.50, 8.103e-04
4.00, 4.165e-05
};
\addlegendentry{AED-$32$ \cite{stuttgart_ldpc_aed}};

\addplot[color=KITred!70,line width = 1pt,dotted ,mark=diamond*, mark options={solid}]
table[row sep=crcr]{
 1.00  7.886435e-01\\
 1.50  5.070994e-01\\
 2.00  2.055921e-01\\
 2.50  5.298855e-02\\
 3.00  8.643042e-03\\
 3.50  7.548993e-04\\
 4.00  4.552593e-05\\
};
\addlegendentry{aSCED-$17$};

\addplot[color=KITred,line width = 1pt,mark=o, mark options={solid}]
table[row sep=crcr]{
 1.00  7.731959e-01\\
 1.50  4.559271e-01\\
 2.00  1.869159e-01\\
 2.50  4.682379e-02\\
 3.00  5.830224e-03\\
 3.50  5.220742e-04\\
 4.00  2.294130e-05\\
};
\addlegendentry{aSCED-$31$};

\end{axis}

\end{tikzpicture}%
    \caption{Performance of \ac{BP}-based decoders for the CCSDS code $\mathcal{C}_\mathrm{CCSDS}(256,128)$.
    All \ac{BP} decoders employ the \ac{SPA} with a maximum of $I_\mathrm{max}=32$ iterations. Both \ac{aSCED} ensembles use $\Delta=1$.}
    \label{fig:ccsds256_128}
\end{figure}
 
\pbcol{Fig.~\ref{fig:5G_LDPC_132_66} and Fig.~\ref{fig:ccsds256_128} show the \ac{FER} performance of stand-alone \ac{BP}, \ac{AED}-$M$, and the novel \ac{aSCED}-$M$
for the code $\mathcal{C}_{\mathrm{5G}}(132,66)$ and code $\mathcal{C}_\mathrm{CCSDS}(256,128)$, respectively.
 All \ac{BP} decodings employ a maximum of ${I_\mathrm{max}=32}$ iterations. 
Additionally, in Fig.~\ref{fig:5G_LDPC_132_66}, we include the performance of SCED-$11$ and stand-alone \ac{BP} with ${I_\mathrm{max}=352}$ iterations, denoted NMSA-$352$, as reported in \cite{mandelbaum_subcode_2025}.
}
Under the assumption that all ensemble paths are executed in parallel, the BP using ${I_\mathrm{max}=32}$ serves as a reference for equal worst-case latency $\max_{i\in[M]} I_{\mathrm{max},i}$,
whereas NMSA-$352$ represents an equal-worst-case computational complexity scenario.
Since all PCMs have a comparable number of edges, the worst-case decoding complexity is governed by the maximum total number of iterations $\sum_{i\in[M]} I_{\mathrm{max},i}$.

In Fig.~\ref{fig:5G_LDPC_132_66}, the performance of \ac{aSCED}-$11$ matches or surpasses that of \ac{SCED}-11.
Importantly, this gain is accompanied by advantages from an implementation perspective: 
all decoders within one \ac{aSCED} batch share the same graph with slightly different CN updates, enabling potential hardware reuse and increased architectural flexibility. 
Furthermore, the construction of an \ac{aSCED} ensemble only requires half the number of suitable PCMs compared to \ac{SCED}, thereby simplifying the design process. 
Finally, we verify coverage properties by evaluating the fraction of codewords that are not decoded by any auxiliary path of the ensemble of \ac{SCED}-$11$, \pbcol{which consists of decoding on $\bm{H}$ and on $10$ auxiliary paths using proper subcodes~\cite{mandelbaum_subcode_2025}.
Out of $10{,}000$ randomly sampled codewords from code $\mathcal{C}_{\mathrm{5G}}$, a total of $11$ codewords were not covered by any of the $10$ proper subcodes. 
Thus, there exist codewords that are \emph{less protected} since they can only be recovered by the path using $\bm{H}$.
}
 However, by Design~Principle~\ref{dsgn:cover_with_cosets}, \ac{aSCED} guarantees \emph{uniform protection} across all codewords.

\pbcol{In Fig.~\ref{fig:5G_LDPC_132_66} and Fig.~\ref{fig:ccsds256_128}, when comparing to the other decodings, \ac{aSCED}-$M$ consistently outperforms stand-alone \ac{BP} decoding and \ac{AED}-$M$ under equal worst-case latency and equal worst-case complexity. 
Specifically, at an FER of $10^{-3}$, in Fig.~\ref{fig:5G_LDPC_132_66}, \ac{aSCED}-11 achieves gains of $0.4$\,dB and $0.2$\,dB over NMSA and \ac{AED}-11, respectively. In Fig.~\ref{fig:ccsds256_128}, the lower-complexity aSCED-$17$ matches AED-$32$, while at equal complexity, aSCED-$32$ outperforms AED.}

\subsection{\ac{BCH} Codes}

Throughout this section, unless stated otherwise, all \ac{BP} decoders employ the \ac{NMSA} with normalization constant $\alpha=\frac{1}{2}$ and a maximum of $I_\mathrm{max}=20$ iterations.

\subsubsection{Analysis of \acp{ssPCM} of Subcodes}\label{sec:structural_advantages_results}
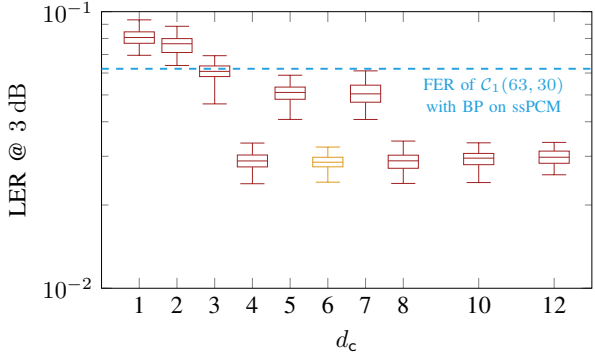
\begin{figure}
    \centering
    \begin{tikzpicture}[scale=0.92,spy using outlines={rectangle, magnification=2}]
\begin{axis}[
    width=0.8\columnwidth,
    height=4cm,
    at={(0.758in,0.645in)},
    scale only axis,
    boxplot/draw direction=y,
    cycle list name= rbr,%
    ylabel style={font=\color{white!15!black}},
    xlabel style={font=\color{white!15!black}},
    boxplot/draw direction=y,
    ylabel={LER @ $3\;\mathrm{dB}$},
    xlabel={$d_\mathsf{c}$},
    xmin=0,
    xmax=13,
    ymode=log,
    ymin=1e-02,
    ymax=1e-01,
    yminorticks=true,
    xtick={10},
    xtick={1,2,3,4,5,6,7,8,10,12},       %
    xticklabels={1,2,3,4,5,6,7,8,10,12},  %
]

\draw[KITcyanblue, thick, dashed] 
    (axis cs:0,6.216972e-02) -- (axis cs:13,6.216972e-02) 
    node[
    pos=0.8,
    anchor=north,
    align=center,
    inner sep=3pt
]
{\scriptsize FER of $\mathcal{C}_1(63,30)$\\[-3pt]\scriptsize with BP on ssPCM};

\addplot+[
    boxplot prepared={
        median=0.08064516,
        upper quartile=0.0845311,
        lower quartile=0.07684941,
        upper whisker=0.09341429,
        lower whisker=0.06954103
    },
    boxplot/draw position=1
] coordinates {}; %

\addplot+[
    boxplot prepared={
        median=0.07656968,
        upper quartile=0.07992912,
        lower quartile=0.07117440,
        upper whisker=0.08853475,
        lower whisker=0.06389776
    },
    boxplot/draw position=2
] coordinates {}; %

\addplot+[
    boxplot prepared={
        median=0.06083654,
        upper quartile=0.06364867,
        lower quartile=0.0582581,
        upper whisker=0.06932409,
        lower whisker=0.04640371
    },
    boxplot/draw position=3
] coordinates {};

\addplot+[
    boxplot prepared={
        median=0.02886214,
        upper quartile=0.03032494,
        lower quartile=0.02746595,
        upper whisker=0.03349523, %
        lower whisker=0.02387490
    },
        boxplot/draw position=4
] coordinates {}; %

\addplot+[
    boxplot prepared={
        median=0.05104645,
        upper quartile=0.05337256,
        lower quartile=0.04822184,
        upper whisker=0.05891016,
        lower whisker=0.04077472
    },
    boxplot/draw position=5
] coordinates {};

\addplot+[
    boxplot prepared={
        median=0.02855715,
        upper quartile=0.02976097,
        lower quartile=0.02748110,
        upper whisker=0.03239916,
        lower whisker=0.02418672
    },
        boxplot/draw position=6
] coordinates {}; %

\addplot+[
    boxplot prepared={
        median=0.05045424,
        upper quartile=0.05423367,
        lower quartile=0.04705895,
        upper whisker=0.06108735,
        lower whisker=0.04075810
    },
    boxplot/draw position=7
] coordinates {}; 

\addplot+[
    boxplot prepared={
        median=0.02888938,
        upper quartile=0.03027909,
        lower quartile=0.02714902,
        upper whisker=0.03405415,
        lower whisker=0.02392917
    },
        boxplot/draw position=8
] coordinates {}; %

\addplot+[
    boxplot prepared={
        median=0.02954210,
        upper quartile=0.03072669,
        lower quartile=0.02797986,
        upper whisker=0.03355705,
        lower whisker=0.02411091
    },
        boxplot/draw position=10
] coordinates {}; %

\addplot+[
    boxplot prepared={
        median=0.02977077,
        upper quartile=0.03132465,
        lower quartile=0.02830511,
        upper whisker=0.03368137,
        lower whisker=0.02572016
    },
        boxplot/draw position=12
] coordinates {}; %

\end{axis}
\end{tikzpicture}
    \caption{List Error Rate (LER) of \ac{aSCED} batches of subcodes of the \ac{BCH} code $\mathcal{C}_1(63,30)$ using different check node degrees $d_\mathsf{c}$ for $\bm{M}_\ell$ with $m_\ell=1$ and ssPCMs. The box plot at $d_\mathsf{c}=6$ is depicted in \textcolor{KITorange}{orange} to indicate that it corresponds to the same box plot shown in Fig.~\ref{fig:bch63_30:performance_over_edges}}
    \label{fig:bch63_30:sweep_dc}
\end{figure}

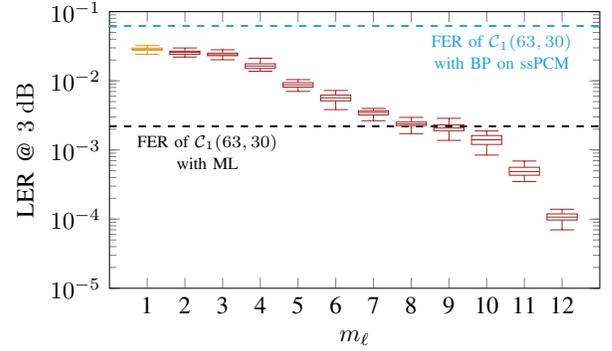
\begin{figure}
    \centering
    \begin{tikzpicture}[scale=0.92,spy using outlines={rectangle, magnification=2}]
\begin{axis}[
    at={(0.758in,0.645in)},
    scale only axis,
    boxplot/draw direction=y,
    cycle list name= brr,%
    ylabel style={font=\color{white!15!black}},
    xlabel style={font=\color{white!15!black}},
    ylabel={LER @ $3\;\mathrm{dB}$},
    xlabel={$m_\ell$},%
    ymode=log,
    ymin=1e-05,
    ymax=1e-01,
    yminorticks=true,
    xmin=0,
    xmax=13,
    xtick={1,2,3,4,5,6,7,8,9,10,11,12,14},
    xticklabels={1,2,3,4,5,6,7,8,9,10,11,12},
    width=0.8\columnwidth,
    height=4cm
]

\draw[KITcyanblue, thick, dashed] 
    (axis cs:0,6.216972e-02) -- (axis cs:13,6.216972e-02) 
        node[
    pos=0.8,
    anchor=north,
    align=center,
    inner sep=3pt
]
{\scriptsize FER of $\mathcal{C}_1(63,30)$\\[-3pt]\scriptsize with BP on ssPCM};

\draw[black, thick, dashed] 
    (axis cs:0,2.200e-03) -- (axis cs:13,2.200e-03) 
        node[
    pos=0.2,
    anchor=north,
    align=center,
    inner sep=3pt
]
{\scriptsize FER of $\mathcal{C}_1(63,30)$\\[-3pt]\scriptsize with ML};

\addplot+[
    boxplot prepared={
        median=0.02855715,
        upper quartile=0.02976097,
        lower quartile=0.02748110,
        upper whisker=0.03239916,
        lower whisker=0.02418672
    },
        boxplot/draw position=1
] coordinates {}; %

\addplot+[
    boxplot prepared={
        median=0.025795617140345922,
        upper quartile=0.02687631,
        lower quartile=0.02422334,
        upper whisker=0.029859659599880562,
        lower whisker=0.02206287920573635
    },
    boxplot/draw position=2
] coordinates {};%

\addplot+[
    boxplot prepared={
        median=0.023803897025452055,
        upper quartile=0.02517405,
        lower quartile=0.02306141,
        upper whisker=0.028105677346824058,
        lower whisker=0.020159258139300473
    },
    boxplot/draw position=3
] coordinates {};%

\addplot+[
    boxplot prepared={
        median=0.01634796,
        upper quartile=0.01762621,
        lower quartile=0.01509314,
        upper whisker=0.02121341,
        lower whisker=0.01376652
    },
    boxplot/draw position=4
] coordinates {};%

\addplot+[
    boxplot prepared={
        median=0.00872243746659591,
        upper quartile=0.00940347,
        lower quartile=0.00810356,
        upper whisker=0.01040582726326743,
        lower whisker=0.007032348804500703
    },
    boxplot/draw position=5
] coordinates {};%

\addplot+[
    boxplot prepared={
        median=0.00568359,
        upper quartile=0.00622464,
        lower quartile=0.00513059,
        upper whisker=0.00725058,
        lower whisker=0.00383120
    },
    boxplot/draw position=6
] coordinates {};%

\addplot+[
    boxplot prepared={
        median=0.003471291864402857,
        upper quartile=0.00368241,
        lower quartile=0.00320859,
        upper whisker=0.00398358761900968,
        lower whisker=0.0026479193973335453
    },
    boxplot/draw position=7
] coordinates {};%
\addplot+[
    boxplot prepared={
        median=0.00237240,
        upper quartile=0.00253425,
        lower quartile=0.00224779,
        upper whisker=0.00295840,
        lower whisker=0.00172209
    },
    boxplot/draw position=8
] coordinates {};%

\addplot+[
    boxplot prepared={
        median=0.0020749444704878223,
        upper quartile=0.00231299,
        lower quartile=0.00190029,
        upper whisker=0.00286820593718629,
        lower whisker=0.0013743720837542349
    },
    boxplot/draw position=9
] coordinates {};%

\addplot+[
    boxplot prepared={
        median=0.00140261,
        upper quartile=0.00161557,
        lower quartile=0.00120080,
        upper whisker=0.00188136,
        lower whisker=0.00084306
    },
    boxplot/draw position=10
] coordinates {};%

\addplot+[
    boxplot prepared={
        median=0.0004867580993192732,
        upper quartile=0.00056116,
        lower quartile=0.00043083,
        upper whisker=0.0006917064397869544,
        lower whisker=0.0003486379586550245
    },
    boxplot/draw position=11
] coordinates {};

\addplot+[
    boxplot prepared={
        median=0.000107,
        upper quartile=0.000119,
        lower quartile=0.000097,
        upper whisker=0.000139,
        lower whisker=0.000070%
    },
    boxplot/draw position=12
] coordinates {};%

\end{axis}
\end{tikzpicture}
    \caption{LER of \ac{aSCED} batches of subcodes of the \ac{BCH} code $\mathcal{C}_1(63,30)$ using a varying number of checks $m_\ell$ of degree $d_\mathsf{c}=6$ in $\bm{M}_\ell$ and ssPCMs.
    \pbcol{We ensure that all appended rows are linearly independent, i.e., $m_\ell=\Delta_\ell$.}}
    \label{fig:bch63_30:sweep_rows}
\end{figure}

In the following, we analyze the structural benefit of decoding subcodes of the \ac{BCH} code $\mathcal{C}_1(63,30)$ using their \acp{ssPCM}.
We first investigate the impact of appending a single linearly independent row of varying Hamming weight $d_\mathsf{c}$ to the original PCM $\bm{H}$.

To this end, for varying $d_\mathsf{c}$,
we sample $50$ distinct linearly independent rows and append each row to $\bm{H}$. 
\pbcol{Following Sec.~\ref{subsec:structual_adv}, we append the rows \emph{before} running Algorithm~\ref{alg:search_space}, which improves the sparsity of the respective search matrix.
Subsequently, we use Algorithm~\ref{alg:sspcm} with ${w_{\mathrm{max}}=2000}$ to construct the corresponding $\bm{H}_{\mathrm{ssPCM-II}}$}.

\pbcol{
For Fig.~\ref{fig:bch63_30:sweep_dc}, for each of the $50$ linear subcodes, we run a Monte-Carlo simulation under the all-zero codeword assumption at $E_{\mathrm{b}}/N_0=3\,\mathrm{dB}$.
As per Remark~\ref{rem:az_sim}, this yields an estimate on the LER of the respective aSCED batch.
Therefore, Fig.~\ref{fig:bch63_30:sweep_dc} presents box plots of the resulting \acp{LER}.  
As stated in Remark~\ref{rem:az_sim_block}, in the absence of ML errors, the LER equals the FER of the full \ac{aSCED} batch.}
As a baseline, we include the \ac{FER} of \ac{BP} decoding using $\bm{H}_{\mathrm{ssPCM-II}}$ of the original \ac{BCH} code \cite{yifei_sspcm}.
For $d_\mathsf{c} \geq 3$, we consistently observe substantial gains, indicating that the enriched search space introduced by the appended row improves the structure of the \ac{ssPCM}.
Notably, for $\mathcal{C}(63,30)$, even check node degrees yield lower \ac{FER}.
However, this behavior is code-dependent: simulations on $\mathcal{C}(63,36)$ suggest the opposite trend, where odd $d_c$ appears beneficial.
\pbcol{This implies that $d_c$ must be optimized per code.
While all $d_c \geq 3$ yield substantial gains, the interquartile ranges suggest that further optimization is possible.}

Next, we fix $d_c=6$, and vary the number of appended rows $m_\ell$ while ensuring $\Delta_\ell = m_\ell$.
For every $m_\ell$, we sample $50$ subcodes and construct $\bm{H}_{\mathrm{ssPCM-II}}$ with $w_\mathrm{max}=2000$.

Fig.~\ref{fig:bch63_30:sweep_rows} displays the \ac{LER} box plots at an $E_{\mathrm{b}}/N_0$ of $3\,\mathrm{dB}$ for the different $m_\ell$ under the all-zero codeword assumption (Remark~\ref{rem:az_sim_block}).
For larger $m_\ell$, this significantly reduces the simulation complexity by assessing only one path of each \ac{aSCED} batch, which comprises $2^{m_\ell}$ paths.
For comparison, we include the performance of \ac{BP} decoding on $\bm{H}_{\mathrm{ssPCM-II}}$ of the original \ac{BCH} code and its \ac{ML} performance \cite{channelcodes}.

With increasing $m_\ell$, the average \ac{LER} decreases monotonically.
Notably, for $m_\ell \geq 8$, the \ac{LER} of the \ac{aSCED} batch can be lower than the \ac{ML} performance of the original \ac{BCH} code.
This is \emph{not} contradictory: as $m_\ell$ increases, the assumption in Remark~\ref{rem:az_sim}---no \ac{ML} errors occur---becomes invalid, and an estimate of the performance of the ensemble can only be attained by transmitting an arbitrary codeword and using the full \ac{aSCED} batch.
Still, the all-zero case provides insights: For $m_\ell \geq 8$, the dominant source of errors shifts from individual \ac{BP} decoding failures to \ac{ML} failures. This suggests that \ac{aSCED} with $2^8 = 256$ paths already approaches \ac{ML} performance,
despite possessing a significantly smaller number of paths than the full $2^{30}$‑path ensemble referenced in Proposition~\ref{lemma:aSCED-ML}.

\pbcol{In summary, the use of subcodes enables the construction of \acp{ssPCM} from sparser rows, yielding optimized \acp{PCM} that are better suited for \ac{BP} than those of the original code.}
  
\subsubsection{Analysis of the Efficiency of \ac{aSCED} for Classical Codes}

\begin{figure}
    \centering
    \input{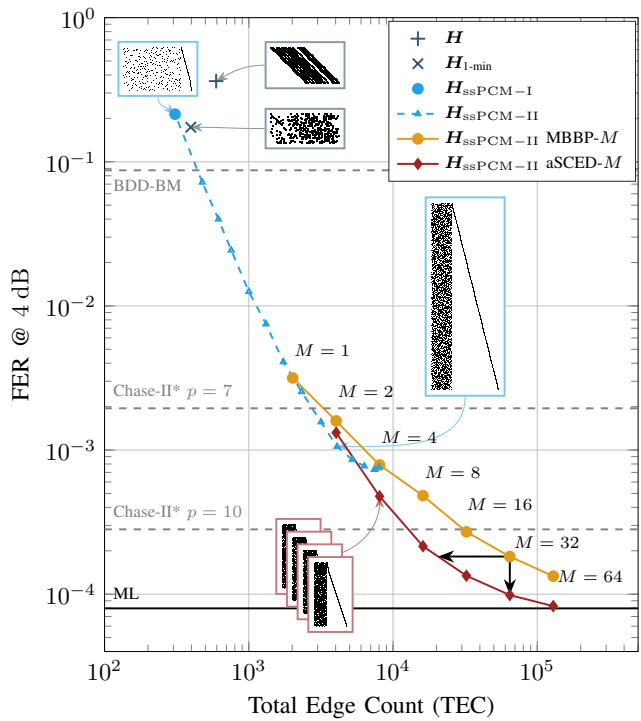}
    \caption{Performance of \ac{BP}-based decoders for the \ac{BCH} code $\mathcal{C}_1(63, 30)$  over TEC, and visualization of PCMs and \acp{ssPCM} used in those \ac{BP} decodings. All
\ac{BP} decodings employ the \ac{NSPA} with $\alpha=\frac{1}{2}$ and $I_\mathrm{max}=20$. 
Chase-II* denotes a performance estimate for the Chase-II decoder \cite{971758}.}%
    \label{fig:bch63_30:performance_over_edges}
\end{figure}
In the following, we consider the \ac{BCH} code $\mathcal{C}_1(63,30)$, to analyze the performance of \ac{BP}-based \ac{aSCED} for classical codes, i.e., for non-LDPC codes. 
For constructing an \mbox{\ac{aSCED}-$M$} ensemble, we use $L=\frac{M}{2}$ distinct \ac{aSCED} batches with $m_\ell=1$ ($\Delta_\ell=1$), unless stated otherwise. 

\pbcol{First, we demonstrate that, due to the structural benefit of subcodes for \ac{PCM} optimization, \ac{aSCED} can exploit decoding complexity more efficiently.}
For \ac{BP}-based decoding with a fixed number of maximum iterations, the worst-case complexity is mainly dominated by the total edge count (TEC)
\[
\text{TEC}:=\sum_{\ell\in[M]} \mathrm{wt}(\bm{H}_\ell).
\] 

In Fig.~\ref{fig:bch63_30:performance_over_edges}, we plot the \ac{FER} at an $E_{\mathrm{b}}/N_0$ of $4\,\mathrm{dB}$ of \ac{NSPA} with normalization constant $\alpha=\frac{1}{2}$ and maximum number of $I_\mathrm{max}=20$ iterations, versus the TEC for different \ac{BP}-based schemes, namely \pbcol{\ac{BP} using $\bm{H}$, $\bm{H}_\text{1-min}$ from \cite{channelcodes}, and \acp{ssPCM}, as well as} \ac{MBBP}-$M$ and \ac{aSCED}-$M$.
The respective PCM designs are detailed below and visualized in Fig.~\ref{fig:bch63_30:performance_over_edges}. 
\pbcol{We conducted the same analysis using \ac{NMSA}; however, \ac{NSPA} consistently yields slightly better performance. A comparison between the two BP variants is provided later.}
Note that only $\bm{H}$, $\bm{H}_{1-\text{min}}$ and $\bm{H}_{\mathrm{ssPCM-I}}$ are not overcomplete. As a baseline performance of classical decoders,
 we include the \ac{FER} performance of the bounded-distance decoder based on the Berlekamp–Massey (BDD-BM) algorithm 
 and a performance estimate for the Chase-II decoder using $2^p$ test patterns\cite{Massey1969ShiftregisterSA,971758}. 

For stand-alone \ac{BP} decoding, we use the PCM $\bm{H}_\text{1-min}$ with minimized weight \cite{channelcodes}, and
for the \acp{ssPCM} of the \ac{BCH} code $\mathcal{C}_1(63,30)$, we first construct a large \ac{ssPCM}-II of the \ac{BCH} code  
based on a search space matrix $\bm{S}_\text{min}$ consisting of all minimum weight dual codewords and using Algorithm~\ref{alg:sspcm} with $w_\mathrm{max}=8000$.
 From this large \ac{ssPCM}-II, we derive a series of smaller \ac{ssPCM}-IIs by iteratively removing the last PCRB and its associated AVNs, 
 down to the \ac{ssPCM}-I.

For \ac{MBBP}, we construct $64$ different \ac{ssPCM}-IIs of $\mathcal{C}(63,30)$ by running  Algorithm~\ref{alg:sspcm} with $w_\mathrm{max}=2000$ using $64$ randomly sampled subsets of $\bm{S}_\text{min}$.
Then, \ac{MBBP}-$M$ consists of $M$ randomly selected \acp{ssPCM}.
Note that the idea of using \acp{ssPCM} within \ac{MBBP} was already mentioned in \cite{yifei_sspcm}.

\pbcol{When constructing subcode \acp{ssPCM} for code $\mathcal{C}_1$, we empirically observe a saturation effect around $w_\mathrm{max}=2000$, i.e., increasing $w_\mathrm{max}$ beyond this value does not yield further \ac{FER} improvements for the corresponding \ac{aSCED} batch, while increasing decoding complexity.}
Hence, for \ac{aSCED}-$M$, we select $32$ distinct $(k-1)$-dimensional linear subcodes obtained by appending a single linearly independent row with random $d_\mathsf{c} \in \{6,8,10\}$ and running Algorithm~\ref{alg:sspcm} with $w_\mathrm{max}=2000$ to construct the respective \ac{ssPCM}-II.
The considered ensembles consist of $L$
randomly selected \ac{aSCED} batches from the $32$ constructed ones, yielding a total ensemble size of ${M=2L}$.

The \ac{aSCED}-$M$ proposed in this work outperforms \ac{MBBP}-$M$ at all evaluated ensemble sizes.
For a fixed TEC, \ac{aSCED} achieves a lower \ac{FER}, and for a fixed \ac{FER}, it requires a lower TEC.
More specifically, as highlighted in Fig.~\ref{fig:bch63_30:performance_over_edges}, \ac{aSCED} reduces the TEC by a factor of around $3$ to match the \ac{FER} performance of \ac{MBBP}-$32$.
Similarly, it reduces the \ac{FER} by a factor of roughly $2$ for a fixed TEC.
For a ${\text{TEC}>6000}$, \ac{aSCED}-$M$ yields lower \ac{FER} compared to stand-alone \ac{BP} decoding using \ac{ssPCM}-II, for which a saturation effect can be observed.
Notably, \ac{aSCED}-$64$ approaches \ac{ML} performance using only $2^6$ paths of \ac{NSPA}, even fewer than using the single \ac{aSCED} batch with $\Delta_\ell=8$ from Fig.~\ref{fig:bch63_30:sweep_rows}, which uses \ac{NMSA}.

\pbcol{This raises the question: Is it always preferable to append multiple single rows, i.e., use $\Delta=1$ and increase $L$, or can a single larger \ac{aSCED} batch, i.e., let $L=1$ and increase $\Delta$, yield better trade-offs? We address this in the next section.}
\subsubsection{Implementation Benefits vs. Performance Trade-off}
\begin{figure}
    \centering
    \begin{tikzpicture}[scale=0.92,spy using outlines={rectangle, magnification=2}]

\begin{axis}[%
width=.9\columnwidth,
height=4cm,
at={(0.758in,0.645in)},
scale only axis,
xmin=2,
xmax=5,
xlabel style={font=\color{white!15!black}},
xlabel={$E_{\mathrm{b}}/N_0$ ($\si{dB}$)},
ymode=log,
ymin=1e-06,
ymax=1e-01,
yminorticks=true,
ylabel style={font=\color{white!15!black}},
ylabel={FER},
axis background/.style={fill=white},
xmajorgrids,
ymajorgrids,
legend style={at={(0.004,0.005)}, anchor=south west, legend cell align=left, align=left, draw=white!15!black,font=\scriptsize}
]

\addplot[color=KITblue,line width = 1pt,mark=o,dashed, mark options={solid}]
table[row sep=crcr]{
 2.00  4.209640e-02\\
 2.50  1.642441e-02\\
 3.00  5.112344e-03\\
 3.50  1.156972e-03\\
 4.00  1.898940e-04\\
 4.50  3.071564e-05\\
 5.00  3.110526e-06\\
};\addlegendentry{aSCED-64 $\Delta_\ell=6$}

\addplot[color=KITred,solid,line width = 1pt,mark=diamond*, mark options={solid}]
table[row sep=crcr]{
  0.00  4.322000e-01\\
  0.50  2.879000e-01\\
  1.00  1.837000e-01\\
  1.50  9.789000e-02\\
  2.00  4.296000e-02\\
  2.50  1.439000e-02\\
  3.00  4.119000e-03\\
  3.50  8.116000e-04\\
  4.00  1.345000e-04\\
  4.50  1.615000e-05\\
  5.00  1.208000e-06\\
   };
\addlegendentry{aSCED-64 $\Delta_\ell=1$};%

 \addplot[color=KITcyan,line width = 1pt,mark=o,dashed, mark options={solid}]
table[row sep=crcr]{
 2.00  3.285691e-02\\
 2.50  1.114454e-02\\
 3.00  3.380206e-03\\
 3.50  7.075538e-04\\
 4.00  1.196210e-04\\
4.5 1.5930735930735932e-05\\
};\addlegendentry{aSCED-256 $\Delta_\ell=8$}

 \addplot[color=KITgreen,line width = 1pt,mark=o,dashed, mark options={solid}]
table[row sep=crcr]{
 2.00  2.862459e-02\\
 2.50  9.862906e-03\\
 3.00  2.241650e-03\\
 3.50  5.290180e-04\\
 4.00  8.156936e-05\\
4.5 9.297625063163214e-06\\
};\addlegendentry{aSCED-1024 $\Delta_\ell=10$}

\addplot [color=black, line width=0.9pt]
  table[row sep=crcr]{%
  0.00   3.215e-01\\
  0.50  1.969e-01\\
  1.00   1.161e-01\\
  1.50  5.631e-02\\
  2.00   2.315e-02\\
  2.50   7.628e-03\\
  3.00   2.200e-03\\
  3.50   4.896e-04\\
  4.00  7.973e-05\\
  4.50   8.396e-06\\
};
\addlegendentry{ML \cite{channelcodes}}%
\coordinate (spypoint) at (axis cs:3.25,0.0005);
			\coordinate (spyviewer) at (axis cs:4.15,0.004);	
			\spy[width=2.2cm,height=1.25cm, thin, spy connection path={
            \draw (tikzspyonnode.south west) -- (tikzspyinnode.south west);
            \draw (tikzspyonnode.south east) -- (tikzspyinnode.south east);
			\draw (tikzspyonnode.north west) -- (tikzspyinnode.north west);\draw (tikzspyonnode.north east) -- (intersection of  tikzspyinnode.north east--tikzspyonnode.north east and tikzspyinnode.north west--tikzspyinnode.south west);
			;}] on (spypoint) in node at (spyviewer);
		\coordinate (a) at ($(axis cs:-10.8/1.4,-0.12)+(spyviewer)$);
		\coordinate[label={[font=\small,text=black]right:$10^{-2}$}] (b) at ($(axis cs:+10.8/1.4,-0.12)+(spyviewer)$);
        
\end{axis}

\end{tikzpicture}%
    \caption{Performance of \ac{aSCED} for the \ac{BCH} code $\mathcal{C}_1(63,30)$ using different $\Delta_\ell$ and ensemble sizes.
    }
    \label{fig:bch63_30_subcode}
\end{figure}
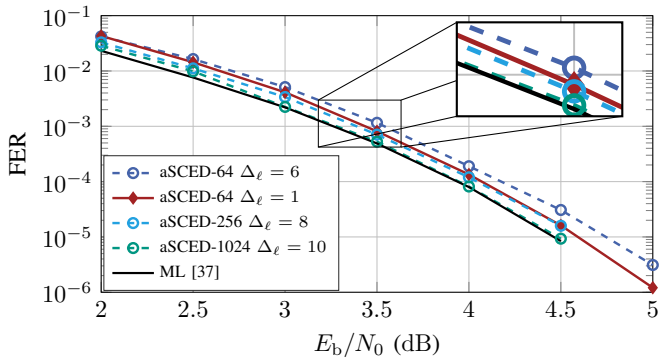

Fig.~\ref{fig:bch63_30_subcode} plots the \ac{FER} versus $E_{\mathrm{b}}/N_0$ performance for the \ac{BCH} code $\mathcal{C}_1(63,30)$ of various
\ac{aSCED}-$M$ decoders consisting of \ac{BP} decoders of subcodes with varying rank deficiencies $\Delta_\ell$. 

For the ensembles consisting of a single \ac{aSCED} batch, i.e., $M = 2^{\Delta_\ell}$, the best-performing batch was selected for each $\Delta_\ell$ from the corresponding $50$ candidates shown in Fig.~\ref{fig:bch63_30:sweep_rows}.
The ensemble of $32$ \ac{aSCED} batches with $\Delta_\ell=1$ is the one from Fig.~\ref{fig:bch63_30:performance_over_edges} evaluated using \ac{NMSA}.

\pbcol{As $\Delta_\ell$ increases, the \ac{FER} of \ac{aSCED}-$2^{\Delta_\ell}$, i.e., an ensemble consisting of a single large \ac{aSCED} batch $(L=1)$, decreases monotonically, fully closing the gap to \ac{ML} performance at $\Delta_\ell=10$.
When comparing the two depicted $64$-path ensembles, i.e., one composed of $L=32$ \ac{aSCED} batches with $\Delta_\ell=1$, and the other of a single large \ac{aSCED} batch with $L=1$, $\Delta_\ell=6$, the former outperforms the latter at comparable TEC.}

This indicates that the diversity gain from using multiple distinct \ac{aSCED} batches (fixing $\Delta_\ell=1$ and increasing $L$) provides a performance advantage over a single large batch (fixed $L=1$ and increasing $\Delta_\ell=1$) and can be reasoned as follows:
Assume we randomly sample $L$ rows $\bm{M}_\ell$, $m_\ell=1$, that are jointly linearly independent of the rows of $\bm{H}$.
Appending each of those rows to $\bm{H}$ according to \eqref{eq:inducing_subcode} results in $L$ distinct linear subcodes $\mathcal{C}_\ell\subset_\mathsf{s}\mathcal{C}$ 
and their respective $(k+1)$-dimensional dual codes $\mathcal{C}_\ell^\bot\supset_\mathsf{s}\mathcal{C}^\bot$, $\ell\in[L]$.
Since the appended rows are jointly linearly independent of the rows of $\bm{H}$, 
it follows that $\bigcap_{\ell\in[L]}\mathcal{C}_\ell^\bot=\mathcal{C}^\bot$ and $\bigcap_{\ell\in[L]}(\mathcal{C}_\ell^\bot\setminus\mathcal{C}^\bot)=\emptyset$.
Hence, each dual code of a linear subcode contains a subset $\mathcal{C}_\ell^\bot\setminus\mathcal{C}^\bot$ consisting of $2^k$ dual codewords that are not elements of $\mathcal{C}_{\ell'}$, ${\ell'\in[L]\setminus\{\ell\}}$.
Then, if appending a row $\bm{M}_\ell$ with $\mathrm{wt}(\bm{M}_\ell)<d_\mathrm{min}^\bot$ and following the reasoning in Sec.~\ref{subsec:structual_adv},
the search space matrix $\bm{S}_\ell$ used for constructing the \ac{ssPCM} tailored to subcode $\mathcal{C}_\ell$ typically mostly consists 
of rows which are elements of $\mathcal{C}_\ell^\bot\setminus\mathcal{C}^\bot$.
Consequently, the \ac{ssPCM} for each of the \ac{aSCED} batches is built mostly out of different rows and likely possesses distinct 
resulting in an \emph{inter-path diversity}, which boosts performance when using increased $L$.

\jmcol{However, when latency is not the limiting factor and some loss in decoding performance is acceptable}, fixing $L$ and increasing $\Delta$ potentially offers a key advantage in hardware implementation: since all decodings within a single \ac{aSCED} batch share the same underlying graph, the same hardware could be reused in a serialized fashion, thereby reducing area overhead.
Thus, \ac{aSCED} can allow for a balanced trade-off among performance, hardware area, and latency.

\begin{remark} 
As expected, when dropping the all-zero codeword assumption, the \ac{FER} of ensembles with \ac{LER} below \ac{ML} in Fig.~\ref{fig:bch63_30:sweep_rows} is lower-bounded by the \ac{ML} performance.
Notably, those ensembles achieve performance close to \ac{ML}, underlining that the all-zero assumption enables a practically efficient simulation during ensemble design.
\end{remark}

\subsubsection{\ac{FER} Performance}
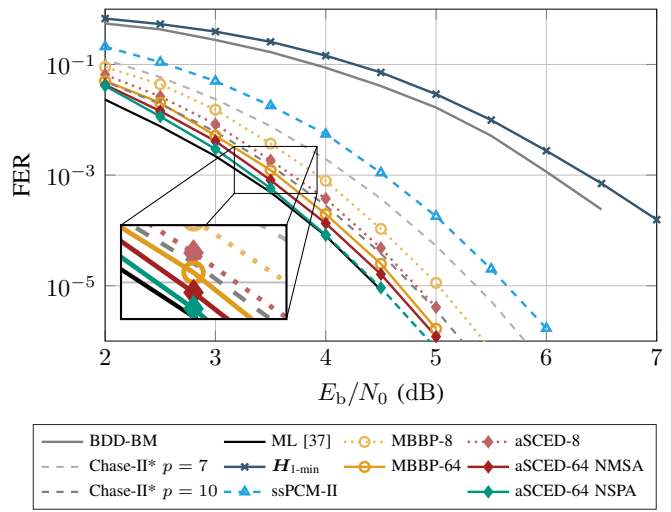
\begin{figure}
    \centering
    \begin{tikzpicture}[scale=0.92,spy using outlines={rectangle, magnification=2}]

\begin{axis}[%
width=.9\columnwidth,
height=\ferfigheight,
at={(0.758in,0.645in)},
scale only axis,
xmin=2,
xmax=7,
xlabel style={font=\color{white!15!black}},
xlabel={$E_{\mathrm{b}}/N_0$ ($\si{dB}$)},
ymode=log,
ymin=1e-06,
ymax=1,
yminorticks=true,
ylabel style={font=\color{white!15!black}},
ylabel={FER},
axis background/.style={fill=white},
xmajorgrids,
ymajorgrids,
legend style={
    at={(1,-0.25)},
    anchor=north east,
    legend columns=4,
    legend cell align=left,
    draw=white!15!black,
    font=\scriptsize
}
]

\addplot[color=black!50, line width = 1pt]
table[row sep=crcr]{
0.00 8.810573e-01 \\
0.50 9.090909e-01 \\
1.00 8.163265e-01 \\
1.50 6.688963e-01 \\
2.00 5.494505e-01 \\
2.50 4.310345e-01 \\
3.00 2.773925e-01 \\
3.50 1.672241e-01 \\
4.00 8.729812e-02 \\
4.50 4.073320e-02 \\
5.00 1.666528e-02 \\
5.50 5.132153e-03 \\
6.00 1.151192e-03 \\
6.50 2.390266e-04 \\
};
\addlegendentry{BDD-BM};

\addplot [color=black, line width=0.9pt]
  table[row sep=crcr]{%
  0.00   3.215e-01\\
  0.50  1.969e-01\\
  1.00   1.161e-01\\
  1.50  5.631e-02\\
  2.00   2.315e-02\\
  2.50   7.628e-03\\
  3.00   2.200e-03\\
  3.50   4.896e-04\\
  4.00  7.973e-05\\
  4.50   8.396e-06\\
};
\addlegendentry{ML \cite{channelcodes}}

\addplot[color=KITorange!70,dotted,line width = 1pt,mark=o, mark options={solid}]
table[row sep=crcr]{
  0.00  6.095000e-01\\
  0.50  4.582000e-01\\
  1.00  3.171000e-01\\
  1.50  1.891000e-01\\
  2.00  9.009000e-02\\
  2.50  4.396000e-02\\
  3.00  1.520000e-02\\
  3.50  3.701000e-03\\
  4.00  7.881000e-04\\
  4.50  1.064000e-04\\
  5.00  1.124000e-05\\
  5.50  7.036000e-07\\
  };
\addlegendentry{MBBP-8};%

\addplot[color=KITred!70,dotted, mark=*, line width=1pt, mark=diamond*, mark options={solid}]
table[row sep=crcr]{
  0.00  4.990000e-01\\
  0.50  3.627000e-01\\
  1.00  2.351000e-01\\
  1.50  1.392000e-01\\
  2.00  6.522000e-02\\
  2.50  2.641000e-02\\
  3.00  8.306000e-03\\
  3.50  1.856000e-03\\
  4.00  3.755000e-04\\
  4.50  4.802000e-05\\
  5.00  4.076000e-06\\
};
\addlegendentry{aSCED-8}; %

\addplot [color=black!30,dashed, line width=0.9pt]
  table[row sep=crcr]{%
0.0   0.6008491679318061\\
0.5   0.47105110616957496\\
1.0   0.33803007408699765\\
1.5   0.21745768541589\\
2.0   0.12257279040119042\\
2.5   0.05905089181685509\\
3.0   0.023671458773727696\\
3.5   0.007671025454008002\\
4.0   0.001948235164018641\\
4.5   0.0003750875697260048\\
5.0   5.282397509470336e-05\\
5.5  5.237771613413546e-06\\
6.0   3.5102733167881437e-07\\
6.5   1.522188975540518e-08\\
};
\addlegendentry{Chase-II* $p=7$}

\addplot[color=kit-royalblue80,line width = 1pt,mark=x, mark options={solid}]
table[row sep=crcr]{
  2.00  6.766000e-01\\
  2.50  5.369000e-01\\
  3.00  3.936000e-01\\
  3.50  2.567000e-01\\
  4.00  1.445000e-01\\
  4.50  7.172000e-02\\
  5.00  2.901000e-02\\
  5.50  9.900000e-03\\
  6.00  2.751000e-03\\
  6.50  7.059000e-04\\
  7.00  1.567000e-04\\
};
\addlegendentry{$\bm{H}_\text{1-min}$};

\addplot[color=KITorange,line width = 1pt,mark=o, mark options={solid}]
table[row sep=crcr]{
  0.00  4.783000e-01\\
  0.50  3.490000e-01\\
  1.00  2.227000e-01\\
  1.50  1.101000e-01\\
  2.00  5.056000e-02\\
  2.50  2.030000e-02\\
  3.00  5.149000e-03\\
  3.50  1.230000e-03\\
  4.00  1.965000e-04\\
  4.50  2.487000e-05\\
  5.00  1.667000e-06\\
  };
\addlegendentry{MBBP-64};%

\addplot[color=KITred,solid,line width = 1pt,mark=diamond*, mark options={solid}]
table[row sep=crcr]{
  0.00  4.322000e-01\\
  0.50  2.879000e-01\\
  1.00  1.837000e-01\\
  1.50  9.789000e-02\\
  2.00  4.296000e-02\\
  2.50  1.439000e-02\\
  3.00  4.119000e-03\\
  3.50  8.116000e-04\\
  4.00  1.345000e-04\\
  4.50  1.615000e-05\\
  5.00  1.208000e-06\\
   };
\addlegendentry{aSCED-64 NMSA};%

\addplot [color=black!50,dashed, line width=0.9pt]
  table[row sep=crcr]{%
0.0   0.4267674645183627\\
0.5   0.30008288898554103\\
1.0   0.1889732327858183\\
1.5   0.10420193386182185\\
2.0   0.04908426507249242\\
2.5   0.019226858551741874\\
3.0   0.0060822193730184\\
3.5   0.001505228857460796\\
4.0   0.00028155383102404177\\
4.5   3.834597487179901e-05\\
5.0   3.6519128105744038e-06\\
5.5   2.3279615366139582e-07\\
6.0   9.474747215791802e-09\\
6.5   2.339544158474056e-10\\
};
\addlegendentry{Chase-II* $p=10$}

\addplot[color=KITcyanblue,dashed,line width = 1pt,mark=triangle, mark options={solid}]
table[row sep=crcr]{
 2.00  2.1e-01\\
 2.50  1.1e-01\\
 3.00  5e-02\\
 3.50  1.8e-02\\
 4.00  5.5e-03\\
 4.5   1.1e-3\\
 5     1.8e-4\\
 5.5   2e-5\\
 6     1.7e-6\\
   };
\addlegendentry{ssPCM-II };%

\addlegendimage{empty legend};
\addlegendentry{};
\addplot[color=KITgreen,solid,line width = 1pt,mark=diamond*, mark options={solid}]
table[row sep=crcr]{
  0.00  6.693000e-01\\
  0.50  4.449000e-01\\
  1.00  2.379000e-01\\
  1.50  1.085000e-01\\
  2.00  4.108000e-02\\
  2.50  1.145000e-02\\
  3.00  2.964000e-03\\
  3.50  5.802000e-04\\
  4.00  8.262000e-05\\
   };
\addlegendentry{aSCED-64 NSPA};

\addplot[color=KITgreen,densely dashed,line width = 1pt,mark=diamond*, mark options={solid}, forget plot]
table[row sep=crcr]{
  4.00  8.262000e-05\\
  4.50  9.193000e-06\\
  5.00  7.324000e-07\\
   };

\coordinate (spypoint) at (axis cs:3.35,0.0004);
			\coordinate (spyviewer) at (axis cs:2.75,0.000008);	
			\spy[width=2.2cm,height=1.25cm, thin, spy connection path={\draw (tikzspyonnode.south west) -- (intersection of  tikzspyonnode.south west--tikzspyinnode.south west and tikzspyinnode.north east--tikzspyinnode.north west);\draw (tikzspyonnode.south east) -- (tikzspyinnode.south east);
			\draw (tikzspyonnode.north west) -- (tikzspyinnode.north west);\draw (tikzspyonnode.north east) -- (tikzspyinnode.north east);
			;}] on (spypoint) in node at (spyviewer);
		\coordinate (a) at ($(axis cs:-10.8/1.4,-0.12)+(spyviewer)$);
		\coordinate[label={[font=\small,text=black]right:$10^{-2}$}] (b) at ($(axis cs:+10.8/1.4,-0.12)+(spyviewer)$);

\end{axis}

\end{tikzpicture}%
    \caption{Performances of \ac{BP}-based decoders and classical decoders for the \ac{BCH} code $\mathcal{C}_1(63,30)$. Unless stated otherwise, \ac{NMSA} with $\alpha=\frac{1}{2}$ and $I_\mathrm{max}=20$ is used for \ac{BP}. All \ac{aSCED}-$M$ use $\Delta=1$ and $L=\frac{M}{2}$. Chase-II* denotes the performance estimate for the Chase-II decoder derived in \cite{971758}.}
    \label{fig:bch63_30}
\end{figure}

\begin{figure}
    \centering
    \begin{tikzpicture}[scale=0.92,spy using outlines={rectangle, magnification=2}]

\begin{axis}[%
width=.9\columnwidth,
height=\ferfigheight,
at={(0.758in,0.645in)},
scale only axis,
xmin=2,
xmax=7,
xlabel style={font=\color{white!15!black}},
xlabel={$E_{\mathrm{b}}/N_0$ ($\si{dB}$)},
ymode=log,
ymin=1e-05,
ymax=1,
yminorticks=true,
ylabel style={font=\color{white!15!black}},
ylabel={FER},
axis background/.style={fill=white},
xmajorgrids,
ymajorgrids,
legend style={at={(0.996,0.995)}, anchor=north east, legend cell align=left, align=left, draw=white!15!black,font=\scriptsize}
]

\addplot[color=kit-royalblue80,line width = 1pt,mark=x, mark options={solid}]
table[row sep=crcr]{
  2.00  6.590000e-01\\
  2.50  5.157000e-01\\
  3.00  3.603000e-01\\
  3.50  2.299000e-01\\
  4.00  1.259000e-01\\
  4.50  5.766000e-02\\
  5.00  2.289000e-02\\
  5.50  7.558000e-03\\
  6.00  2.216000e-03\\
  6.50  6.040000e-04\\
  7.00  1.722000e-04\\
};
\addlegendentry{$\bm{H}_\text{1-min}$};

\addplot[color=KITcyanblue,dashed,line width = 1pt,mark=triangle, mark options={solid}]
table[row sep=crcr]{
    1.00  5.822416e-01\\
 1.50  4.094166e-01\\
 2.00  2.791347e-01\\
 2.50  1.672940e-01\\
 3.00  9.229349e-02\\
 3.50  3.086658e-02\\
 4.00  1.023830e-02\\
 4.50  2.726858e-03\\
 5.00  5.688331e-04\\
 5.50  8.466123e-05\\
 6.00  7.897723e-06\\
  };
\addlegendentry{ssPCM-II};%

\addplot[color=KITorange!70, dotted, line width = 1pt,mark=o, mark options={solid}]
table[row sep=crcr]{
  0.00  7.654000e-01\\
  0.50  6.161000e-01\\
  1.00  4.817000e-01\\
  1.50  3.385000e-01\\
  2.00  1.805000e-01\\
  2.50  9.683000e-02\\
  3.00  3.879000e-02\\
  3.50  1.254000e-02\\
  4.00  2.996000e-03\\
  4.50  4.829000e-04\\
  5.00  6.799000e-05\\
  5.50  5.356000e-06\\
  };
\addlegendentry{MBBP-6};%

\addplot[color=KITred!70, dotted, line width = 1pt,mark=diamond*, mark options={solid}]
table[row sep=crcr]{
 2.00  1.390821e-01\\
 2.50  6.668890e-02\\
 3.00  2.401537e-02\\
 3.50  7.042006e-03\\
 4.00  1.427399e-03\\
 4.50  2.410861e-04\\
 5.00  2.466282e-05\\
 };
 \addlegendentry{aSCED-6};

\addplot[color=KITorange,line width = 1pt,mark=o, mark options={solid}]
table[row sep=crcr]{
  0.00  7.088000e-01\\
  0.50  5.907000e-01\\
  1.00  4.084000e-01\\
  1.50  2.536000e-01\\
  2.00  1.257000e-01\\
  2.50  6.146000e-02\\
  3.00  2.235000e-02\\
  3.50  5.579000e-03\\
  4.00  1.207000e-03\\
  4.50  1.901000e-04\\
  5.00  1.707000e-05\\
  };
\addlegendentry{MBBP-30};

\addplot[color=KITred,line width = 1pt,mark=diamond*, mark options={solid}]
table[row sep=crcr]{
 2.00  8.478169e-02\\
 2.50  3.493450e-02\\
 3.00  9.743740e-03\\
 3.50  2.219509e-03\\
 4.00  3.959502e-04\\
 4.50  5.135809e-05\\
 5.00  4.069301e-06\\ %
 };
 \addlegendentry{aSCED-30};

\addplot [color=black, line width=0.9pt]
  table[row sep=crcr]{%
  0.00   4.505e-01\\
  0.50   3.145e-01\\
  1.00   1.739e-01\\
  1.50   1.018e-01\\
  2.00   4.482e-02\\
  2.50   1.447e-02\\
  3.00   3.916e-03\\
  3.50   9.864e-04\\
  4.00   1.419e-04\\
  4.50   2.086e-05\\
};
\addlegendentry{ML \cite{channelcodes}}

\end{axis}

\end{tikzpicture}%
    \caption{Performances of \ac{BP}-based decoders for the \ac{BCH} code $\mathcal{C}_2(63,36)$. All \ac{BP} decoders employ the \ac{NMSA} with normalization constant $\alpha=\frac{1}{2}$ and a maximum of $I_\mathrm{max}=20$ iterations. %
    }
    \label{fig:bch63_36}
\end{figure}
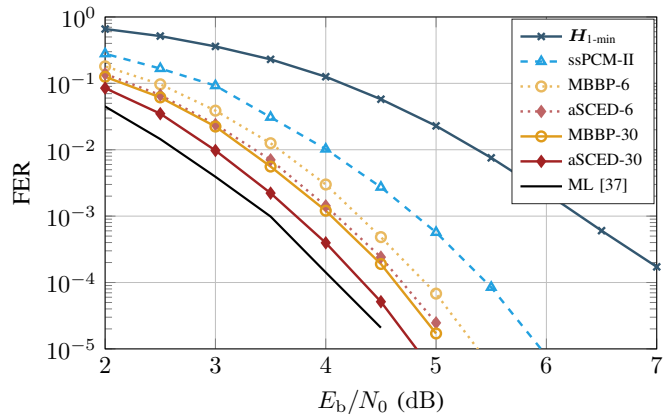
In the following, we consider two medium-rate \ac{BCH} codes, i.e., codes for which no efficient \ac{ML} decoding exists.

First, in Fig.~\ref{fig:bch63_30}, we summarize the previous findings for the \ac{BCH} code $\mathcal{C}_1(63,30)$ in an \ac{FER} versus $E_{\mathrm{b}}/N_0$ plot.
As reference, we include \ac{ML} decoding, BDD-BM, and Chase-II performance \cite{channelcodes,Massey1969ShiftregisterSA,971758}. 
As \ac{BP} baselines, we include decoding using $\bm{H}_{1-\text{min}}$ and \ac{ssPCM}-II \cite{channelcodes,yifei_sspcm}.
Ensemble designs follow the procedures outlined in previous sections.

Across the simulated SNR range, \ac{aSCED} consistently outperforms its equal-complexity (equal TEC) \ac{MBBP} counterpart. 
Specifically, using \ac{NMSA}, at a target \ac{FER} of $10^{-3}$, \ac{aSCED}-$8$ and \ac{aSCED}-$64$ achieve gains of $0.2\,\mathrm{dB}$ and $0.1\,\mathrm{dB}$,
 respectively, over their \ac{MBBP} counterparts. 
 Moreover, for  $E_{\mathrm{b}}/N_0\geq3.6\,\mathrm{dB}$ and using \ac{NSPA}, \ac{aSCED}-$64$ fully closes the gap to \ac{ML} performance.

Fig.~\ref{fig:bch63_36} plots the \ac{FER} versus SNR for various \ac{BP}-based decoders for the \ac{BCH} code $\mathcal{C}_2(63,36)$. 
As baselines, we include \ac{BP} decoding using $\bm{H}_{1-\text{min}}$ and \ac{ssPCM}-II \cite{yifei_sspcm,channelcodes}.
 For ensemble construction, we follow the previous procedure.
At a target \ac{FER} of $10^{-3}$, \ac{aSCED}-$6$ and \ac{aSCED}-$30$ achieve gains of $0.2\,\mathrm{dB}$ and $0.3\,\mathrm{dB}$ compared to their equal-complexity \ac{MBBP} counterparts, respectively.

Comparing the two \ac{BCH} codes $\mathcal{C}_1(63,30)$ and $\mathcal{C}_2(63,36)$, the larger performance gains of \ac{aSCED} over \ac{MBBP} observed for $\mathcal{C}_2$ may be attributed to the structural constraints of its dual code.
Specifically, constructing \acp{ssPCM} for $\mathcal{C}_2$ requires higher-weight dual codewords 
to achieve full-rank search spaces\cite{yifei_sspcm}. This can degrade the sparsity and structural quality of the resulting \ac{ssPCM}.
 In contrast, appending linearly independent rows, i.e., as done for \ac{aSCED}, mitigates this degradation, preserving or even enhancing the decoding graph structure.%

\section{Conclusion}\label{sec:conclusion}
In this work, we introduce \ac{aSCED} for binary linear codes.
This approach extends the previously proposed \ac{SCED} by incorporating affine subcodes alongside linear subcodes.
To this end, we develop a \ac{BP} decoder for affine subcodes.
Leveraging the fact that an affine subcode is a coset of a linear subcode, we show that the decoding graph structure remains unchanged, only requiring sign flips at specific check nodes.

We further provide theoretical results on \ac{aSCED} and, in particular, demonstrate that by appending linearly independent rows \emph{before} search-space generation, we gain a structural advantage over \ac{MBBP} ensembles. Consequently, subcodes enable the design of high-performance subcode ssPCMs.

Compared to \ac{SCED}, \ac{aSCED} offers four key advantages:  
\begin{enumerate}
    \item \textbf{Simplified design}: Fewer PCMs must be designed to achieve the same ensemble size. 
     \item \textbf{Uniform protection}: The inclusion of affine subcodes ensures that all codewords are equally protected.
     \item \textbf{Improved performance}: \ac{aSCED} consistently outperforms other ensemble methods at equal complexity.  
     \item \textbf{Hardware efficiency}: All decodings within an \ac{aSCED} batch share the same graph structure, potentially enabling hardware reuse.
\end{enumerate}

To demonstrate its effectiveness, we evaluate \ac{aSCED} using two LDPC codes and two \ac{BCH} codes, achieving significant gains over stand-alone \ac{BP} and other equal-complexity ensemble decoders.
 Furthermore, when combining \ac{aSCED} with \acp{ssPCM}, we report \ac{ML} performance for the $(63,30)$ BCH code with \ac{BP}-based decoding using only $64$ decoding paths.

{\appendix[Proof of Theorem~\ref{th:error_prob_affine}]%
In the following proof, we denote the \ac{BPSK} codeword as $\ddot{\bm x}:=1-2\bm x$ \cite{MCT08}. 
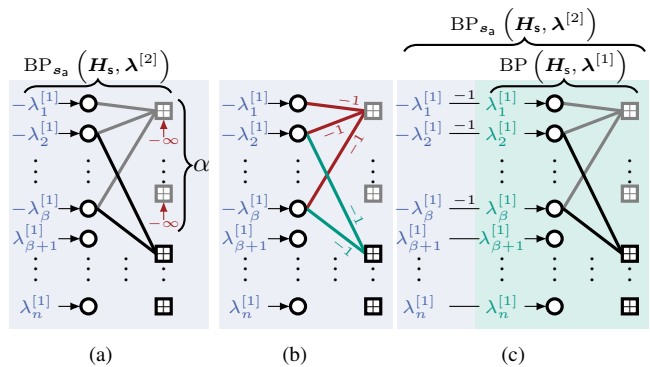
\begin{figure}
    \centering
\definecolor{KITblue}{rgb}{.27,.39,.66}
\definecolor{greencolor}{rgb}{0,.59,.51}

\newlength{\CNsize} 
\setlength{\CNsize}{0.2cm} %

\begin{tabular}[ht]{@{}c@{\,}c@{}@{}c@{}}

	\begin{tikzpicture}[scale=1] \label{fig:illustration:a}

\draw [fill=KITblue!10, draw=none, very thick] (-1.05, -0.5) rectangle ++ (2.65,3.3);
\draw [decorate,thick,decoration={brace,amplitude=5pt,raise=4pt}] (-0.85,2.5) --(1.1,2.5) node[midway,yshift=14pt] {\scriptsize$\mathrm{BP}_{\bm{s}_\mathsf{a}}\left(\bm{H_\mathsf{s}},\bm{\lambda}^{[2]}\right)$};

\node[draw=none, anchor=center] at (-0, +1.7) {$\vdots$};	
\node[draw=none, anchor=center] at (-0, +0.4) {$\vdots$};	
\node[draw=none, anchor=center] at (0.5, +0.4) {$\vdots$};	
\node[draw=none, anchor=center] at (1.0, +1.7) {$\vdots$};	
\node[draw=none, anchor=center] at (1.0, +0.4) {$\vdots$};	
\node[draw=none, anchor=center] at (-0.7, +1.7) {$\vdots$};	
\node[draw=none, anchor=center] at (-0.7, +0.4) {$\vdots$};	

\draw [black!50, very thick] (0.115,2.5) -- (0.9,2.4);
\draw [black!50, very thick] (0.115,2.1) -- (0.9,2.4);
\draw [black!50, very thick] (0.115,1.1) -- (0.9,2.4);
\draw [ very thick] (0.115,2.1) -- (0.9,0.5);
\draw [ very thick] (0.115,1.1) -- (0.9,0.5);

\node[draw=none, anchor=center] at (-0.7, +2.5) {\textcolor{KITblue}{\scriptsize$-\lambda^{[1]}_{1}$}};	
\node[draw=none, anchor=center] at (-0.7, +2.1) {\textcolor{KITblue}{\scriptsize$-\lambda^{[1]}_{2}$}};	

\node[draw=none, anchor=center] at (-0.7, +1.1) {\textcolor{KITblue}{\scriptsize$-\lambda^{[1]}_{\beta}$}};	
\node[draw=none, anchor=center] at (-0.7, +0.7) {\textcolor{KITblue}{\scriptsize$\lambda^{[1]}_{\beta+1}$}};	

\node[draw=none, anchor=center] at (-0.7, -0.2) {\textcolor{KITblue}{\scriptsize$\lambda^{[1]}_{n}$}};	

\foreach \y in {+2.5,+2.1, +1.1,+0.7,-0.2}
{
    \draw[ -latex] (-0.4, \y) --(0-0.1, \y);
    \draw [ fill=white, very thick] (0, \y) circle (0.1);
}

\foreach \y in {+2.4, +1.3}
{
    \draw [KITred, -latex] (1.0,\y-0.35) -- (1.0,\y-0.1);
	\draw [black!50,fill=white, very thick] (1.0-0.1, \y-0.1) rectangle ++(\CNsize,\CNsize);
   \node[black!50,draw=none, anchor=center] at (1.0, \y) {\footnotesize$+$};
   \node[draw=none, anchor=center] at (1.0, \y-0.4) {\textcolor{KITred}{\tiny$-\infty$}};
   }

\draw [decorate,thick,decoration={brace,amplitude=5pt,raise=4pt}] (1.05,2.55) --(1.05,0.8) node[midway,right=6pt] {$\alpha$};

\foreach \y in {0.5,-0.2}
{
	\draw [fill=white,  very thick] (1.0-0.1, \y-0.1) rectangle ++(\CNsize,\CNsize);
   \node[draw=none, anchor=center] at (1.0, \y) {\footnotesize$+$};
}

\end{tikzpicture}	
 \hspace*{-0.8em}    & 

	\begin{tikzpicture}[scale=1] \label{fig:illustration:b} 

\draw [fill=KITblue!10, draw=none, very thick] (-1.05, -0.5) rectangle ++ (2.3,3.3);
\node[draw=none, anchor=center] at (-0, +1.7) {$\vdots$};	
\node[draw=none, anchor=center] at (-0, +0.4) {$\vdots$};	
\node[draw=none, anchor=center] at (0.5, +0.4) {$\vdots$};	
\node[draw=none, anchor=center] at (1.0, +1.7) {$\vdots$};	
\node[draw=none, anchor=center] at (1.0, +0.4) {$\vdots$};	
\node[draw=none, anchor=center] at (-0.7, +1.7) {$\vdots$};	
\node[draw=none, anchor=center] at (-0.7, +0.4) {$\vdots$};	

\draw [KITred, very thick] (0.115,2.5) -- (0.9,2.4) node [pos=0.7, yshift=3pt, sloped]  {\tiny\textcolor{KITred}{$-1$}};
\draw [KITred, very thick] (0.115,2.1) -- (0.9,2.4) node [pos=0.4, yshift=-3.5pt, sloped]  {\tiny\textcolor{KITred}{$-1$}};
\draw [KITred, very thick] (0.115,1.1) -- (0.9,2.4) node [pos=0.7, yshift=-3.5pt, sloped]  {\tiny\textcolor{KITred}{$-1$}};
\draw [KITgreen, very thick] (0.115,2.1) -- (0.9,0.5) node [pos=0.7, yshift=3pt, sloped]  {\tiny\textcolor{KITgreen}{$-1$}};
\draw [KITgreen, very thick] (0.115,1.1) -- (0.9,0.5) node [pos=0.7, yshift=-3.5pt, sloped]  {\tiny\textcolor{KITgreen}{$-1$}};

\node[draw=none, anchor=center] at (-0.7, +2.5) {\textcolor{KITblue}{\scriptsize$-\lambda^{[1]}_{1}$}};	
\node[draw=none, anchor=center] at (-0.7, +2.1) {\textcolor{KITblue}{\scriptsize$-\lambda^{[1]}_{2}$}};	
\node[draw=none, anchor=center] at (-0.7, +1.1) {\textcolor{KITblue}{\scriptsize$-\lambda^{[1]}_{\beta}$}};	
\node[draw=none, anchor=center] at (-0.7, +0.7) {\textcolor{KITblue}{\scriptsize$\lambda^{[1]}_{\beta+1}$}};	
\node[draw=none, anchor=center] at (-0.7, -0.2) {\textcolor{KITblue}{\scriptsize$\lambda^{[1]}_{n}$}};	

\foreach \y in {+2.5,+2.1, +1.1,+0.7,-0.2}
{
    \draw[, -latex] (-0.4, \y) --(0-0.1, \y);
    \draw [ fill=white, very thick] (0, \y) circle (0.1);
}

\foreach \y in {+2.4, +1.3}
{
\draw [black!50,fill=white, very thick] (1.0-0.1, \y-0.1) rectangle ++(\CNsize,\CNsize);
\node[black!50,draw=none, anchor=center] at (1.0, \y) {\footnotesize$+$};
}

\foreach \y in {0.5,-0.2}
{
	\draw [fill=white,  very thick] (1.0-0.1, \y-0.1) rectangle ++(\CNsize,\CNsize);
   \node[draw=none, anchor=center] at (1.0, \y) {\footnotesize$+$};
}

\end{tikzpicture}	
\hspace*{-0.6em}
 
    &

	\begin{tikzpicture}[scale=1] \label{fig:illustration:c} 

\draw [fill=KITblue!10, draw=none, very thick] (-2.1, -0.5) rectangle ++ (2.1,3.3);
\draw [fill=KITgreen!15, draw=none, very thick] (-1.05, -0.5) rectangle ++ (2.3,3.3);

\node[draw=none, anchor=center] at (-0, +1.7) {$\vdots$};	
\node[draw=none, anchor=center] at (-0, +0.4) {$\vdots$};	
\node[draw=none, anchor=center] at (0.5, +0.4) {$\vdots$};	
\node[draw=none, anchor=center] at (1.0, +1.7) {$\vdots$};	
\node[draw=none, anchor=center] at (1.0, +0.4) {$\vdots$};	
\node[draw=none, anchor=center] at (-0.7, +1.7) {$\vdots$};	
\node[draw=none, anchor=center] at (-0.7, +0.4) {$\vdots$};	
\node[draw=none, anchor=center] at (-1.8, +0.4) {$\vdots$};	

\draw [decorate,thick,decoration={brace,amplitude=5pt,raise=4pt}] (-0.85,2.5) --(1.1,2.5) node[midway,yshift=14pt] {\scriptsize$\mathrm{BP}\left(\bm{H}_\mathsf{s},\bm{\lambda}^{[1]}\right)$};
\draw [decorate,thick,decoration={brace,amplitude=5pt,raise=4pt}] (-2,3.05) --(1.2,3.05) node[midway,yshift=14pt] {\scriptsize$\mathrm{BP}_{\bm{s}_\mathsf{a}}\left(\bm{H_\mathsf{s}},\bm{\lambda}^{[2]}\right)$};

\draw [black!50, very thick] (0.115,2.5) -- (0.9,2.4);
\draw [black!50, very thick] (0.115,2.1) -- (0.9,2.4);
\draw [black!50, very thick] (0.115,1.1) -- (0.9,2.4);
\draw [ very thick] (0.115,2.1) -- (0.9,0.5);
\draw [ very thick] (0.115,1.1) -- (0.9,0.5);

\node[draw=none, anchor=center] at (-1.8, +2.5) {\textcolor{KITblue}{\scriptsize$-\lambda^{[1]}_{1}$}};	
\node[draw=none, anchor=center] at (-1.8, +2.1) {\textcolor{KITblue}{\scriptsize$-\lambda^{[1]}_{2}$}};	
\node[draw=none, anchor=center] at (-1.8, +1.1) {\textcolor{KITblue}{\scriptsize$-\lambda^{[1]}_{\beta}$}};	
\node[draw=none, anchor=center] at (-1.8, +0.7) {\textcolor{KITblue}{\scriptsize$\lambda^{[1]}_{\beta+1}$}};	
\node[draw=none, anchor=center] at (-1.8, -0.2) {\textcolor{KITblue}{\scriptsize$\lambda^{[1]}_{n}$}};	

\node[draw=none, anchor=center] at (-0.7, +2.5) {\textcolor{KITgreen}{\scriptsize$\lambda^{[1]}_{1}$}};	
\node[draw=none, anchor=center] at (-0.7, +2.1) {\textcolor{KITgreen}{\scriptsize$\lambda^{[1]}_{2}$}};	
\node[draw=none, anchor=center] at (-0.7, +1.1) {\textcolor{KITgreen}{\scriptsize$\lambda^{[1]}_{\beta}$}};	
\node[draw=none, anchor=center] at (-0.7, +0.7) {\textcolor{KITgreen}{\scriptsize$\lambda^{[1]}_{\beta+1}$}};	
\node[draw=none, anchor=center] at (-0.7, -0.2) {\textcolor{KITgreen}{\scriptsize$\lambda^{[1]}_{n}$}};	

\foreach \y in {+2.5,+2.1, +1.1}
{
    \draw[] (-1.4, \y) --(-1, \y) node [pos=0.5, yshift=+3pt, sloped]  {\tiny$-1$};;
    \draw[ -latex] (-0.4, \y) --(0-0.1, \y);
    \draw [ fill=white, very thick] (0, \y) circle (0.1);
}
\foreach \y in {+0.7,-0.2}
{
   \draw[] (-1.4, \y) --(-1, \y);
    \draw[ -latex] (-0.4, \y) --(0-0.1, \y);
    \draw [ fill=white, very thick] (0, \y) circle (0.1);
}

\foreach \y in {+2.4, +1.3}
{
    \draw [black!50,fill=white, very thick] (1.0-0.1, \y-0.1) rectangle ++(\CNsize,\CNsize);
   \node[black!50,draw=none, anchor=center] at (1.0, \y) {\footnotesize$+$};
   }

\foreach \y in {0.5,-0.2}
{
	\draw [fill=white,  very thick] (1.0-0.1, \y-0.1) rectangle ++(\CNsize,\CNsize);
   \node[draw=none, anchor=center] at (1.0, \y) {\footnotesize$+$};
}

\end{tikzpicture}	
	\\
	{\footnotesize (a) } & {\footnotesize (b) } & {\footnotesize (c) }
\end{tabular}
    \caption{Illustration of the key transformation steps in the proof of Theorem~\ref{th:error_prob_affine}: (a) Affine decoding problem; (b) Transferring the signs to edges; (c) Interrelation of affine and subcode decoding problem. Remaining edges are not depicted for readability.}
    \label{fig:illustration_symmetry}
\end{figure}
\begin{IEEEproof}
Let $\bm H_{\mathsf s}\in\mathbb F_{2}^{\,m_{\mathsf s}\times n}$ denote a PCM of a fixed proper subcode
$\sC(n,k')\subset \mathcal C$, \jmcol{with decoder $\mathrm{BP}(\bm H_{\mathsf s})$}. 
After selecting $\sC$, there exist ${2^{k-k'}-1}$ disjoint affine subcodes from which we arbitrarily select one \jmcol{affine subcode $\aC=\bm x_{\mathsf a}+\mathcal C_{\mathsf s}$
with affine offset $\bm x_{\mathsf a}$, affine syndrome $ \bm s_{\mathsf a}= \bm H_{\mathsf s}\,\bm x_{\mathsf a}^{\mathsf T}$, and decoder $\mathrm{BP}_{\bm{s}_\mathsf{a}}(\bm H_{\mathsf s})$}.
Without loss of generality, we may reorder the rows and columns of $\bm H_{\mathsf s}$ such that  
\begin{equation}\label{eq:appendix:s_a_x_a}
\bm s_{\mathsf a}= \bigl(\underbrace{1,\ldots,1}_{\alpha},
\underbrace{0,\ldots,0}_{m_{\mathsf s}-\alpha}\bigr)^{\mathsf T},
\quad
\bm x_{\mathsf a}= \bigl(\underbrace{1,\ldots,1}_{\beta},
\underbrace{0,\ldots,0}_{n-\beta}\bigr),
\end{equation}
with $1\!\le\alpha\!\le m_{\mathsf s}$ and $1\!\le\beta\!\le n$; we consider BiMSCs, so this re-ordering does not affect the performance of flooding \ac{BP}.

Consider the transmission of vector $\ddot{\bm x}^{[1]}$ with ${\bm{x}^{[1]}\in\mathcal C}$.
Let the resulting channel output be ${\bm y^{[1]}\in\mathcal Y^{n}}$ with corresponding \ac{LLR} vector
${\bm \lambda^{[1]}}\in \mathbb{R}^n$.
We do not assume $\bm x^{[1]}\in\mathcal C_{\mathsf s}$ since in \ac{aSCED}, subcode decoders may be asked to decode a received word that originates from a codeword within an arbitrary coset $\aC$, %
not necessarily coinciding with the corresponding subcode. %

Now, consider the codeword $\bm x^{[2]} := \bm x^{[1]} + \bm x_{\mathsf a}$. %
Note that $\bm x^{[2]}\in \aC$ iff $\bm x^{[1]}\in \sC$. 
Next, we introduce the bijection ${f_\beta(\cdot):\mathbb{R}^n\to\mathbb{R}^n}$, inverting the sign of the first $\beta$ components
\[
\left( f_{\beta}(\bm{v}) \right)
_i
=
\begin{cases}
-v_i, & 1\leq i\le \beta,\\
v_i,  & \beta < i\le n .
\end{cases}
\]
By construction, we have ${\ddot{\bm x}^{[2]}=f_{\beta}(\ddot{\bm x}^{[1]})}$ and we define ${\bm \lambda^{[2]}:= f_{\beta}(\bm \lambda^{[1]})}$.

Next, we show that the average decoding error probabilities of $\mathrm{BP}(\bm H_{\mathsf s})$ and $\mathrm{BP}_{\bm{s}_\mathsf{a}}(\bm H_{\mathsf s})$ are identical. 
Define the set of bit-wise \ac{LLR} vectors successfully decoded by $\mathrm{BP}(\bm H_{\mathsf s})$
\[
\mathcal S^{[1]} :=\bigl\{\bm \lambda^{[1]}\in\mathbb R^{n}:\mathrm{BP}\bigl(\bm H_{\mathsf s},\bm \lambda^{[1]}\bigr)=\bm x^{[1]}\bigr\},
\]
where we assume the transmission of ${\bm{x}^{[1]}\in\mathcal C}$. 

When defining $\mathcal S^{[2]} := \{f_{\beta}(\bm \lambda):\bm{\lambda}\in \mathcal S^{[1]}\}$, \jmcol{and using the symmetry property of BiMSCs, it follows that}
\begin{equation}\label{eq:prob_a_s}
P(\mathcal S^{[2]}|\bm{X}=\bm{x}^{[2]})=P(\mathcal S^{[1]}|\bm{X}=\bm{x}^{[1]}),
\end{equation}
where $\bm{X}$ is random codeword uniformly distributed over $\mathcal{C}$.

Let $\bm \lambda ^{[2]} = f_{\beta}(\bm \lambda^{[1]})\in\mathcal S^{[2]}$.
We show that decoding $\bm x^{[1]}$ from $\bm \lambda^{[1]}$ using $\mathrm{BP}(\bm H_{\mathsf s})$ is
equivalent to decoding $\bm x^{[2]}$ from $\bm \lambda^{[2]}$ using $\mathrm{BP}_{\bm{s}_\mathsf{a}}(\bm H_{\mathsf s})$.

To this end, as illustrated in Fig.~\ref{fig:illustration_symmetry}, we transform the Tanner graph of the affine decoding problem based on the following observations: 
For the first $\alpha$ CNs, the affine syndrome component is non-zero by construction, thus the CN update in (\ref{eq:affine_update_eq}) introduces the factor $(-1)$ which we refer to as \emph{signed} CN and which is indicated in Fig.~\ref{fig:illustration_symmetry} by \emph{injecting} $\color{KITred}{-\infty}$ to the signed CNs.
For all remaining CNs, the affine syndrome component is zero, and they behave like standard CN updates. 

Due to reordering, $\bm s_{\mathsf a}$ and $\bm x_{\mathsf a}$ are non‑zero precisely in their first $\alpha$ and $\beta$ positions, respectively.
\jmcol{In order for $\bm s_{\mathsf a}= \bm H_{\mathsf s}\,\bm x_{\mathsf a}^{\mathsf T}$ to hold}, each VN $\mathsf{v}_{i}$ with $i\le\beta$ shares an edge with
\begin{itemize}
    \item  an odd number of the $\alpha$ \emph{signed} CNs, and  
    \item  an even number of the remaining $m_{\mathsf s}-\alpha$ standard CNs.
\end{itemize}

Since there is an odd number of edges connecting the first $\beta$ VNs with any of the first $\alpha $ CNs, the $(-1)$ factor of each signed CN can be transferred to every edge that
connects that CN with a VN $\mathsf{v_i}$, $i\le\beta$. In Fig.~\ref{fig:illustration_symmetry}~(b), these edges are drawn in \textcolor{KITred}{red} and are labeled with an additional weight $(-1)$.
After transferring the $(-1)$ factor onto the edges, the signed CNs revert to
standard CNs.

Similarly, each of the first $\beta$ VNs is linked to the remaining $m_{\mathsf s}-\alpha$ CNs by an even number of edges. 
Thus, an auxiliary factor $(-1)$ can be attached to every one of these
edges without altering the update rule of the connected CN.
In Fig.~\ref{fig:illustration_symmetry}~(b), these are shown in \textcolor{KITgreen}{green} and carry the extra $(-1)$ weight.

Summarizing, after those two \jmcol{sign transfers} shown in Fig.~\ref{fig:illustration_symmetry}~(b), \emph{all} edges that connect a VN $v_i$ with $i\le\beta$ bear a $(-1)$ factor and every CN has now become a standard CN. 
Next, in a final transformation step from the graph in Fig.~\ref{fig:illustration_symmetry}~(b) to the graph illustrated in Fig.~\ref{fig:illustration_symmetry}~(c), 
 the $(-1)$ weights can be shifted onto the first $\beta$ incoming LLRs.

As highlighted in Fig.~\ref{fig:illustration_symmetry}~(c), the resulting graph interrelates the affine‑decoding problem (blue background) with the subcode‑decoding problem (green background).
\jmcol{This representation shows that the affine decoder correctly recovers $\bm{x}^{[2]}$ from $\bm \lambda ^{[2]}\in \mathcal S^{[2]}$ if and only if the subcode decoder correctly recovers $\bm{x}^{[1]}$ from $\bm \lambda ^{[1]}\in \mathcal S^{[1]}$. 
Thus, by definition of $\mathcal S^{[1]}$, $\mathcal S^{[2]}$ is the set of bit-wise \ac{LLR} vectors successfully decoded by $\mathrm{BP}(\bm H_{\mathsf s})$}.
Hence, using \eqref{eq:prob_a_s} and averaging over all codewords $\bm{x}\in \mathcal{C}$ completes the proof.
\end{IEEEproof}

}

\newpage

\vfill

\end{document}